\documentclass{lmcs}
\pdfoutput=1

% LMCS Layouting Macros
\usepackage{lastpage}
\lmcsdoi{18}{1}{15}
\lmcsheading{}{\pageref{LastPage}}{}{}%
{Feb.~24,~2021}{Jan.~19,~2022}{}

\usepackage[utf8]{inputenc}
\usepackage{amsmath}
\newcommand\numberthis{\addtocounter{equation}{1}\tag{\theequation}}
\usepackage{amssymb} 
\usepackage{stmaryrd}
\usepackage{latexsym}
\usepackage{color}
\usepackage{verbatim}
\usepackage{relsize}
\usepackage{appendix}
\usepackage{mathtools}
\usepackage{url}
\usepackage{graphicx}
\usepackage{caption}
\usepackage{tikz}
\usetikzlibrary{arrows, automata, positioning, fit, patterns}
\usepackage{tabularx}
\usepackage{semantic}
\usepackage[]{units}
\usepackage{array}
\usepackage{empheq}
\usepackage{wasysym}
\usepackage{cite}
\usepackage{multicol}

\usepackage{listings}

\allowdisplaybreaks

\newenvironment{apx-proof}[1] 
        {\noindent\emph{Proof of #1.}}
        {\qed\\}

%===================================================
% MACROS for CONCUR2020
%====================================================

\newcommand{\leftmerge}{\mathbin{
                  \setlength{\unitlength}{1ex}
                  \begin{picture}(1,1.75)
                  \put(0,0){\line(1,0){1}}
                  \put(0,0){\line(0,1){1.75}}
                  \put(0.45,0){\line(0,1){1.75}}
                  \end{picture}
                 }}
\newcommand{\cmerge}{~|~}

\newcommand{\sdiff}{\ensuremath{\mathbin{\backslash}}}

\newcommand{\nil}{\mathbf{0}}
\newcommand{\parcomp}{\mathbin{\|}}
\newcommand{\Act}{\mathcal{A}}

\newcommand{\Var}{\mathcal{V}}

\newcommand{\tr}{\mathtt{T}}
\newcommand{\ctr}{\mathtt{CT}}
\newcommand{\fail}{\mathtt{F}}
\newcommand{\ready}{\mathtt{R}}
\newcommand{\ftr}{\mathtt{FT}}
\newcommand{\rtr}{\mathtt{RT}}
\newcommand{\pf}{\mathtt{PF}}

\newcommand{\s}{\mathtt{S}}
\newcommand{\cs}{\mathtt{CS}}

\newcommand{\rs}{\mathtt{RS}}
\newcommand{\B}{\mathtt{B}}

\newcommand{\trequiv}{\sim_{\tr}}

\newcommand{\fequiv}{\sim_{\fail}}
\newcommand{\requiv}{\sim_{\ready}}
\newcommand{\ftrequiv}{\sim_{\ftr}}
\newcommand{\rtrequiv}{\sim_{\rtr}}
\newcommand{\pfequiv}{\sim_{\pf}}

\newcommand{\size}{\mathrm{size}}
\newcommand{\depth}{\mathrm{depth}}
\newcommand{\norm}{\mathrm{norm}}
\newcommand{\var}{\mathrm{var}}
\newcommand{\init}{\mathtt{I}}

\newcommand{\trans}[1][]{\xrightarrow{\, {#1} \, }}
\newcommand{\ntrans}[1][]{\mathrel{{\trans[#1]}\makebox[0em][r]{$\not$\hspace{2ex}}}{\!}}

\newcommand{\rel}{\,{\mathcal R}\,}

\newcommand{\proc}{\mathcal{P}}

\newcommand{\bccsp}{\mathrm{BCCSP}_{\scriptstyle \|}}

\newcommand{\cl}{\mathrm{cl}}

\newcommand{\e}{\varepsilon}

\newcommand{\E}{\mathcal{E}}

\usepackage[normalem]{ulem}

\newlength\wantedwidth
\newcommand{\fakewidth}[2]{%
  \settowidth{\wantedwidth}{\ensuremath{#2}}%
  \makebox[\wantedwidth]{\ensuremath{#1}}%
}

\begin{document}

\title[On the axiomatisability of parallel composition]{On the Axiomatisability of Parallel Composition} 

\titlecomment{A preliminary version of this paper appeared as~\cite{ACILP20}.}

\author[L.~Aceto]{Luca Aceto\rsuper{a,b}}
\address{Reykjavik University, Iceland}
\email{\{luca,valentinac,annai,mathiasrp\}@ru.is}
\address{Gran Sasso Science Institute (GSSI), Italy}

\author[V.~Castiglioni]{Valentina Castiglioni\rsuper{a}}

\author[A.~Ing{\'o}lfsd{\'o}ttir]{Anna Ing{\'o}lfsd{\'o}ttir\rsuper{a}}

\author[B.~Luttik]{Bas Luttik\rsuper{c}}
\address{Eindhoven University of Technology, The Netherlands}
\email{s.p.luttik@tue.nl}

\author[M.~R.~Pedersen]{Mathias Ruggaard Pedersen\rsuper{a}}

\keywords{Axiomatisation, Parallel composition, Linear time-branching time spectrum}

\begin{abstract}
This paper studies the existence of finite equational axiomatisations of the \emph{interleaving parallel composition operator} modulo the behavioural equivalences in van~Glabbeek’s \emph{linear time-branching time spectrum}. 
In the setting of the process algebra BCCSP over a finite set of actions, we provide \emph{finite}, ground-complete axiomatisations for various simulation and (decorated) trace semantics. 
We also show that no congruence over BCCSP that includes bisimilarity and is included in possible futures equivalence has a finite, ground-complete axiomatisation; this \emph{negative result} applies to all the nested trace and nested simulation semantics. 
\end{abstract}

\maketitle

%=====================================================
% sec - intro
%=======================================================

\section{Introduction}

\emph{Process algebras}~\cite{BPS01,BBR10} are prototype specification languages allowing for the description and analysis of concurrent and distributed systems, or simply \emph{processes}. 
These languages offer a variety of operators to specify composite processes from components one has already built. 
Notably, in order to model the concurrent interaction between processes, the majority of process algebras include some form of \emph{parallel composition} operator, also known as merge.

Following Milner's seminal work on CCS~\cite{M89}, the semantics of a process algebra is often defined according to a two-step approach. 
In the first step, the \emph{operational semantics}~\cite{Plo81} of a process is modelled via a \emph{labelled transition system} (LTS)~\cite{Ke76}, in which computational steps are abstracted into state-to-state transitions having actions as labels.

Behavioural equivalences have then been introduced, in the second step, as simple and elegant tools for comparing the behaviour of processes.
These are equivalence relations defined on the states of LTSs allowing one to establish whether two processes have the same \emph{observable behaviour}. 
Different notions of observability correspond to different levels of abstraction from the information carried by the LTS, which can either be considered irrelevant in a given application context, or be unavailable to an external observer.

In~\cite{vG90}, van Glabbeek presented the \emph{linear time-branching time spectrum}, i.e., a taxonomy of behavioural equivalences based on their distinguishing power.
He carried out his study in the setting of the process algebra BCCSP, which consists of the basic operators from CCS~\cite{M89} and CSP~\cite{H85}, and he proposed \emph{ground-complete axiomatisations} for most of the congruences in the spectrum over this language.
(An axiomatisation is ground-complete if it can prove all the valid equations relating terms that do not contain variables.)
The presented ground-complete axiomatisations are \emph{finite} if so is the set of actions.
For the ready simulation, ready trace and failure trace equivalences, the axiomatisation in~\cite{vG90} made use of conditional equations;
Blom, Fokkink and Nain gave purely equational, finite axiomatisations in~\cite{BFN03}.
Then, the works in~\cite{AFGI04}, on nested semantics, and in~\cite{CF08}, on impossible futures semantics, completed the studies of the axiomatisability of behavioural congruences over BCCSP by providing \emph{negative} results: neither impossible futures nor any of the nested semantics have a finite, ground-complete axiomatisation over BCCSP. 

Obtaining a complete axiomatisation of a behavioural congruence is a classic, key problem in concurrency theory, as an equational axiomatisation characterises the semantics of a process algebra in a purely syntactic fashion.
Hence, this characterisation becomes independent of the details of the definition of the process semantics of interest, allowing one to compare semantics that may have been defined in very different styles via a collection of revealing axioms.

All the results mentioned so far were obtained over the algebra BCCSP, which does not include any operator for the parallel composition of processes.
Considering the crucial role of such an operator, it is natural to ask which of those results would still hold over a process algebra including it.

In the literature, we can find a wealth of studies on the axiomatisability of parallel composition modulo \emph{bisimulation semantics}~\cite{P81}.
Briefly, in the seminal work~\cite{HM85}, Hennessy and Milner proposed a ground-complete axiomatisation of the recursion-free fragment of CCS modulo bisimilarity.
That axiomatisation, however, included infinitely many axioms, which corresponded to instances of the \emph{expansion law} used to express equationally the semantics of the merge operator.
Then, Bergstra and Klop showed in~\cite{BK84b} that a finite ground-complete axiomatisation modulo bisimilarity can be obtained by enriching CCS with two auxiliary operators, i.e., the \emph{left merge} $\leftmerge$ and the \emph{communication merge} $\!\!\!\cmerge\!\!$.
Later, Moller proved that the use of auxiliary operators is indeed necessary to obtain a finite equational axiomatisation of bisimilarity in~\cite{Mol89,Mol90a,moller90}.

To the best of our knowledge, no systematic study of the axiomatisability of the parallel composition operator modulo the other semantics in the spectrum has been presented so far.

\paragraph{Our contribution}
We consider the process algebra $\bccsp$, i.e., BCCSP enriched with the interleaving parallel composition operator, and we study the existence of finite equational axiomatisations of the behavioural congruences in the linear time-branching time spectrum over it.
Our results delineate the \emph{boundary} between finite and non-finite axiomatisability of the congruences in the spectrum over the language $\bccsp$ (see Figure~\ref{fig:spectrum}).

We start by providing a \emph{finite}, \emph{ground-complete} axiomatisation for \emph{ready simulation} semantics~\cite{BIM95}.
The axiomatisation is obtained by extending the one for BCCSP with a few axioms expressing equationally the behaviour of interleaving modulo the considered congruence.
The added axioms allow us to eliminate all occurrences of the interleaving operator from $\bccsp$ processes, thus reducing ground-completeness over $\bccsp$ to ground-completeness over BCCSP~\cite{vG90,BFN03}.
Since the axioms for the elimination of parallel composition modulo ready simulation equivalence are, of course, sound with respect to equivalences that are coarser than ready simulation equivalence, the ``reduction to ground-completeness over BCCSP'' works for all behavioural equivalences in the spectrum below ready simulation equivalence. 
Nevertheless, for those equivalences, we shall offer more elegant axioms to equationally eliminate parallel composition from closed terms.
We shall then observe a sort of parallelism between the axiomatisations for the notions of simulation and the corresponding decorated trace semantics: the axioms used to equationally express the interplay between the interleaving operator and the other operators of BCCSP in a decorated trace semantics can be seen as the \emph{linear counterpart} of those used in the corresponding notion of simulation semantics.
For instance, while the axioms for ready simulation impose constraints on the form of both arguments of the interleaving operator to facilitate equational reductions, those for ready trace equivalence impose similar constraints but only on one argument.

Finally, we complete our journey in the spectrum by showing that \emph{nested simulation} and \emph{nested trace} semantics do not have a finite axiomatisation over $\bccsp$.
To this end, firstly we adapt Moller's arguments to the effect that bisimilarity is not finitely based over CCS to obtain the \emph{negative result} for \emph{possible futures equivalence}, also known as $2$-\emph{nested trace equivalence}.
Informally, the negative result is obtained by providing an infinite family of equations that are all sound modulo possible futures equivalence but that cannot all be derived from any finite, sound axiom system.
Then, we exploit the soundness modulo bisimilarity of the equations in the family to extend the negative result to all the congruences that are finer than possible futures and coarser than bisimilarity, thus including all nested trace and nested simulation semantics.

All the results mentioned so far are obtained for a parallel composition operator that implements interleaving without synchronisation between parallel components.
As a natural extension, we then discuss the effect of extending our results to parallel composition with CCS-style synchronisation.

\paragraph{Organisation of contents}
After reviewing some basic notions on behavioural equivalences and equational logic in Section~\ref{sec:background}, we start our journey in the spectrum by providing a finite, ground-complete axiomatisation for ready simulation equivalence over $\bccsp$ in Section~\ref{sec:ready_simulation}.
In Section~\ref{sec:other_simulations} we discuss how it is possible to refine the axioms for ready simulation to obtain finite, ground-complete axiomatisations for completed simulation and simulation equivalences.
Then, in Section~\ref{sec:linear} similar refinements are provided for the (decorated) trace equivalences, thus completing the presentation of our positive results.
We end our journey in Section~\ref{sec:negative} with the presentation of the negative results, namely that the nested simulation and nested trace equivalences do not have a finite axiomatisation over $\bccsp$.
In Section~\ref{sec:communication} we modify the semantics of parallel composition to allow processes running in parallel to synchronise and we discuss the effect of this extension on the results obtained in Sections~\ref{sec:ready_simulation}--\ref{sec:negative}.
Finally, in Section~\ref{sec:conclusion} we draw some conclusions and discuss avenues for future work.

\paragraph{What's new}
A preliminary version of this paper appeared as~\cite{ACILP20}.
We have enriched our previous contribution as follows:
\begin{enumerate}
\item We provide the full proofs of our results.
\item We extend the results presented in~\cite{ACILP20} from purely interleaving parallel composition to parallel composition with synchronisation \`a la CCS. (Section~\ref{sec:communication}).
\item We present model constructions showing that the specific axioms we provide to axiomatise parallel compositions in ready simulation equivalence cannot be derived from the axiomatisations of completed simulation equivalence and ready trace equivalence.
Similarly, we show that the axiomatisations of completed simulation and failure equivalence cannot be derived from that of completed trace equivalence.
\end{enumerate}

%=====================================================
% sec - background
%========================================================

\section{Background}
\label{sec:background}

\subsection{The language}

The language $\bccsp$ extends BCCSP with parallel composition.
Formally, $\bccsp$ consists of basic operators from CCS~\cite{M89} and CSP~\cite{H85}, with the purely \emph{interleaving} parallel composition operator $\parcomp$, and is given by the following grammar: 
\[
  t \Coloneqq \nil \;\mid\; x \;\mid\; a.t \;\mid\; t + t \;\mid\; t \parcomp t
\]
where $a$ ranges over a set of actions $\Act$ and $x$ ranges over a countably infinite set of variables $\Var$.
In what follows, we assume that the set of actions $\Act$ is \emph{finite} and non-empty.

We shall use the meta-variables $t,u,\dots$ to range over $\bccsp$ terms, and write $\var(t)$ for the collection of variables occurring in the term $t$.  
We also adopt the standard convention that prefixing binds strongest and $+$ binds weakest.
Moreover, trailing $\nil$'s will often be omitted from terms.
We use a \emph{summation} $\sum_{i\in\{1,\ldots,k\}}t_i$ to denote the term $t= t_1+\cdots+t_k$, where the empty sum represents {\bf 0}.
We can also assume that the terms $t_i$, for $i \in \{1,\dots,k\}$, do not have $+$ as head operator, and refer to them as the \emph{summands} of $t$.
The \emph{size} of a term $t$, denoted by $\size(t)$, is the number of operator symbols in it. 

A $\bccsp$ term is \emph{closed} if it does not contain any variables.
We shall, sometimes, refer to closed terms simply as \emph{processes}.
We let $\proc$ denote the set of $\bccsp$ processes and let $p,q,\dots$ range over it.
We use the \emph{Structural Operational Semantics} (SOS) framework~\cite{Plo81} to equip processes with an operational semantics.  
A \emph{literal} is an expression of the form $t \trans[a] t'$ for some process terms $t,t'$ and action $a \in \Act$.
It is \emph{closed} if both $t,t'$ are closed terms.
The inference rules for \emph{prefixing} $a.\_$, \emph{nondeterministic choice} $+$ and \emph{interleaving parallel composition} $\parcomp$ are reported in Table~\ref{tab:semantics}.
\begin{table}[t]
\centering
\begin{gather*}
\inference{}{a.x \trans[a] x} \;\,
\inference{x \trans[a] x'}{x + y \trans[a] x'} \;\,
\inference{y \trans[a] y'}{x + y \trans[a] y'} \;\,
\inference{x \trans[a] x'}{x \parcomp y \trans[a] x' \parcomp y} \;\,
\inference{y \trans[a] y'}{x \parcomp y \trans[a] x \parcomp y'}
\end{gather*}
\caption{Operational semantics of $\bccsp$.}
\label{tab:semantics}
\end{table}
A \emph{substitution} $\sigma$ is a mapping from variables to terms.
It extends to terms, literals and rules in the usual way.
A substitution is \emph{closed} if it maps every variable to a process. 

The inference rules in Table~\ref{tab:semantics} induce the $\Act$-\emph{labelled transition system}~\cite{Ke76} $(\proc,\Act,\trans[])$ whose transition relation ${\trans[]} \subseteq \proc \times \Act \times \proc$ contains exactly the closed literals that can be derived using the rules in Table~\ref{tab:semantics}.
As usual, we write $p \trans[a] p'$ in lieu of $(p,a,p') \in {\trans[]}$.
For each $p \in \proc$ and $a \in \Act$, we write $p\trans[a]$ if $p\trans[a] p'$ holds for some $p'$, and $p \ntrans[a]$ otherwise. 
The \emph{initials} of $p$ are the actions that label the outgoing transitions of $p$, that is, $\init(p) = \{a \mid p\trans[a]\}$. 
For a sequence of actions $\alpha=a_1\cdots a_k$ ($k\ge 0$), and processes $p,p'$, we write $p \trans[\alpha] p'$ if and only if there exists a sequence of transitions $p=p_0 \trans[a_1] p_1 \trans[a_2] \cdots \trans[a_k] p_k=p'$. 
If $p \trans[\alpha] p'$ holds for some process $p'$, then $\alpha$ is a {\em trace} of $p$, and $p'$ is a \emph{derivative} of $p$. 
Moreover, we say that $\alpha$ is a \emph{completed trace} of $p$ if $\init(p') = \emptyset$.
We let $\tr(p)$ denote the \emph{set of traces} of $p$, and we use $\ctr(p) \subseteq \tr(p)$ for the set of completed traces of $p$. 
We write $\e$ for the \emph{empty trace}; $|\alpha|$ stands for the \emph{length} of trace $\alpha$.
It is well known, and easy to show, that $\tr(p)$ is finite and $\ctr(p)$ is non-empty for each $\bccsp$ process $p$. 
It follows that we can define the \emph{depth} of a process $p$, denoted by $\depth(p)$, as the length of a \emph{longest} completed trace of $p$.
Formally, $\depth(p) = \max \{ |\alpha| \mid \alpha \in \ctr(p)\}$.
Similarly, the \emph{norm} of a process $p$, denoted by $\norm(p)$, is the length of a \emph{shortest} completed trace of $p$, i.e. $\norm(p) = \min \{ |\alpha| \mid \alpha \in \ctr(p)\}$.

%====================================================

\subsection{Behavioural equivalences}

\emph{Behavioural equivalences} have been introduced to establish whether the behaviours of two processes are \emph{indistinguishable for their observers}.
Roughly, they allow us to check whether the \emph{observable} semantics of two processes is \emph{the same}.
In the literature we can find several notions of behavioural equivalence based on the observations that an external observer can make on the process.
In his seminal article~\cite{vG90}, van Glabbeek gave a taxonomy of the behavioural equivalences discussed in the literature on concurrency theory, which is now called the \emph{linear time-branching time spectrum} (see Figure~\ref{fig:spectrum}).
 
One of the main concerns in the development of a meta-theory of process languages is to guarantee their \emph{compositionality}, i.e., that the \emph{replacement} of a component of a system with an $\rel$-equivalent one, for a chosen behavioural equivalence $\rel$, does not affect the behaviour of that system.
In algebraic terms, this is known as the \emph{congruence property} of $\rel$ with respect to all language operators, which consists in verifying whether
\[
f(t_1,\dots,t_n) \rel f(t_1',\dots,t_n') \text{ for every $n$-ary operator } f \text{ whenever } t_i \rel t_i' \text{ for all } i = 1,\dots,n.
\]

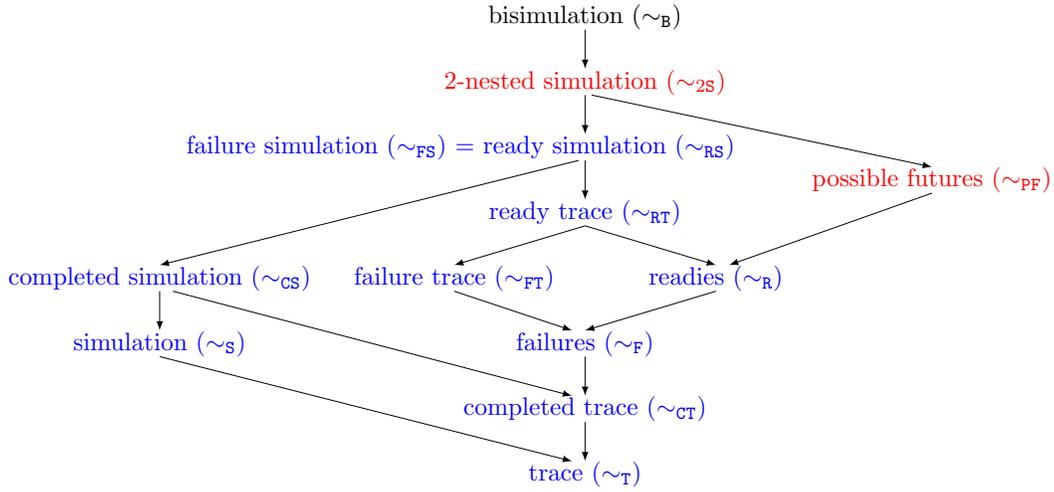
\begin{figure}
\centering
\scalebox{0.87}{
\begin{tikzpicture}
\node at (0,7){bisimulation ($\sim_\B$)}; 
\draw[-latex](0,6.8)--(0,6.2);
\node at (0,6){\textcolor{red}{$2$-nested simulation ($\sim_{2\s}$)}};
\draw[-latex](0,5.8)--(0,5.2);
\node at (-1.9,5){\textcolor{blue}{failure simulation ($\sim_{\fail\s}$) $=$ ready simulation ($\sim_{\ready\s}$)}};
\draw[-latex](0,4.8)--(0,4.2);
\node at (0,4){\textcolor{blue}{ready trace ($\sim_\rtr$)}};
\draw[-latex](0,3.8)--(-2,3.2);
\draw[-latex](0,3.8)--(2,3.2);
\node at (-2,3){\textcolor{blue}{failure trace ($\sim_\ftr$)}};
\node at (2,3){\textcolor{blue}{readies ($\sim_\ready$)}};
\draw[-latex](-2,2.8)--(-0.2,2.2);
\draw[-latex](2,2.8)--(0,2.2);
\node at (0,2){\textcolor{blue}{failures ($\sim_\fail$)}};
\draw[-latex](0,1.8)--(0,1.2);
\node at (0,1){\textcolor{blue}{completed trace ($\sim_\ctr$)}};
\draw[-latex](0,0.8)--(0,0.2);
\node at (0,0){\textcolor{blue}{trace ($\sim_\tr$)}};
\draw[-latex](-0.1,4.8)--(-6.5,3.2);
\node at (-6.5,3){\textcolor{blue}{completed simulation ($\sim_{\mathtt{C}\s}$)}};
\draw[-latex](-6.5,2.8)--(-6.5,2.2);
\draw[-latex](-6.3,2.8)--(-0.2,1.2);
\node at (-6.5,2){\textcolor{blue}{simulation ($\sim_\s$)}};
\draw[-latex](-6.5,1.8)--(-0.2,0.2);
\draw[-latex](0.1,5.8)--(5.3,4.7);
\node at (5.3,4.5){\textcolor{red}{possible futures ($\sim_\pf$)}};
\draw[-latex](5.3,4.3)--(2.2,3.2);
\end{tikzpicture}
}
\caption{\label{fig:spectrum} The linear time-branching time spectrum~\cite{vG90}.
For the equivalence relations in \textcolor{blue}{blue} we provide a finite, ground-complete axiomatization.
For the ones in \textcolor{red}{red}, we provide a negative result.
The case of bisimulation is known from the literature~\cite{Mol89,Mol90a,moller90}.}
\end{figure}

Since $\bccsp$ operators are defined by inference rules in the de Simone format~\cite{dS85}, by~\cite[Theorem 4]{vG93b} we have that all the equivalences in the spectrum in Figure~\ref{fig:spectrum} are congruences with respect to them.
Our aim in this paper is to investigate the existence of a finite equational axiomatisation  of $\bccsp$ modulo all those congruences.

%======================================================

\subsection{Equational Logic}
\label{sec:logic}

An \emph{axiom system} $\E$ is a collection of \emph{equations} $t \approx u$ over $\bccsp$.
An equation $t \approx u$ is \emph{derivable} from an axiom system $\E$, notation $\E \vdash t \approx u$, if there is an \emph{equational proof} for it from $\E$, namely if $t \approx u$ can be inferred from the axioms in $\E$ using the \emph{rules} of \emph{equational logic}, which express reflexivity, symmetry, transitivity, substitution and closure under $\bccsp$ contexts and are reported in Table~\ref{tab:equational_logic}.
In equational proofs, we shall write $p \stackrel{(\mathrm{A})}{\approx} q$ to highlight that the axiom denoted by A is used in that step of the proof.

\begin{table}[t]
\centering
\begin{gather*}
\scalebox{0.9}{($e_1$)}\; t \approx t 
\qquad
\scalebox{0.9}{($e_2$)}\; \inference{t \approx u}{u \approx t} 
\qquad
\scalebox{0.9}{($e_3$)}\; \inference{{t \approx u \quad u \approx v}}{{t \approx v}} 
\qquad
\scalebox{0.9}{($e_4$)}\; \inference{{t \approx u}}{{\sigma(t) \approx \sigma(u)}} 
\\
\scalebox{0.9}{($e_5$)}\; \inference{t \approx u}{a. t \approx a. u}
\qquad
\scalebox{0.9}{($e_6$)}\; \inference{t \approx  u \quad  t' \approx u'}{t+t' \approx u+u'}
\qquad
\scalebox{0.9}{($e_8$)}\; \inference{t \approx  u \quad t' \approx u'}{t \parcomp t' \approx u \parcomp u'}.
\end{gather*}
\caption{\label{tab:equational_logic} The rules of equational logic} 
\end{table}

We are interested in equations that are valid modulo some congruence relation $\rel$ over closed terms.
The equation $t \approx u$ is said to be \emph{sound} modulo $\rel$ if $\sigma(t) \,\rel\, \sigma(u)$ for all closed substitutions $\sigma$.
For simplicity, if $t \approx u$ is sound modulo $\rel$, then we write $t \,\rel\, u$.
An axiom system is \emph{sound} modulo $\rel$ if, and only if, all of its equations are sound modulo $\rel$. 
Conversely, we say that $\E$ is \emph{ground-complete} modulo $\rel$ if $p \,\rel\, q$ implies $\E \vdash p \approx q$ for all closed terms $p,q$.
We say that $\rel$ has a \emph{finite} ground-complete axiomatisation, if there is a \emph{finite} axiom system $\E$ that is sound and ground-complete modulo $\rel$.

\begin{table}[t]
\centering
\begin{tabular}{ll}
\hline
\, \scalebox{0.85}{(A0)} \; $x + \nil \approx x$ & 
\qquad \scalebox{0.85}{(P0)} \; $x \parcomp \nil \approx x$ \\
\, \scalebox{0.85}{(A1)} \; $x + y \approx y + x$ & 
\qquad \scalebox{0.85}{(P1)} \; $x \parcomp y \approx y \parcomp x$ \\
\, \scalebox{0.85}{(A2)} \; $(x + y) + z \approx x + (y+z)$ \\
\, \scalebox{0.85}{(A3)} \; $x + x \approx x$ \\
\hline 
\\[-.2cm]
\end{tabular}
\caption{Basic axioms for $\bccsp$.
We define $\E_0 = \{\mathrm{A0,A1,A2,A3}\}$ and $\E_1 = \E_0 \cup \{\mathrm{P0,P1}\}$.}
\label{tab:basic-axioms}
\end{table}

In Table~\ref{tab:basic-axioms} we present some basic axioms for $\bccsp$ that are sound with respect to all the behavioural equivalences in Figure~\ref{fig:spectrum}.
Henceforth, we will let $\E_0 = \{\mathrm{A0,A1,A2,A3}\}$, and we will denote by $\E_1$ the axiom system consisting of all the axioms in Table~\ref{tab:basic-axioms}, namely $\E_1= \E_0 \cup \{\mathrm{P0,P1}\}$.

\begin{table}[t]
\centering
\begin{tabular}{l}
\hline
\, \scalebox{0.85}{(EL1)}\; $ax \parcomp by \approx a(x \parcomp by) + b(ax \parcomp y)$ \\
\\[-.3cm]
\, \scalebox{0.85}{(EL2)}\; $\sum_{i \in I} a_ix_i \parcomp \sum_{j \in J} b_j y_j \approx \sum_{i \in I} a_i (x_i \parcomp \sum_{j \in J} b_j y_j) + \sum_{j \in J} b_j (\sum_{i \in I} a_i x_i \parcomp y_j)$ \\
\qquad with $a_i \neq a_k$ whenever $i \neq k$ and $b_j \neq b_h$ whenever $j \neq h$, $\forall\, i,k \in I, \forall\, j,h \in J$\\
\\[-.3cm]
\, \scalebox{0.85}{(EL3)}\; $\sum_{i \in I} a_ix_i \parcomp \sum_{j \in J} b_j y_j \approx \sum_{i \in I} a_i (x_i \parcomp \sum_{j \in J} b_j y_j) + \sum_{j \in J} b_j (\sum_{i \in I} a_i x_i \parcomp y_j)$ \\
\hline
\\[-.2cm] 
\end{tabular}
\caption{\label{tab:exp_law} The different instantiations of the expansion law.}
\end{table}

To be able to eliminate the interleaving parallel composition operator from closed terms we will make use of two refinements EL1 and EL2 of EL3, which is the classic expansion law~\cite{HM85} (see Table~\ref{tab:exp_law}).
We remark that the actions occurring in the three axioms in Table~\ref{tab:exp_law} are not action variables. 
Hence, when we write that an axiom system $\E$ includes one of these axioms, we mean that it includes all possible instances of that axiom with respect to the actions in $\Act$.
In particular, EL3 is a schema that generates infinitely many axioms, regardless of the cardinality of the set of actions.
This is due to the fact that we can have arbitrary summations in the two arguments of the parallel composition in the left hand side of EL3.
On the other hand, when the set of actions is assumed to be finite, we are guaranteed that there are only finitely many instances of EL1 and EL2.
Indeed, EL1 is a particular instance of EL2, i.e., the one in which both summations are over singletons.
The reason for considering both is that, as we will see, EL1 is enough to obtain the elimination result when combined with axioms allowing us to reduce any process of the form $(\sum_{i \in I} a_i p_i) \parcomp (\sum_{j \in J} b_j q_j)$ to $\sum_{i \in I, j \in J} (a_i p_i \parcomp b_j q_j)$.
Axiom EL2 is needed when this reduction is not sound modulo the considered semantics.

%=========================================
% sec - branching semantics
%==========================================

\section{Ready simulation}
\label{sec:ready_simulation}

In this section, we begin our journey in the spectrum by studying the equational theory of \emph{ready simulation} equivalence, whose formal definition is recalled below together with those of \emph{completed simulation} and \emph{simulation} equivalence.

\begin{defi}
[Simulation equivalences]
\phantom{no line here}
\begin{itemize}
\item A \emph{simulation} is a binary relation ${\rel} \subseteq \proc \times \proc$ such that, whenever $p \rel q$ and $p \trans[a] p'$, then there is some $q'$ such that $q \trans[a] q'$ and $p' \rel q'$.
We write $p \sqsubseteq_\s q$ if there is a simulation $\!\rel\!$ such that $p \rel q$.
We say that $p$ is \emph{simulation equivalent} to $q$, notation $p \sim_\s q$, if $p \sqsubseteq_\s q$ and $q \sqsubseteq_\s p$. 

\item A \emph{completed simulation} is a simulation $\!\rel\!$ such that, whenever $p \rel q$ and $\init(p) = \emptyset$, then $\init(q)=\emptyset$.
We write $p \sqsubseteq_\cs q$ if there is a completed simulation $\!\rel\!$ such that $p \rel q$.
We say that $p$ is \emph{completed simulation equivalent} to $q$, notation $p \sim_\cs q$, if $p \sqsubseteq_\cs q$ and $q \sqsubseteq_\cs p$. 

\item A \emph{ready simulation} is a simulation $\!\rel\!$ such that, whenever $p \rel q$ then $\init(p) = \init(q)$.
We write $p \sqsubseteq_\rs q$ if there is a ready simulation $\!\rel\!$ such that $p \rel q$.
We say that $p$ is \emph{ready simulation equivalent} to $q$, notation $p \sim_\rs q$, if $p \sqsubseteq_\rs q$ and $q \sqsubseteq_\rs p$. 
\end{itemize}
\end{defi}

In~\cite{vG93a} the notion of \emph{failure simulation} was also introduced as a simulation $\!\rel\!$ such that, whenever $p \rel q$ and $\init(p) \cap X = \emptyset$, for some $X \subseteq \Act$, then $\init(q) \cap X =\emptyset$.
Then, in~\cite{vG93b} it was proved that the notion of failure simulation coincides with that of ready simulation.

\begin{table}[t]
\centering
\scalebox{0.95}{
\begin{tabular}{l}
\hline \\[-.3cm]
\scalebox{0.85}{(RS)} \, $a(bx+by+z) \approx a(bx+by+z) + a(bx +z)$ \\[.3cm]
\scalebox{0.85}{(RSP1)} $(ax+ay+u) \parcomp (bz+bw+v) \approx (ax+u) \parcomp (bz+bw+v) + (ay+u) \parcomp (bz+bw+v) +$ \\[.05cm]
\qquad\quad\phantom{$(ax+ay+u) \parcomp (bz+bw+v) \approx$}
$+ (ax+ay+u) \parcomp (bz+v) + (ax+ay+u) \parcomp (bw+v)$ \\[.1cm]
\scalebox{0.85}{(RSP2)} $\left(\sum_{i \in I} a_i x_i\right) \parcomp (by + bz + w) \approx \left(\sum_{i \in I} a_i x_i\right) \parcomp (by + w) + \left(\sum_{i \in I} a_i x_i\right) \parcomp (bz + w) + $ \\[.1cm]
\qquad\quad\phantom{$\left(\sum_{i \in I} a_i x_i\right) \parcomp (by + bz + w) \approx$}
$\sum_{i \in I} a_i \left(x_i \parcomp (by + bz + w) \right)$ \\[.05cm]
\qquad\quad\phantom{$\left(\sum_{i \in I} a_i x_i\right) \parcomp (by + bz + w) \approx$}
where $a_j \neq a_k$ whenever $j \neq k$ for $j,k \in I$ \\[.3cm]
\; $\E_\rs = \E_1 \cup \{\textrm{RS, RSP1, RSP2, EL2}\}$
\\[.05cm]
\hline 
\\[-.2cm]
\end{tabular}
}
\caption{\label{tab:axioms_branching} Additional axioms for ready simulation equivalence.}
\end{table}

Our aim is to provide a \emph{finite}, \emph{ground-complete} axiomatisation of $\bccsp$ modulo ready simulation equivalence.
To this end, we recall that in~\cite{vG90} it was proved that the axiom system consisting of $\E_0$ together with axiom RS in Table~\ref{tab:axioms_branching} is a ground-complete axiomatisation of BCCSP, i.e., the language that is obtained from $\bccsp$ if $\parcomp$ is omitted, modulo $\sim_\rs$.
Hence, to obtain a finite, ground-complete axiomatisation of $\bccsp$ modulo $\sim_\rs$ it suffices to enrich the axiom system $\E_1 \cup \{\mathrm{RS}\}$ with finitely many axioms allowing one to eliminate all occurrences of $\parcomp$ from closed $\bccsp$ terms.
In fact, by letting $\E_\rs$ denote the axiom system $\E_1 \cup \{\mathrm{RS}\}$ suitably enriched with such elimination axioms, the elimination result consists in proving that for every closed $\bccsp$ term $p$ there is a closed BCCSP term $q$ (i.e., without any occurrence of $\parcomp$ in it) such that $\E_\rs \vdash p \approx q$.
Then, the completeness of the proposed axiom system over $\bccsp$ is a direct consequence of that over BCCSP proved in~\cite{vG90}.

Clearly, EL3 would allow us to obtain the desired elimination, but, as previously mentioned, it is a schema that finitely presents an infinite collection of equations, and thus an axiom system including it is infinite.
Instead, we include EL2, which is a variant of EL3 that generates only finitely many axioms (see Table~\ref{tab:exp_law}), and the schemata RSP1 and RSP2 that characterise the distributivity of $\parcomp$ over $+$ modulo $\sim_\rs$ (see Table~\ref{tab:axioms_branching}).

First of all, we notice that the axiom system $\E_\rs = \E_1 \cup \{\mathrm{RS,RSP1,RSP2,EL2}\}$ is sound modulo ready simulation equivalence.

\begin{thm}[$\E_\rs$ soundness]
\label{thm:RS_soundness}
The axiom system $\E_\rs$ is sound for $\bccsp$ modulo ready simulation equivalence, namely whenever $\E_\rs \vdash p \approx q$ then $p \sim_\rs q$.
\end{thm}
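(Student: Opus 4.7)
The plan is to reduce the soundness of $\E_\rs$ to the soundness of each of its axioms, and then verify these one by one. The reduction uses two facts: (i) $\sim_\rs$ is a congruence over $\bccsp$, as already noted in the paper via the de Simone format of the rules in Table~\ref{tab:semantics} together with \cite[Theorem~4]{vG93b}; and (ii) the rules of equational logic in Table~\ref{tab:equational_logic} preserve semantic equivalence whenever the underlying relation is a congruence. An induction on the length of the derivation $\E_\rs \vdash p \approx q$ then yields $p \sim_\rs q$, provided $\sigma(t) \sim_\rs \sigma(u)$ holds for every axiom $t \approx u \in \E_\rs$ and every closed substitution $\sigma$.

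For the axioms in $\E_1$, soundness is immediate: A0--A3 express commutativity, associativity, idempotence and neutrality of $\nil$ for $+$, while P0 and P1 express neutrality of $\nil$ and commutativity for $\parcomp$; all six are sound even modulo bisimilarity, hence modulo the coarser $\sim_\rs$. Axiom RS is used in van Glabbeek's ground-complete axiomatisation of BCCSP modulo $\sim_\rs$, so its soundness has already been established in \cite{vG90}. Finally, EL2 is a family of instances of the classical expansion law EL3, which is sound modulo bisimilarity \cite{HM85}; consequently EL2 is sound modulo $\sim_\rs$.

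The substantive part of the proof concerns the new axioms RSP1 and RSP2. For each, the strategy is to exhibit two ready simulations, one in each direction, closed under the transitions of all pairs they contain. For the direction from the left-hand side to the right-hand side of RSP1, every initial transition of the LHS is syntactically reproduced by exactly one of the four summands of the RHS, landing in the very same derivative state, so one can take the identity relation together with the single pair formed by the endpoints. The opposite direction is more delicate: for instance, the transition $(ax+u) \parcomp (bz+bw+v) \trans[b] (ax+u) \parcomp z$ of the first summand of the RHS must be matched by the LHS transition into $(ax+ay+u) \parcomp z$, and these two processes are not bisimilar. They are, however, ready-simulation equivalent, because they share the initials $\{a\} \cup \init(\sigma(u)) \cup \init(\sigma(z))$; one therefore closes the relation by adjoining all pairs of the form $((ax+ay+u) \parcomp p', (ax+u) \parcomp p')$, together with their analogues for the other three summands. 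RSP2 is treated analogously, and here the distinctness condition on the $a_i$'s ensures that each $a_i$-transition of the LHS has a unique matching summand $a_i(x_i \parcomp (by+bz+w))$ on the right.

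The main obstacle is purely combinatorial bookkeeping rather than conceptual difficulty: the two relations must be closed under all reachable pairs, and the initials-matching condition must be checked at every such pair. The key observation that keeps this tractable is that dropping one of the two alternative $a$- or $b$-prefixed summands never removes an action from the initials of the enclosing parallel composition, so the ready-simulation condition propagates through matching steps without forcing additional repairs.
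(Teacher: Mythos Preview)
The paper does not give a proof of this theorem at all: it simply asserts soundness (``First of all, we notice that \ldots'') and then turns to ground-completeness. Your proposal therefore supplies strictly more than the paper does, and its overall shape---reduce to per-axiom soundness using the congruence property of $\sim_\rs$ established via the de~Simone format, then check the axioms---is the standard and correct way to proceed. Your treatment of $\E_1$, RS, and EL2 is fine.

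For RSP1 and RSP2 your direct construction of ready simulations is correct in outline, but you understate the bookkeeping slightly. The family you describe for the first summand, pairs of the form $\big((ax+u)\parcomp p',\,(ax+ay+u)\parcomp p'\big)$ with $p'$ ranging over derivatives of the second argument, is indeed closed. However, the ``analogues for the other three summands'' are not all of this same shape: for the third and fourth summands the extra pairs look like $\big(p'\parcomp(bz+v),\,p'\parcomp(bz+bw+v)\big)$ with $p'$ a derivative of the \emph{first} argument. These families are also closed and the initials match throughout, so the argument goes through, but the phrase ``analogues'' hides a real structural case split.

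A cleaner route avoids the explicit relation-building. Since the de~Simone format guarantees that $\sqsubseteq_\rs$ is a \emph{precongruence} for $+$ and $\parcomp$, one can argue as follows for RSP1. The direction $\mathrm{LHS}\sqsubseteq_\rs\mathrm{RHS}$ is exactly your observation that every LHS transition is literally a transition of some RHS summand. For the converse, each of the four RHS summands is $\sqsubseteq_\rs$-below the LHS: for instance $ax+u\sqsubseteq_\rs ax+ay+u$ (identical initials, and every move of the smaller term is a move of the larger to the same target), whence $(ax+u)\parcomp(bz+bw+v)\sqsubseteq_\rs\mathrm{LHS}$ by monotonicity of $\parcomp$. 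Since all four summands share the initials of the LHS, their sum is also $\sqsubseteq_\rs$-below the LHS. This packages your ``dropping a duplicated-action summand never changes the initials'' observation once and reuses it, rather than tracking reachable pairs by hand; RSP2 is handled the same way.
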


Let us focus now on ground-completeness.
Intuitively, RSP1 and RSP2 have been constructed in such a way that the set of initial actions of the two arguments of $\parcomp$ is preserved, while the initial term is reduced to a sum of terms of smaller size.
Briefly, according to the main features of ready simulation semantics, axiom RSP1 allows us to distribute $\parcomp$ over $+$ when both arguments of $\parcomp$ have nondeterministic choices among summands having the same initial action.
Conversely, axiom RSP2 deals with the case in which only one argument of $\parcomp$ has summands with the same initial action.
In order to preserve the branching structure of the process, which is fundamental to guarantee the soundness of the axioms modulo $\sim_\rs$, both RSP1 and RSP2 take into account the behaviour of both arguments of $\parcomp$: the terms in the right-hand side of both axioms are such that whenever the initial nondeterministic choice of one argument of $\parcomp$ is resolved, the entire behaviour of the other argument is preserved.
In fact, we stress that a simplified version of, e.g., RSP1 in which only one argument of $\parcomp$ distributes over $+$ would not be sound modulo $\sim_\rs$.
Consider, for instance, the process $p = (a + aa + b) \parcomp c$.
It is immediate to verify that $p \not \sim_\rs (a + b) \parcomp c + (aa + b) \parcomp c$ (since $p \not\sqsubseteq_\rs (a + b) \parcomp c + (aa + b) \parcomp c$).

The idea is that by (repeatedly) applying axioms RSP1 and RSP2, from left to right, we are able to reduce a process of the form $(\sum_{i \in I} p_i) \parcomp (\sum_{j \in J} p_j)$ to one of the form $\sum_{k \in K} p_k$ such that whenever $p_k$ has $\parcomp$ as head operator then $p_k = \sum_{h \in H} a_hp_h \parcomp \sum_{l \in L} b_l p_l$, with $a_h \neq a_{h'}$ for $h \neq h'$, and $b_l \neq b_{l'}$ for $l \neq l'$, for some closed $\bccsp$ terms $p_h,p_l$.
The elimination of $\parcomp$ from these terms can then proceed by means of the finitary refinement EL2 of the expansion law presented in Table~\ref{tab:exp_law}.
In particular, we notice that RSP2 is needed because RSP1 alone does not allow us to reduce all processes of the form $(\sum_{i \in I} p_i) \parcomp (\sum_{j \in J} p_j)$ into a sum of processes to which EL2 can be applied.
This is mainly due to the fact that, in order to be sound modulo $\sim_\rs$, RSP1 imposes constraints on the form of both arguments of a process $(\sum_{i \in I} p_i) \parcomp (\sum_{j \in J} p_j)$.

We can then proceed to prove the elimination result, starting from a useful remark on the form of closed BCCSP terms.

\begin{rem}
[General form of BCCSP processes]
\label{rmk:form_bccsp}
Given any closed BCCSP term $p$, we can assume, without loss of generality, that $p = \sum_{i \in I} a_i p_i$ for some finite index set $I$, actions $a_i \in \Act$, and closed BCCSP terms $p_i$, for $i \in I$.
In fact, in case $p$ is not already in this shape, then by applying axioms A0 and A1 in Table~\ref{tab:basic-axioms} we can remove superfluous occurrences of $\nil$ summands.
In particular, we remark that this transformation does not increase the number of operator symbols occurring in $p$.
\end{rem}

\begin{lem}
\label{lem:rsimpelelim}
For all closed \emph{BCCSP} terms $p$ and $q$ there exists a closed \emph{BCCSP} term $r$ such that $\E_\rs \vdash p \parcomp q\approx r$.
\end{lem}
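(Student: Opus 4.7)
\medskip\noindent\textbf{Proof plan.}
By Remark~\ref{rmk:form_bccsp}, we may assume without loss of generality that $p = \sum_{i \in I} a_i p_i$ and $q = \sum_{j \in J} b_j q_j$ for closed BCCSP terms $p_i$ and $q_j$ and finite (possibly empty) index sets $I,J$. My plan is to perform a lexicographic induction on the pair
\[
\bigl(\depth(p)+\depth(q),\; |I| + |J|\bigr),
\]
where the primary component will be strictly decreased by applications of EL2 and the secondary component by applications of RSP1 and RSP2. The base case is when $I = \emptyset$ or $J = \emptyset$: then $p \equiv \nil$ or $q \equiv \nil$, and $p \parcomp q$ can be reduced to a closed BCCSP term using P1 and P0.

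For the inductive step, both sums are non-empty, and there are three sub-cases depending on the multiplicities of the initial actions on the two sides. If the actions $\{a_i\}_{i\in I}$ are pairwise distinct \emph{and} the actions $\{b_j\}_{j\in J}$ are pairwise distinct, I apply EL2 to obtain
\[
p \parcomp q \;\approx\; \sum_{i\in I} a_i(p_i \parcomp q) \;+\; \sum_{j\in J} b_j(p \parcomp q_j).
\]
Since $\depth(p_i) < \depth(p)$ and $\depth(q_j) < \depth(q)$, each inner parallel composition has strictly smaller combined depth than $p \parcomp q$, so the induction hypothesis supplies closed BCCSP terms for the subterms $p_i \parcomp q$ and $p \parcomp q_j$; prefixing by $a_i$ or $b_j$ and summing yields a closed BCCSP term for $p \parcomp q$.

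If instead both $p$ and $q$ have a repeated initial action, I isolate two summands $ax, ay$ of $p$ and two summands $bz, bw$ of $q$ (with the remaining summands collected as $u$ and $v$) and apply RSP1. Each of the four resulting summands on the right-hand side is a parallel composition whose combined number of top-level summands is strictly smaller than $|I|+|J|$, while the combined depth is unchanged, so the secondary induction hypothesis applies. If only one side, say $q$, has a repeated initial action, I apply RSP2 (preceded by P1 if necessary to swap $p$ and $q$): the first two summands on its right-hand side are parallel compositions of strictly smaller width (same depth, fewer summands in the second argument), and the third summand $\sum_{i} a_i(p_i \parcomp (by+bz+w))$ consists of prefixed parallel compositions of strictly smaller combined depth. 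In both cases, the induction hypothesis furnishes BCCSP terms for every subterm, and combining them via $\E_0$ produces the required closed BCCSP term $r$.

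The main obstacle is purely bookkeeping: verifying that the three axioms RSP1, RSP2 and EL2 together cover all configurations of initials that can arise, and that in every axiom application the lexicographic measure $(\depth(p)+\depth(q),|I|+|J|)$ strictly decreases on each of the summands on the right-hand side that still contains an occurrence of $\parcomp$. The axioms RSP1 and RSP2 have been designed precisely to peel off duplicated initials until the distinct-initials precondition of EL2 is met, so once the case analysis is set up, each step of the reduction is routine.
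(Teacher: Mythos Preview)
Your proposal is correct and follows the same overall strategy as the paper: reduce every $p\parcomp q$ to one of the three axiom shapes (EL2 when both sides have pairwise distinct initials, RSP1 when both have a duplicated initial, RSP2 when exactly one side does), and recurse on the subterms appearing on the right-hand sides. The only difference is the induction measure: the paper uses the single quantity $\size(p)+\size(q)$, whereas you use the lexicographic pair $(\depth(p)+\depth(q),\,|I|+|J|)$. Both measures decrease under each of the three axiom applications, so both arguments go through; the size measure just avoids the extra bookkeeping of tracking two components and checking that depth does not increase in the RSP1/RSP2 cases.
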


%---------------------------------------------------
\begin{proof}
The proof is by induction on $\size(p) + \size(q)$.
Since $p,q$ are closed BCCSP terms, we can assume that $p = \sum_{i \in I} a_ip_i$ and $q = \sum_{j \in J} b_jq_j$ (see Remark~\ref{rmk:form_bccsp}).
We proceed by a case analysis according to the cardinalities of the sets $I$ and $J$.
\begin{enumerate}
\item Case $|I| = 0$ or $|J| = 0$.
In that case we can apply axioms P0 and P1 in Table~\ref{tab:basic-axioms} to obtain that either $p \parcomp q \approx p$ or $p \parcomp q \approx q$.
%%%

\item Case $|I| = |J| = 1$.
Let $I = \{i_0\}$ and $J=\{j_0\}$.
In this case we have that
\[
p \parcomp q \stackrel{(\mathrm{EL2})}{\approx} a_{i_0}(p_{i_0} \parcomp b_{j_0}q_{j_0}) + b_{j_0}(a_{i_0}p_{i_0} \parcomp q_{j_0}),
\]
so by induction hypothesis there exist closed BCCSP terms $r_{i_0}$ and $r_{j_0}$ such that 
\[
p_{i_0} \parcomp b_{j_0}q_{j_0} \approx r_{i_0}
\quad \text{ and } \quad
a_{i_0}p_{i_0} \parcomp q_{j_0} \approx r_{j_0}.
\]
We can conclude that $p \parcomp q \approx a_{i_0}r_{i_0} + b_{j_0}r_{j_0}$, where $a_{i_0}r_{i_0} + b_{j_0}r_{j_0}$ is a closed BCCSP term.
%%%

\item Case $|I| = 1$ and $|J| > 1$.
We consider two sub-cases:
\begin{itemize}
\item {\sc There exist $j_0,j_1 \in J$ such that $j_0 \neq j_1$ and $b_{j_0} = b_{j_1}$.}

In this case we have that
\begin{align*}
p \parcomp q \stackrel{\text{(\textrm{A2})}}{\approx} & p \parcomp \left( b_{j_0}q_{j_0} + b_{j_1}q_{j_1} + \sum_{j \in J \sdiff \{j_0,j_1\}} b_jq_j \right) \\
\stackrel{\text{(\textrm{RSP2})}}{\approx} & a_{i_0}\left(p_{i_0} \parcomp (b_{j_0}q_{j_0} + b_{j_1}q_{j_1} + \sum_{j \in J \sdiff \{j_0,j_1\}} b_jq_j) \right) \\
& + p \parcomp (b_{j_0}q_{j_0} + \sum_{j \in J \sdiff \{j_0,j_1\}} b_jq_j) +  p \parcomp (b_{j_1}q_{j_1} + \sum_{j \in J \sdiff \{j_0,j_1\}} b_jq_j).
\end{align*}
By the induction hypothesis there exist closed BCCSP terms $r_1$, $r_2$, and $r_3$ such that
\begin{align*}
& p_{i_0} \parcomp (b_{j_0}q_{j_0} + b_{j_1}q_{j_1} + \sum_{j \in J \sdiff \{j_0,j_1\}} b_jq_j) \approx r_1 \\
& p \parcomp (b_{j_0}q_{j_0} + \sum_{j \in J \sdiff \{j_0,j_1\}} b_jq_j) \approx r_2 \\
& p \parcomp (b_{j_1}q_{j_1} + \sum_{j \in J \sdiff \{j_0,j_1\}} b_jq_j) \approx r_3.
\end{align*}
We have therefore obtained that $p \parcomp q \approx a_{i_0} r_1 + r_2 + r_3$ for the closed BCCSP term $a_0 r_1 + r_2 + r_3$.

\item {\sc For all $j_0,j_1 \in J$ such that $j_0 \neq j_1$ we have $b_{j_0} \neq b_{j_1}$.}

In this case we have that 
\[
p \parcomp q \stackrel{(\mathrm{EL2})}{\approx} a_{i_0}(p_{i_0} \parcomp \sum_{j \in J} b_jq_j) + \sum_{j \in J} b_j(p \parcomp q_j),
\]
and the induction hypothesis then gives, for each $j \in J$, a closed BCCSP term $r_j$ such that
\[
p \parcomp q_j \approx r_j
\]
as well as a closed BCCSP term $r'$ such that
\[
p_{i_0} \parcomp \sum_{j \in J} b_jq_j \approx r'.
\]
Therefore, $p \parcomp q$ is equivalent to the closed BCCSP term $a_{i_0}r' + \sum_{j \in J} b_jr_j$.
\end{itemize}
%%%

\item Case $|I| > 1$ and $|J| = 1$.
This case can be handled symmetrically to the case where $|I| = 1$ and $|J| > 1$ by applying axiom P1 in Table~\ref{tab:basic-axioms}.
%%%

\item Case $|I| > 1$ and $|J| > 1$.
We consider four sub-cases:
\begin{itemize}
\item {\sc There exist $i_0$ and $i_1$ such that $i_0 \neq i_1$ and $a_{i_0} = a_{i_1}$, and there exist $j_0$ and $j_1$ such that $j_0 \neq j_1$ and $b_{j_0} = b_{j_1}$.}

In this case we have that
\begin{align*}
p \parcomp q \approx{} & 
(a_{i_0}p_{i_0} + a_{i_1}p_{i_1} + \sum_{i \in I \sdiff \{i_0,i_1\}} a_ip_i) \parcomp (b_{j_0}q_{j_0} + b_{j_1}q_{j_1} + \sum_{j \in J \sdiff \{j_0,j_1\}} b_jq_j) \\
\stackrel{\text{(\textrm{RSP1})}}{\approx} & 
(a_{i_0}p_{i_0} + \sum_{i \in I \sdiff \{i_0,i_1\}} a_ip_i) \parcomp (b_{j_0}q_{j_0} + b_{j_1}q_{j_1} + \sum_{j \in J \sdiff \{j_0,j_1\}} b_jq_j) \\
& + (a_{i_1}p_{i_1} + \sum_{i \in I \sdiff \{i_0,i_1\}} a_ip_1) \parcomp (b_{j_0}q_{j_0} + b_{j_1}q_{j_1} + \sum_{j \in J \sdiff \{j_0,j_1\}} b_jq_j) \\
& + (a_{i_0}p_{i_0} + a_{i_1}p_{i_1} + \sum_{i \in I \sdiff \{i_0,i_1\}} a_ip_i) \parcomp (b_{j_0}q_{j_0} + \sum_{j \in J \sdiff \{j_0,j_1\}} b_jq_j) \\
& + (a_{i_0}p_{i_0} + a_{i_1}p_{i_1} + \sum_{i \in I \sdiff \{i_0,i_1\}} a_ip_i) \parcomp (b_{j_1}q_{j_1} + \sum_{j \in J \sdiff \{j_0,j_1\}} b_jq_j).
\end{align*}
By the induction hypothesis there are closed BCCSP terms $r_1$, $r_2$, $r_3$, and $r_4$ such that
\begin{align*}
r_1 \approx{} & (a_{i_0}p_{i_0} + \sum_{i \in I \sdiff \{i_0,i_1\}} a_ip_i) \parcomp (b_{j_0}q_{j_0} + b_{j_1}q_{j_1} + \sum_{j \in J \sdiff \{j_0,j_1\}} b_jq_j) \\
r_2 \approx{} & (a_{i_1}p_{i_1} + \sum_{i \in I \sdiff \{i_0,i_1\}} a_ip_1) \parcomp (b_{j_0}q_{j_0} + b_{j_1}q_{j_1} + \sum_{j \in J \sdiff \{j_0,j_1\}} b_jq_j) \\
r_3 \approx{} & (a_{i_0}p_{i_0} + a_{i_1}p_{i_1} + \sum_{i \in I \sdiff \{i_0,i_1\}} a_ip_i) \parcomp (b_{j_0}q_{j_0} + \sum_{j \in J \sdiff \{j_0,j_1\}} b_jq_j) \\
r_4 \approx{} & (a_{i_0}p_{i_0} + a_{i_1}p_{i_1} + \sum_{i \in I \sdiff \{i_0,i_1\}} a_ip_i) \parcomp (b_{j_1}q_{j_1} + \sum_{j \in J \sdiff \{j_0,j_1\}} b_jq_j).
\end{align*}
Hence $p \parcomp q$ is equivalent to the closed BCCSP term $r_1 + r_2 + r_3 + r_4$.

\item {\sc There exist $i_0$ and $i_1$ such that $i_0 \neq i_1$ and $a_{i_0} = a_{i_1}$, and for all $j_0$ and $j_1$ such that $j_0 \neq j_1$ we have $b_{j_0} \neq b_{j_1}$.}

In this case we have that
\begin{align*}
p \parcomp q \approx{} & (a_{i_0}p_{i_0} + a_{i_1}p_{i_1} + \sum_{i \in I \sdiff \{i_0,i_1\}} a_ip_i) \parcomp (\sum_{j \in J} b_jq_j) \\
\stackrel{(\mathrm{P1})}{\approx} & (\sum_{j \in J} b_jq_j) \parcomp (a_{i_0}p_{i_0} + a_{i_1}p_{i_1} + \sum_{i \in I \sdiff \{i_0,i_1\}} a_ip_i) \\
\stackrel{(\mathrm{RSP2})}{\approx} & \sum_{j \in J} b_j(q_j \parcomp (a_{i_0}p_{i_0} + a_{i_1}p_{i_1} + \sum_{i \in I \sdiff \{i_0,i_1\}} a_ip_i)) \\
& + (\sum_{j \in J} b_jq_j) \parcomp (a_{i_0}p_{i_0} + \sum_{i \in I \sdiff \{i_0,i_1\}} a_ip_i) + (\sum_{j \in J} b_jq_j) \parcomp (a_{i_1}p_{i_1} + \sum_{i \in I \sdiff \{i_0,i_1\}} a_ip_i),
\end{align*}
so the induction hypothesis gives, for each $j \in J$, a closed BCCSP term $r_j$ such that
\[
r_j \approx q_j \parcomp (a_{i_0}p_{i_0} + a_{i_1}p_{i_1} + \sum_{i \in I \sdiff \{i_0,i_1\}} a_ip_i),
\]
as well as closed BCCSP terms $r'$ and $r''$ such that
\begin{align*}
r' \approx{} & (\sum_{j \in J} b_jq_j) \parcomp (a_{i_0}p_{i_0} + \sum_{i \in I \sdiff \{i_0,i_1\}} a_ip_i) \\
r'' \approx{} & (\sum_{j \in J} b_jq_j) \parcomp (a_{i_1}p_{i_1} + \sum_{i \in I \sdiff \{i_0,i_1\}} a_ip_i).
\end{align*}
Thus $p \parcomp q$ is equivalent to the closed BCCSP term $\sum_{j \in J} b_jr_j + r' + r''$.

\item {\sc For all $i_0$ and $i_1$ such that $i_0 \neq i_1$ we have $a_{i_0} \neq a_{i_1}$, and there exist $j_0$ and $j_1$ such that $j_0 \neq j_1$ and $b_{j_0} = b_{j_1}$.}

This case follows by applying a symmetrical argument to that used in the previous item and it is therefore omitted.

\item {\sc For all $i_0$ and $i_1$ such that $i_0 \neq i_1$ we have $a_{i_0} \neq a_{i_1}$, and for all $j_0$ and $j_1$ such that $j_0 \neq j_1$ we have $b_{j_0} \neq b_{j_1}$.}

In this case we have that
\[
p \parcomp q \approx \sum_{i \in I} a_ip_i \parcomp \sum_{j \in J} b_jq_j
\]
and all the conditions for an application of axiom EL2 in Table\ref{tab:exp_law} are satisfied.
Hence
\[
p \parcomp q \stackrel{(\textrm{EL2})}{\approx} \sum_{i \in I} a_i(p_i \parcomp \sum_{j \in J} b_jq_j) + \sum_{j \in J} b_j(\sum_{i \in I} a_ip_i \parcomp q_j),
\]
and by the induction hypothesis there exist, for each $i \in I$, a closed BCCSP term $r_i$ such that
\[
r_i \approx p_i \parcomp \sum_{j \in J} b_jq_j,
\]
and for each $j \in J$, a closed BCCSP term $r_j'$ such that
\[
r_j' \approx \sum_{i \in I} a_ip_i \parcomp q_j.
\]
Therefore $p \parcomp q$ is equivalent to the closed BCCSP term $\sum_{i \in I} a_ir_i + \sum_{j \in J} b_jr_j'$.
\end{itemize}
\end{enumerate}
\end{proof}
%-------------------------------------------------------

\begin{prop}[$\E_\rs$ elimination]
\label{prop:rs_elim}
For every closed $\bccsp$ term $p$ there exists a \emph{BCCSP} term $q$  such that $\E_\rs\vdash p\approx q$.
\end{prop}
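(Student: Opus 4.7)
The plan is to prove the proposition by structural induction on the closed $\bccsp$ term $p$, using Lemma~\ref{lem:rsimpelelim} to handle the parallel composition case. The key observation is that once both operands of a $\parcomp$ have been reduced (by the induction hypothesis) to BCCSP terms, Lemma~\ref{lem:rsimpelelim} applies directly and furnishes the required BCCSP representative. The remaining operators ($\nil$, prefixing, and $+$) preserve the BCCSP fragment, so the induction goes through without effort in those cases.

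More concretely, I would proceed as follows. For $p = \nil$, we simply take $q = \nil$, which is already a BCCSP term, and reflexivity (rule $e_1$) gives $\E_\rs \vdash p \approx q$. For $p = a.p'$, the induction hypothesis gives a BCCSP term $q'$ with $\E_\rs \vdash p' \approx q'$, and then $q = a.q'$ is a BCCSP term and $\E_\rs \vdash p \approx q$ follows by rule $e_5$. For $p = p_1 + p_2$, the induction hypothesis provides BCCSP terms $q_1, q_2$ such that $\E_\rs \vdash p_i \approx q_i$, and rule $e_6$ yields $\E_\rs \vdash p_1 + p_2 \approx q_1 + q_2$, with $q_1 + q_2$ a BCCSP term. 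The interesting case is $p = p_1 \parcomp p_2$: by the induction hypothesis applied to $p_1$ and $p_2$ (both of strictly smaller size than $p$) we obtain BCCSP terms $q_1$ and $q_2$ with $\E_\rs \vdash p_i \approx q_i$, whence by rule $e_8$ we get $\E_\rs \vdash p_1 \parcomp p_2 \approx q_1 \parcomp q_2$. Lemma~\ref{lem:rsimpelelim}, applied to the closed BCCSP terms $q_1$ and $q_2$, then yields a closed BCCSP term $r$ with $\E_\rs \vdash q_1 \parcomp q_2 \approx r$, and transitivity (rule $e_3$) gives $\E_\rs \vdash p \approx r$.

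There is no real obstacle here, since Lemma~\ref{lem:rsimpelelim} has already done the heavy lifting: it is precisely the statement that a parallel composition of two BCCSP terms can be proved equal to some BCCSP term using $\E_\rs$. The proposition is just the standard lifting of that fact from single $\parcomp$-applications to arbitrary nesting of $\bccsp$ operators, which follows from a routine structural induction and repeated use of the congruence rules of equational logic from Table~\ref{tab:equational_logic}.
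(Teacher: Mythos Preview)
Your proof is correct and takes essentially the same approach as the paper, which simply states that the result follows by a straightforward structural induction on $p$, invoking Lemma~\ref{lem:rsimpelelim} in the $\parcomp$ case. You have spelled out the remaining cases and the uses of the congruence rules explicitly, but the argument is the same.
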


%---------------------------------------------------
\begin{proof}
Straightforward by induction on the structure of $p$, using Lemma~\ref{lem:rsimpelelim} in the case that $p$ is of the form $p_1 \parcomp p_2$ for some processes $p_1$ and $p_2$.
\end{proof}
%----------------------------------------------------

The ground-completeness of $\E_\rs$ then follows from the ground-completeness of $\E_0 \cup \{\mathrm{RS}\}$ over BCCSP~\cite{vG90}.

\begin{thm}
[$\E_\rs$ completeness]
\label{thm:RS_completeness}
The axiom system $\E_\rs$ is a ground-complete axiomatisation of $\bccsp$ modulo ready simulation equivalence, i.e., whenever $p \sim_\rs q$ then $\E_\rs \vdash p \approx q$.
\end{thm}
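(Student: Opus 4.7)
The plan is to reduce ground-completeness over $\bccsp$ to the already known ground-completeness of $\E_0 \cup \{\mathrm{RS}\}$ over the sublanguage BCCSP, using the elimination result (Proposition~\ref{prop:rs_elim}) as the bridge. All the hard work — namely, showing that $\parcomp$ can be equationally eliminated from every closed term — has already been carried out in Lemma~\ref{lem:rsimpelelim} and promoted to arbitrary closed terms in Proposition~\ref{prop:rs_elim}, so this final theorem becomes essentially a packaging step.

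Concretely, I would proceed as follows. Let $p,q$ be closed $\bccsp$ terms with $p \sim_\rs q$. First, invoke Proposition~\ref{prop:rs_elim} to obtain closed BCCSP terms $p'$ and $q'$ (i.e., terms in which $\parcomp$ does not occur) such that $\E_\rs \vdash p \approx p'$ and $\E_\rs \vdash q \approx q'$. Next, use the soundness of $\E_\rs$ (Theorem~\ref{thm:RS_soundness}) to conclude $p \sim_\rs p'$ and $q \sim_\rs q'$, and hence, by transitivity and symmetry of $\sim_\rs$, that $p' \sim_\rs q'$. At this point, $p'$ and $q'$ are closed BCCSP terms related by ready simulation equivalence, so the ground-completeness of the axiom system $\E_0 \cup \{\mathrm{RS}\}$ over BCCSP modulo $\sim_\rs$ from~\cite{vG90} yields $\E_0 \cup \{\mathrm{RS}\} \vdash p' \approx q'$. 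Since $\E_0 \cup \{\mathrm{RS}\} \subseteq \E_\rs$, this derivation is also an $\E_\rs$-derivation, giving $\E_\rs \vdash p' \approx q'$. Finally, chaining the three $\E_\rs$-derivations via rules $e_2$ and $e_3$ of equational logic produces $\E_\rs \vdash p \approx q$, which is the desired conclusion.

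There is really no obstacle left to overcome at this stage: the only non-trivial ingredient is the elimination lemma, which has already been established, and the appeal to van Glabbeek's completeness result is a black box. If anything, the one point worth stating carefully in the write-up is that the elimination step works for \emph{both} sides simultaneously and that soundness is needed to transfer the hypothesis $p \sim_\rs q$ from the original pair to the BCCSP normal forms $p',q'$ before invoking the BCCSP-completeness result; without soundness, we would have no reason to believe $p' \sim_\rs q'$.
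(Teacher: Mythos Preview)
Your proposal is correct and follows exactly the approach of the paper, which simply remarks that ground-completeness of $\E_\rs$ follows from Proposition~\ref{prop:rs_elim} together with the ground-completeness of $\E_0 \cup \{\mathrm{RS}\}$ over BCCSP from~\cite{vG90}. You have merely spelled out the (standard) details that the paper leaves implicit.
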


We remark that since axioms RSP1, RSP2, and EL2 are sound modulo ready simulation equivalence, they are automatically sound modulo all the equivalences in the spectrum that are coarser than $\sim_\rs$, namely the completed simulation, simulation, and (decorated) trace equivalences.
Hence, we can easily obtain finite, ground-complete axiomatisations of $\bccsp$ modulo each of those equivalences by adding RSP1, RSP2 and EL2 to the respective ground-complete axiomatisations of BCCSP that have been proposed in the literature~\cite{vG90,BFN03}.
However, for each of those equivalences we can provide stronger axioms that give a more elegant characterisation of the distributivity properties of $\parcomp$ over $+$. 
In particular, the axiom schema RSP2 yields $|\Act| \cdot 2^{|\Act|}$ equational axioms and EL2 yields $2^{2|\Act|}$ equational axioms.
By exploiting the various forms of distributivity of parallel composition over choice, we can obtain more concise ground-complete axiomatisations of $\bccsp$ modulo the coarser equivalences.
We devote the next two sections to the presentation of these results.

%===========================================
% CS and S
%=========================================

\section{Completed simulation and simulation}
\label{sec:other_simulations}

In this section we refine the axiom system $\E_\rs$ to obtain finite, ground-complete axiomatisations of $\bccsp$ modulo completed simulation and simulation equivalences.
To this end, we replace RSP1 and RSP2 with new axioms, tailored for the considered semantics, that allow us to obtain the elimination of $\parcomp$ from closed $\bccsp$ terms, while using less restrictive forms of distributivity of $\parcomp$ over $+$.

\begin{table}[t]
\centering
\scalebox{0.95}{
\begin{tabular}{l}
\hline \\[-.3cm]
\scalebox{0.85}{(CS)} \, $a(bx+y+z) \approx a(bx+y+z) + a(bx +z)$ \\[.3cm]
\scalebox{0.85}{(CSP1)} $(ax+by+u) \parcomp (cz+dw+v) \approx (ax+u) \parcomp (cz+dw+v) + (by+u) \parcomp (cz+dw+v) +$\\[.05cm]
\qquad\quad\phantom{$(ax+by+u) \parcomp (cz+dw+v) \approx$}
$+ (ax+by+u) \parcomp (cz+v) + (ax+by+u) \parcomp (dw+v)$ \\[.1cm]
\scalebox{0.85}{(CSP2)} $ax \parcomp (by+cz+w) \approx a(x \parcomp (by+cz+w)) + ax \parcomp (by+w) + ax \parcomp (cz+w)$ \\[.3cm]
\; $\E_\cs = \E_1 \cup \{\textrm{CS, CSP1, CSP2, EL1}\}$
\\[.05cm]
\hline 
\hline 
\\[-.35cm]
\scalebox{0.85}{(S)} \, $a(x+y) \approx a(x+y) + ax$ \\[.3cm]
\scalebox{0.85}{(SP1)} \, $(x+y) \parcomp (z+w) \approx x \parcomp (z+w) + y \parcomp (z+w) + (x+y) \parcomp z + (x+y) \parcomp w$ \\[.1cm]
\scalebox{0.85}{(SP2)} \, $ax \parcomp (y+z) \approx a(x \parcomp (y+z)) + ax \parcomp y + ax \parcomp z$ \\[.3cm]
\; $\E_\s = \E_1 \cup \{\textrm{S, SP1, SP2, EL1}\}$ \\[.1cm]
\hline 
\\[-.2cm]
\end{tabular}
}
\caption{\label{tab:axioms_branching_cs} Additional axioms for (completed) simulation equivalence.}
\end{table}

Let us focus first on completed simulation equivalence.
We can use axioms CSP1 and CSP2 in Table~\ref{tab:axioms_branching_cs} to characterise restricted forms of distributivity of $\parcomp$ over $+$ modulo $\sim_\cs$.
Intuitively, CSP1 is the \emph{completed simulation counterpart} of RSP1, and CSP2 is that of RSP2.
Notice that both CSP1 and CSP2 are such that, when distributing $\parcomp$ over $+$, we never obtain $\nil$ as an argument of $\parcomp$, thus guaranteeing their soundness modulo $\sim_\cs$.
Moreover, we stress that CSP1 and CSP2 are not sound modulo ready simulation equivalence.
This is due to the fact that both axioms allow for distributing $\parcomp$ over $+$ regardless of the initial actions of the summands.
It is then immediate to check that, for instance, $a \parcomp (b + c) \not\sim_\rs a \parcomp b + a \parcomp c + a \parcomp (b+c)$, whereas $a \parcomp (b + c) \sim_\cs a \parcomp b + a \parcomp c + a \parcomp (b+c)$.
Interestingly, due to the relaxed constraints on distributivity, by (repeatedly) applying CSP1 and CSP2, from left to right, we are able to reduce a $\bccsp$ process of the form $(\sum_{i \in I} p_i) \parcomp (\sum_{j \in J} p_j)$ to a $\bccsp$ process of the form $\sum_{k \in K} p_k$ such that whenever $p_k$ has $\parcomp$ as head operator then $p_k = a_kq_k \parcomp b_k q'_k$ for some $q_k,q'_k$.
We can then use the refinement EL1 of the expansion law to proceed with the elimination of $\parcomp$ from these terms.

Consider the axiom system $\E_\cs = \E_1 \cup \{\textrm{CS,CSP1,CSP2,EL1}\}$.
We can formalise the elimination result for $\sim_\cs$ as a direct consequence of the following result.

\begin{lem}
\label{lem:cssimpelelim}
For all closed \emph{BCCSP} terms $p$ and $q$ there exists a closed \emph{BCCSP} term $r$ such that $\E_\cs \vdash p \parcomp q\approx r$.
\end{lem}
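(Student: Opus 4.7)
The plan is to adapt the argument for Lemma~\ref{lem:rsimpelelim}, taking advantage of the fact that axioms CSP1 and CSP2 carry no side condition on the initial actions of the summands. I proceed by induction on $\size(p)+\size(q)$, and I appeal to Remark~\ref{rmk:form_bccsp} to assume without loss of generality that $p=\sum_{i \in I} a_i p_i$ and $q=\sum_{j \in J} b_j q_j$. Then I analyse cases on the cardinalities of $I$ and $J$, but—crucially—I do \emph{not} need to subdivide each case according to whether two summands carry the same initial action, because CSP1 and CSP2 are sound modulo $\sim_\cs$ without that restriction.

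If $|I|=0$ or $|J|=0$ I conclude immediately with P0, possibly after applying P1 to swap the arguments. If $|I|=|J|=1$, I apply EL1 to rewrite
\[
a_{i_0}p_{i_0} \parcomp b_{j_0}q_{j_0} \approx a_{i_0}(p_{i_0} \parcomp b_{j_0}q_{j_0}) + b_{j_0}(a_{i_0}p_{i_0} \parcomp q_{j_0}),
\]
and then invoke the induction hypothesis on the two parallel compositions in the right-hand side, which are strictly smaller.

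If $|I|=1<|J|$, I pick two distinct indices $j_0,j_1 \in J$, set $w=\sum_{j \in J \sdiff \{j_0,j_1\}} b_jq_j$, and apply CSP2 to obtain
\[
a_{i_0}p_{i_0}\parcomp q \approx a_{i_0}(p_{i_0}\parcomp q) + a_{i_0}p_{i_0}\parcomp(b_{j_0}q_{j_0}+w) + a_{i_0}p_{i_0}\parcomp(b_{j_1}q_{j_1}+w);
\]
each of the three parallel compositions in the right-hand side has strictly smaller combined size than $p\parcomp q$, so the induction hypothesis applies. The case $|I|>1=|J|$ is symmetric via P1. Finally, if $|I|>1$ and $|J|>1$, I pick distinct $i_0,i_1 \in I$ and distinct $j_0,j_1 \in J$, let $u=\sum_{i \in I\sdiff\{i_0,i_1\}}a_ip_i$ and $v=\sum_{j \in J\sdiff\{j_0,j_1\}}b_jq_j$, and apply CSP1 to distribute $p \parcomp q$ into four parallel compositions each of strictly smaller combined size, at which point the induction hypothesis finishes the job.

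The only technical point to be checked carefully is that the induction measure $\size(p)+\size(q)$ strictly decreases on every recursive call; this follows by direct inspection of the right-hand sides of EL1, CSP1 and CSP2, noting that each recursive subterm either has one fewer prefix symbol at the top of one argument or loses one summand from one of the two arguments. I do not anticipate a real obstacle here—the proof is structurally analogous to, but genuinely simpler than, the one for Lemma~\ref{lem:rsimpelelim}, precisely because the relaxed axioms let a single case per cardinality profile cover all shapes of $p$ and $q$, and the finitary expansion law EL1 suffices throughout.
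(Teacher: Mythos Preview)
Your proposal is correct and follows essentially the same approach as the paper's proof: induction on $\size(p)+\size(q)$, the same five-way case split on $|I|$ and $|J|$, and the same applications of P0/P1, EL1, CSP2, and CSP1 in the respective cases. Your observation that no sub-case analysis on matching initial actions is needed (in contrast to the ready-simulation lemma) is exactly the simplification the paper exploits as well.
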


%--------------------------------------------------
\begin{proof}
The proof is by induction on $\size(p) + \size(q)$.
First note that, since $p$ and $q$ are closed BCCSP terms, we may assume that $p=\sum_{i\in I} a_ip_i$ and $q=\sum_{j\in J}b_j q_j$ (see Remark~\ref{rmk:form_bccsp}).
  
We proceed by a case analysis according to the cardinalities of the sets $I$ and $J$.

\begin{enumerate}
\item Case ${|I|} = 0$ or ${|J|}=0$.
First note that if ${|J|}=0$, i.e., $J=\emptyset$, then $q=\nil$, so $p\parcomp q \approx p$ by P0, and $p$ is the required closed BCCSP term. Similarly, if ${|I|}=0$, i.e., $I=\emptyset$, then $p=\nil$, so $p\parcomp q\approx q\parcomp p\approx q$ by P1 and P0 in Table~\ref{tab:basic-axioms}, and $q$ is the required closed BCCSP term.
%%%
  
\item Case ${|I|}={|J|}=1$.
Let $I=\{i_0\}$ and $J=\{j_0\}$, then
\begin{equation*}
 p\parcomp q\stackrel{\text{(EL1)}}{\approx}a_{i_0}(p_{i_0}\parcomp q)+b_{j_0}(p\parcomp q_{j_0}).
\end{equation*}
Since $\size(p_{i_0}) < \size(p)$ and $\size(q_{j_0}) < \size(q)$, by the induction hypothesis there exist closed BCCSP terms $r_{i_0}$ and $r_{j_0}$ such that ${p_{i_0}\parcomp q}\approx r_{i_0}$ and ${p\parcomp q_{j_0}}\approx r_{j_0}$.
It follows that $p\parcomp q\approx a_{i_0}r_{i_0}+b_{j_0}r_{j_0}$ and clearly $a_{i_0}r_{i_0}+b_{j_0}r_{j_0}$ is a closed BCCSP term.
%%%

\item Case ${|I|}=1$ and $|J|>1$.
We can then assume that $I=\{i_0\}$ and there exist $j_0,j_1\in J$ such that $j_0\neq j_1$, then
\begin{equation*} \textstyle
 p\parcomp q\stackrel{\text{(CSP2)}}{\approx} a_{i_0}(p_i\parcomp q)
  + p\parcomp \left(\sum_{j\in {J\sdiff{\{j_0\}}}}b_j q_j\right)
  + p\parcomp \left(\sum_{j\in {J\sdiff{\{j_1\}}}}b_j q_j\right).
\end{equation*}
Since $\size(p_{i_0}) < \size(p)$, 
$\size(\sum_{j\in {J\sdiff{\{j_0\}}}}b_j q_j) < \size(q)$ and 
$\size(\sum_{j\in {J\sdiff{\{j_1\}}}}b_j q_j) < \size(q)$,
by the induction hypothesis there exist closed BCCSP terms $r_{i_0}$, $r_{j_0}$ and $r_{j_1}$ such that
\[
p_{i_0}\parcomp q\approx r_{i_0} 
\qquad
p \parcomp \left(\sum_{j\in {J\sdiff{\{j_0\}}}}b_j q_j\right) \approx r_{j_0} 
\quad \text{ and } \quad
p \parcomp \left(\sum_{j\in {J\sdiff{\{j_1\}}}}b_j q_j\right) \approx r_{j_1}.
\]
So we have $p\parcomp q\approx a_{i_0}r_{i_0}+r_{j_0}+r_{j_1}$ and $a_{i_0}r_{i_0}+r_{j_0}+r_{j_1}$ is a closed BCCSP term.
%%%

\item Case $|I|>1$ and ${|J|}=1$.
The proof is similar as in the previous case, with an additional application of axiom P1 in Table~\ref{tab:basic-axioms}.
%%%

\item Case $|I|,|J|>1$.
In this case there exist $i_0,i_1\in I$ with $i_0\neq i_1$ and $j_0,j_1\in J$ with $j_0\neq j_1$.
Then
\begin{align*}
p\parcomp q \stackrel{\text{(CSP1)}}{\approx} &
%\textstyle
\left(\sum_{i\in{I\sdiff\{i_0\}}}a_i p_i\right)\parcomp q 
+ \left(\sum_{i\in{I\sdiff\{i_1\}}}a_i p_i\right)\parcomp q \\
& + p \parcomp \left(\sum_{j\in{J\sdiff\{j_0\}}}b_j q_j\right)
+ p \parcomp \left(\sum_{j\in{J\sdiff\{j_1\}}}b_j q_i\right).
\end{align*}
Note that $\displaystyle{\size\left(\sum_{i\in{I\sdiff\{i_0\}}}a_i p_i\right)} < \size(p)$ and
$\displaystyle{\size\left(\sum_{i\in{I\sdiff\{i_1\}}}a_i p_i\right)} < \size(p)$, as well as
$\displaystyle{\size\left(\sum_{j\in{J\sdiff\{j_0\}}}b_j q_j\right)} < \size(q)$ and 
$\displaystyle{\size\left(\sum_{j\in{J\sdiff\{j_1\}}}b_j q_i\right)} < \size(q)$,
so by the induction hypothesis there exist $r_{i_0}$, $r_{i_1}$, $r_{j_0}$ and $r_{j_1}$ such that
\begin{align*}
\left(\sum_{i\in{I\sdiff\{i_0\}}}a_i p_i\right)\parcomp q\approx r_{i_0},
& & 
\left(\sum_{i\in{I\sdiff\{i_1\}}}a_i p_i\right)\parcomp q\approx r_{i_1} \\
p \parcomp \left(\sum_{j\in{J\sdiff\{j_0\}}}b_j q_j\right)\approx r_{j_0}
& &
p \parcomp \left(\sum_{j\in{J\sdiff\{j_1\}}}b_j q_i\right)\approx r_{j_1}.
\end{align*}
It follows that $p\parcomp q\approx r_{i_0}+r_{i_1}+r_{j_0}+r_{j_1}$ and $r_{i_0}+r_{i_1}+r_{j_0}+r_{j_1}$ is a closed BCCSP term. \qedhere
\end{enumerate}
\end{proof}
%----------------------------------------------------

\begin{prop}[$\E_\cs$ elimination]
\label{prop:CSelimination}
For every closed $\bccsp$ term $p$ there exists a \emph{BCCSP} term $q$ such that $\E_\cs\vdash p\approx q$.
\end{prop}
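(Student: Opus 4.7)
The plan is to derive Proposition~\ref{prop:CSelimination} as an easy corollary of Lemma~\ref{lem:cssimpelelim}, following exactly the same strategy that was used to obtain Proposition~\ref{prop:rs_elim} from Lemma~\ref{lem:rsimpelelim}. The argument proceeds by structural induction on the closed $\bccsp$ term $p$.

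For the base cases, if $p = \nil$ then $p$ is already a closed BCCSP term and there is nothing to do. The variable case does not arise since $p$ is closed. For the action-prefix case $p = a.p'$, the induction hypothesis gives a closed BCCSP term $q'$ with $\E_\cs \vdash p' \approx q'$, and then rule $(e_5)$ of equational logic yields $\E_\cs \vdash a.p' \approx a.q'$, with $a.q'$ a closed BCCSP term. For the choice case $p = p_1 + p_2$, the induction hypothesis supplies closed BCCSP terms $q_1, q_2$ with $\E_\cs \vdash p_i \approx q_i$ for $i = 1, 2$; then $(e_6)$ gives $\E_\cs \vdash p_1 + p_2 \approx q_1 + q_2$, and the latter is a closed BCCSP term.

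The only interesting case is $p = p_1 \parcomp p_2$. By the induction hypothesis applied to the strictly smaller subterms $p_1$ and $p_2$, there are closed BCCSP terms $q_1, q_2$ such that $\E_\cs \vdash p_i \approx q_i$ for $i = 1, 2$. By rule $(e_8)$ of equational logic (closure under $\parcomp$-contexts), we obtain $\E_\cs \vdash p_1 \parcomp p_2 \approx q_1 \parcomp q_2$. Now we apply Lemma~\ref{lem:cssimpelelim} to the closed BCCSP terms $q_1$ and $q_2$ to obtain a closed BCCSP term $r$ with $\E_\cs \vdash q_1 \parcomp q_2 \approx r$. Transitivity $(e_3)$ then yields $\E_\cs \vdash p \approx r$, completing the induction.

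There is no real obstacle here: all the work has already been done inside Lemma~\ref{lem:cssimpelelim}, which is precisely what justifies the choice of CSP1, CSP2 and EL1 in the axiom system $\E_\cs$. The proposition simply lifts the elimination of $\parcomp$ from the binary case (a parallel composition of two BCCSP terms) to arbitrary closed $\bccsp$ contexts by an outer structural induction that pushes the elimination lemma inside each occurrence of $\parcomp$.
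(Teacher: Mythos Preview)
Your proof is correct and follows exactly the same approach as the paper, which simply says the proposition is ``straightforward by induction on the structure of $p$, using Lemma~\ref{lem:cssimpelelim} in the case that $p$ is of the form $p_1 \parcomp p_2$.'' You have merely spelled out the details that the paper leaves implicit.
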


%-------------------------------------------------------
\begin{proof}
Straightforward by induction on the structure of $p$, using Lemma~\ref{lem:cssimpelelim} in the case that $p$ is of the form $p_1 \parcomp p_2$ for some $p_1$ and $p_2$.
\end{proof}
%-------------------------------------------------------

A similar reasoning could be applied to obtain the elimination result for simulation equivalence.
Although this result could be directly derived by the soundness of CSP1 and CSP2 modulo simulation equivalence, stronger distributivity properties for parallel composition over summation hold modulo $\sim_\s$.
Instead of CSP1 and CSP2, we include SP1 and SP2 in Table~\ref{tab:axioms_branching_cs}, letting $\E_\s = \E_1 \cup \{\textrm{S,SP1,SP2,EL1}\}$.
By showing that the axioms CS, CSP1 and CSP2 are derivable from $\E_\s$, we can obtain the elimination result for simulation equivalence as an immediate corollary of that for completed simulation.
\pagebreak
\begin{lem}
\label{lem:ES_ECS}
The axioms of the system $\E_\cs$ are derivable from the axiom system $\E_\s$, namely:
\begin{enumerate}
\item \label{lem:SprovesCS}
  $\E_\s \vdash\mathrm{CS}$,
 \item \label{lem:SprovesCSP1}
  $\E_\s \vdash\mathrm{CSP1}$, and
  \item \label{lem:SprovesCSP2}
  $\E_\s\vdash\mathrm{CSP2}$.
\end{enumerate}
\end{lem}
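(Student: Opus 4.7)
The plan is to derive each of CS, CSP1, CSP2 from $\E_\s$ by a single instance of S, SP1, SP2 respectively, combined only with the basic equational laws in $\E_1$ (crucially idempotency A3 and commutativity/associativity A1, A2). In each case the ``simulation'' version is a special instance of the ``completed simulation'' version in which the common summand ($u$, $v$, $w$) appears only once, so the derivation amounts to creating a second copy of that summand via A3, applying the stronger axiom, and then absorbing the copy back.

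For part~(\ref{lem:SprovesCS}), I would just instantiate S (which reads $a(x+y)\approx a(x+y)+ax$) with $x:=bx+z$ and $y:=y$ to obtain $a((bx+z)+y)\approx a((bx+z)+y)+a(bx+z)$, and then use A1, A2 under the prefix $a(\cdot)$ to reorder $(bx+z)+y$ as $bx+y+z$, which is CS.

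For part~(\ref{lem:SprovesCSP1}), the key identities derivable from A1--A3 are
\[
ax+by+u \approx (ax+u)+(by+u), \qquad cz+dw+v \approx (cz+v)+(dw+v),
\]
obtained by reordering with A1/A2 and absorbing the duplicated $u$ (resp.\ $v$) via A3. Using these rewrites and congruence, the left-hand side of CSP1 becomes $\bigl((ax+u)+(by+u)\bigr)\parcomp\bigl((cz+v)+(dw+v)\bigr)$. A single application of SP1 with $x:=ax+u$, $y:=by+u$, $z:=cz+v$, $w:=dw+v$ yields four summands; in the latter two of these, applying the same A1--A3 rewrites in reverse restores $(ax+by+u)$ as the outer argument, which produces exactly the right-hand side of CSP1. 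Part~(\ref{lem:SprovesCSP2}) is handled analogously: rewrite the second factor as $by+cz+w\approx(by+w)+(cz+w)$ using A3, apply SP2 with $y:=by+w$ and $z:=cz+w$, and then undo the rewrite inside the prefix $a(x\parcomp\cdot)$ by a congruence step.

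I do not expect any real obstacle beyond bookkeeping. Each derivation is a handful of equational-logic steps, and the only mildly non-trivial manoeuvre is the use of A3 to create and then remove a duplicated summand so that the distributivity axioms of $\E_\s$ apply verbatim.
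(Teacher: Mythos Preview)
Your proposal is correct and follows essentially the same route as the paper: in each case the paper duplicates the shared summand via A3, reorders with A1/A2, applies the corresponding axiom S, SP1, or SP2, and then collapses the duplicate back. One tiny omission in your description of the CSP1 case is that after applying SP1 you must also undo the rewrite $(cz+v)+(dw+v)\to cz+dw+v$ in the \emph{first} two summands (not only restore $ax+by+u$ in the latter two); this is trivial but needed to match CSP1 exactly.
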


%--------------------------------------------------
\begin{proof}
We start with the derivation of CS.
We have the following equational derivation:
  \begin{eqnarray*}
    a(bx+y+z)
      & \stackrel{\mathrm{(A1),(A2)}}{\approx} &
    a(bx + z + y) \\
      & \stackrel{\mathrm{(S)}}{\approx} &
    a(bx + z + y) + a(bx+z) \\
      & \stackrel{\mathrm{(A1),(A2)}}{\approx} &
    a(bx + y + z) + a(bx+z).
  \end{eqnarray*}

In the case of CSP1, we have the following equational derivation:
  \begin{gather*}
     (ax+by+u)\parcomp (cz+dw+v) \\
    \quad  \fakewidth{\stackrel{\mathrm{(A1),(A3)}}{\approx}}{\stackrel{\mathrm{(A1),(A3)}}{\approx}} (ax+u+by+u) \parcomp (cz+v+dw+v) \\
    \quad \fakewidth{\stackrel{\mathrm{(SP1)}}{\approx}}{\stackrel{\mathrm{(A1),(A3)}}{\approx}}
    (ax+u)\parcomp(cz+v+dw+v)+(by+u)\parcomp(cz+v+dw+v)\\ \qquad\qquad\qquad\mbox{}+(ax+u+by+u)\parcomp(cz+v)+ (ax+u+by+u)\parcomp(dw+v)   \\
    \quad\fakewidth{\stackrel{\mathrm{(A1),(A3)}}{\approx}}{\stackrel{\mathrm{(A1),(A3)}}{\approx}}
    (ax+u)\parcomp(cz+dw+v)+(by+u)\parcomp(cz+dw+v)\\
    \qquad\qquad\qquad\mbox{}+(ax+by+u)\parcomp(cz+v)+ (ax+by+u)\parcomp(dw+v).
  \end{gather*}

Finally, for CSP2 we have the following equational derivation:
  \begin{gather*}
     ax \parcomp (by+cz+w) \\
    \quad  \fakewidth{\stackrel{\mathrm{(A1),(A3)}}{\approx}}{\stackrel{\mathrm{(A1),(A3)}}{\approx}} ax \parcomp (by+w+cz+w) \\
    \quad \fakewidth{\stackrel{\mathrm{(SP2)}}{\approx}}{\stackrel{\mathrm{(A1),(A3)}}{\approx}}
    a(x\parcomp (by+w+cz+w))+ax\parcomp (by+w) + ax \parcomp (cz+w) \\
    \quad\fakewidth{\stackrel{\mathrm{(A1),(A3)}}{\approx}}{\stackrel{\mathrm{(A1),(A3)}}{\approx}}
    a(x\parcomp (by+cz+w))+ax\parcomp(by+w) + ax\parcomp (cz+w). \qedhere
  \end{gather*}
\end{proof}
%-----------------------------------------------------

\begin{prop}
[$\E_\s$ elimination]
\label{prop:s_elim}
  For every closed $\bccsp{}$ term $p$ there exists a closed \emph{BCCSP} term $q$ such that $\E_\s \vdash p\approx q$.
\end{prop}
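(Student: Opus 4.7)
The plan is to obtain the elimination result for $\E_\s$ as an immediate corollary of the elimination result for $\E_\cs$ (Proposition~\ref{prop:CSelimination}), using Lemma~\ref{lem:ES_ECS}. The key observation is that $\E_\cs$ and $\E_\s$ share the common subsystem $\E_1 \cup \{\text{EL1}\}$, so the only axioms of $\E_\cs$ not literally present in $\E_\s$ are CS, CSP1 and CSP2, and these are precisely the three axioms that Lemma~\ref{lem:ES_ECS} shows to be derivable from $\E_\s$.

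Concretely, I would argue as follows. Fix a closed $\bccsp$ term $p$. By Proposition~\ref{prop:CSelimination} there exists a closed BCCSP term $q$ such that $\E_\cs \vdash p \approx q$. Consider the equational proof witnessing this derivation: each step is either a use of equational logic (the rules $(e_1)$--$(e_8)$ in Table~\ref{tab:equational_logic}) or the instantiation of an axiom in $\E_\cs = \E_1 \cup \{\text{CS}, \text{CSP1}, \text{CSP2}, \text{EL1}\}$. Every axiom in $\E_1 \cup \{\text{EL1}\}$ already belongs to $\E_\s$, and by Lemma~\ref{lem:ES_ECS} each of CS, CSP1 and CSP2 is derivable from $\E_\s$. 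Hence every application of an axiom from $\E_\cs$ in the original proof can be replaced by a (possibly multi-step) derivation from $\E_\s$, yielding $\E_\s \vdash p \approx q$, as required.

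There is no substantive obstacle here: once Lemma~\ref{lem:ES_ECS} is in place, the proposition is essentially a bookkeeping statement about the closure of provability under inclusion of axiom systems. The only thing worth emphasising in the write-up is that the BCCSP term $q$ produced by Proposition~\ref{prop:CSelimination} is the same witness used for $\E_\s$; no new elimination procedure has to be designed. Alternatively, one could reprove the statement directly by adapting the induction in Lemma~\ref{lem:cssimpelelim}, replacing applications of CSP1 and CSP2 by applications of the stronger distributivity axioms SP1 and SP2 (together with S), but this would merely duplicate work already done in the completed simulation case.
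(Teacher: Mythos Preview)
Your proposal is correct and follows essentially the same approach as the paper: the authors explicitly introduce Lemma~\ref{lem:ES_ECS} precisely so that Proposition~\ref{prop:s_elim} becomes an immediate corollary of Proposition~\ref{prop:CSelimination}, and indeed they state the proposition without further proof.
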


\begin{rem}
\label{rmk:non_derivable_RS}
As we showed in Lemma~\ref{lem:ES_ECS}, the axiom system $\E_\s$ proves all the equations in $\E_\cs$. 
A natural question at this point is whether all the equations in $\E_\rs$ can be derived from $\E_\cs$.
Indeed, that result would allow one to infer Proposition~\ref{prop:CSelimination} (the elimination result for completed simulation equivalence) from Proposition~\ref{prop:rs_elim} (the elimination result for ready simulation equivalence). 
The answer is negative, as it is not possible to derive EL2 from $\E_\cs$ .
To prove this claim, we have used Mace4~\cite{prover9-mace4} to generate a model for $\E_\cs$ in which EL2 does not hold (if there are at least two actions). 
The code can be found in Appendix~\ref{app:Mace4_cs}.
Below, we only present the model and show that it does not satisfy EL2.
The verification that the model indeed satisfies the axioms of $\E_\cs$ is lengthy and tedious.
We used the mCRL2 toolset~\cite{GM14,BGKLNVWWW19} to double check the correctness of the model produced by Mace4.

\begin{table}[t!]  
\centering
\scalebox{0.9}{
\begin{tabular}{r|rrrrr}
$a$: & 0 & 1 & 2 & 3 & 4\\
\hline
   & 2 & 2 & 3 & 4 & 4
\end{tabular}
}
\hspace{.7cm}
\scalebox{0.9}{
\begin{tabular}{r|rrrrr}
$b$: & 0 & 1 & 2 & 3 & 4\\
\hline
   & 3 & 2 & 3 & 4 & 4
\end{tabular}
} \\[.3cm] %\hspace{.5cm}
\scalebox{0.9}{
\begin{tabular}{r|rrrrr}
$\parcomp$: & 0 & 1 & 2 & 3 & 4\\
\hline
    0 & 0 & 1 & 2 & 3 & 4 \\
    1 & 1 & 0 & 2 & 1 & 2 \\
    2 & 2 & 2 & 3 & 4 & 4 \\
    3 & 3 & 1 & 4 & 4 & 4 \\
    4 & 4 & 2 & 4 & 4 & 4
\end{tabular} 
}
\hspace{.7cm}
\scalebox{0.9}{
\begin{tabular}{r|rrrrr}
$+$: & 0 & 1 & 2 & 3 & 4\\
\hline
    0 & 0 & 1 & 2 & 3 & 4 \\
    1 & 1 & 1 & 2 & 3 & 4 \\
    2 & 2 & 2 & 2 & 4 & 4 \\
    3 & 3 & 3 & 4 & 3 & 4 \\
    4 & 4 & 4 & 4 & 4 & 4
\end{tabular}
}
\\[.3cm]
\caption{\label{tab:model} A model for $\E_\cs$ and $\E_\ctr$.}
\end{table}

Let $\Act = \{a, b\}$, and consider the model with carrier set $\{0, 1, 2, 3, 4\}$ and the operations defined in Table~\ref{tab:model}. 
Briefly, each table reports, in cell $i,j$, the result of an application of the operator in the top left corner to the $i$-th element from the leftmost column and the $j$-th element from the top row.
Clearly, as the prefixing operators $a$ and $b$ are unary, they are applied only to the elements in the top row.
So, for instance, $a(0) = 2$, $b(0) = 3$, and $\mathbin{\|}(2, 3) = 4$.
Notice that the tables for parallel composition and nondeterministic choice are symmetric, as the two operators are commutative.
This model does not satisfy
\[
( ax + by ) \parcomp ( az + bw ) \approx 
a( x \parcomp ( az + bw ) ) + 
b( y \parcomp ( az + bw ) ) + 
a( ( ax + by ) \parcomp z ) + 
b( ( ax + by ) \parcomp w)
\]
for, e.g., the valuation $x=0$, $y=0$, $z=1$, and $w=1$.
In fact, the left-hand side term is mapped to $4$, while the right-hand term is mapped to $3$.
\end{rem}

In light of the results above, and those in~\cite{vG90} showing that $\E_0 \cup\{\mathrm{CS}\}$ and $\E_0 \cup\{\mathrm{S}\}$ are sound and ground-complete axiomatisations of BCCSP modulo $\sim_\cs$ and $\sim_\s$, respectively, we can infer that $\E_\cs$ and $\E_\s$ are ground-complete axiomatisations of $\bccsp$ modulo completed simulation equivalence and simulation equivalence, respectively.

\begin{thm}
[Soundness and completeness of $\E_\cs$ and $\E_\s$]
\label{thm:final_CS_S}
Let $\mathtt{X} \in \{\cs,\s\}$.
The axiom system $\E_{\mathtt{X}}$ is a sound, ground-complete axiomatisation of $\bccsp$ modulo $\sim_{\mathtt{X}}$, i.e., $p \sim_{\mathtt{X}} q$ if and only if $\E_{\mathtt{X}} \vdash p \approx q$.
\end{thm}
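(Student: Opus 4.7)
The plan is to follow the modular reduction strategy the paper has set up throughout this section: soundness and ground-completeness are handled in turn, and for each $\mathtt{X} \in \{\cs, \s\}$ the argument is the same, appealing to the appropriate elimination result and the known ground-completeness of the corresponding axiom system over BCCSP from \cite{vG90}.

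For soundness, I would verify each axiom of $\E_\mathtt{X}$ individually. The axioms in $\E_1$ are sound modulo every equivalence in the spectrum, and the axiom CS (respectively S) is already known to be sound modulo $\sim_\cs$ (respectively $\sim_\s$) from \cite{vG90}. The axiom EL1 is a sound instance of the expansion law modulo bisimilarity, hence sound modulo any coarser equivalence in the spectrum. The main work is to verify the new distributivity axioms: CSP1 and CSP2 modulo $\sim_\cs$, and SP1 and SP2 modulo $\sim_\s$. For simulation equivalence, soundness of SP1 and SP2 is witnessed by explicit simulation relations between the two sides for every closed substitution; any transition of one side is matched using the retained summand $a(x \parcomp (y+z))$ in SP2, or the appropriate summand of the right-hand side of SP1. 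For completed simulation, the additional syntactic restrictions in CSP1 and CSP2 ensure that no summand on the right-hand side becomes a parallel composition with a $\nil$ argument, so the sets of initial actions — and thus the completedness of traces — are preserved under arbitrary closed substitutions.

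For ground-completeness, I would proceed as follows. Let $p, q$ be closed $\bccsp$ terms with $p \sim_\mathtt{X} q$. By the elimination result — Proposition~\ref{prop:CSelimination} when $\mathtt{X} = \cs$ and Proposition~\ref{prop:s_elim} when $\mathtt{X} = \s$ — there exist closed BCCSP terms $p'$ and $q'$ such that $\E_\mathtt{X} \vdash p \approx p'$ and $\E_\mathtt{X} \vdash q \approx q'$. By the soundness of $\E_\mathtt{X}$ established in the first part, $p \sim_\mathtt{X} p'$ and $q \sim_\mathtt{X} q'$, whence $p' \sim_\mathtt{X} q'$ by transitivity of $\sim_\mathtt{X}$. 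Now apply the ground-completeness of $\E_0 \cup \{\mathrm{CS}\}$ (respectively $\E_0 \cup \{\mathrm{S}\}$) over BCCSP modulo $\sim_\cs$ (respectively $\sim_\s$) from \cite{vG90} to derive $p' \approx q'$ in that system. Since $\E_0 \cup \{\mathrm{CS}\} \subseteq \E_\cs$ and $\E_0 \cup \{\mathrm{S}\} \subseteq \E_\s$, this derivation lifts directly to $\E_\mathtt{X} \vdash p' \approx q'$, and transitivity of equational derivation yields $\E_\mathtt{X} \vdash p \approx q$.

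The only real obstacle is the case analysis required for soundness of the distributivity axioms, which, although conceptually routine, involves writing down suitable (completed) simulations between the two sides of CSP1, CSP2, SP1 and SP2 under every closed substitution. Once that bookkeeping is done, the completeness direction is a purely mechanical lift of the known BCCSP results via the elimination propositions already established in this section.
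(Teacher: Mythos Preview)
Your proposal is correct and follows essentially the same approach as the paper: soundness is argued by checking the individual axioms (the paper does this informally in the surrounding discussion, noting in particular that CSP1 and CSP2 never produce a $\nil$ argument of $\parcomp$), and ground-completeness is obtained by combining the elimination results (Propositions~\ref{prop:CSelimination} and~\ref{prop:s_elim}) with the known ground-completeness of $\E_0\cup\{\mathrm{CS}\}$ and $\E_0\cup\{\mathrm{S}\}$ over BCCSP from \cite{vG90}. The paper states this as a one-line inference rather than writing out the transitivity chain you spell out, but the argument is the same.
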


At the end of Section~\ref{sec:ready_simulation} we noticed that the size of $\E_\rs$ is exponential in $\Act$.
Now, as a final remark, we observe that the size of $\E_\cs$ is polynomial in $|\Act|$, and the size of $\E_\s$ is linear in $|\Act|$. 
In detail, $\E_\cs$ contains $|\Act|^2$ equations that are instances of CS, $|\Act|^4$ equations that are instances of CSP1, and $|\Act|^3$ equations arising from (CSP2).
On the other hand, in $\E_\s$, the axiom schemata S and SP2 yield $|\Act|$ equations each, whereas SP1 is just one equation.

%==============================================
% sec - linear semantics
%=============================================

\section{Linear semantics: from ready traces to traces}
\label{sec:linear}

We continue our journey in the spectrum by moving to the linear-time semantics.
In this section we consider trace semantics and all of its decorated versions, and we provide a finite, ground-complete axiomatisation for each of them (see Table~\ref{tab:axioms_linear}).

\begin{table}
\centering
\begin{tabular}{l}
\hline \\[-.3cm]
\, \scalebox{0.85}{(RT)} \; $a\left(\sum_{i=1}^{|\Act|}(b_i x_i + b_i y_i) + z\right) \approx a\left(\sum_{i=1}^{|\Act|}b_i x_i + z\right) + a\left(\sum_{i=1}^{|\Act|}b_i y_i + z\right)$ \\
\\[-.3cm]
\, \scalebox{0.85}{(FP)} \; $(ax + ay + w) \parcomp z \approx (ax + w) \parcomp z + (ay + w) \parcomp z$ \\
\\[-.3cm]
\; $\E_\rtr = \E_1 \cup \{\textrm{RT}, \textrm{FP}, \textrm{EL2}\}$
\\[.05cm]
\hline
\hline
\\[-.35cm]
\, \scalebox{0.85}{(FT)} \; $ax + ay \approx ax + ay + a(x + y)$ \\
\\[-.3cm]
\; $\E_\ftr = \E_1 \cup \{\textrm{FT},\textrm{RS},\textrm{FP},\textrm{EL2}\}$
\\[.05cm]
\hline
\hline
\\[-.35cm]
\, \scalebox{0.85}{(R)} \; $a(bx + z) + a(by + w) \approx a(bx + by + z) + a(by + w)$ \\
\\[-.3cm]
\; $\E_\ready = \E_1 \cup \{\textrm{R, FP, EL2}\}$
\\[.05cm]
\hline
\hline
\\[-.35cm]
\, \scalebox{0.85}{(F)} \; $ax + a(y + z) \approx ax + a(x + y) + a(y + z)$ \\
\\[-.3cm]
\; $\E_\fail = \E_1 \cup \{\textrm{F, R, FP, EL2}\}$
\\[.1cm]
\hline
\\[-.2cm]
\end{tabular}
\caption{Additional axioms for ready (trace) and failure (trace) equivalences.}
\label{tab:axioms_linear}
\end{table}

From a technical point of view, we can split the results of this section into two parts: 
\begin{enumerate}
\item those for ready trace, failure trace, ready, and failures equivalence, and 
\item those for completed trace, and trace equivalence.
\end{enumerate}
In both parts we prove the elimination result only for the finest semantics, namely ready trace (Proposition~\ref{prop:rtr_elim}) and completed trace (Proposition~\ref{prop:ctr_elim}) respectively.
We then obtain the remaining elimination results by showing that all the axioms in $\E_{\mathtt{X}}$ are provable from $\E_{\mathtt{Y}}$, where $\mathtt{X}$ is finer than $\mathtt{Y}$ in the considered part.

%==============================================
%=================================================

\subsection{From ready traces to failures}

First we deal with the decorated trace semantics based on the comparison of the failure and ready sets of processes.

\begin{defi}
[Readiness and failures equivalences]
\label{def:RT_F}
\begin{itemize}
\item % failures
A \emph{failure pair} of a process $p$ is a pair $(\alpha, X)$, with $\alpha \in \Act^{*}$ and $X \subseteq \Act$, such that $p \trans[\alpha] q$ for some process $q$ with $\init(q) \cap X = \emptyset$.
We denote by $\fail(p)$ the set of failure pairs of $p$.
Two processes $p$ and $q$ are \emph{failures equivalent}, denoted $p \fequiv q$, if $\fail(p) = \fail(q)$.
\item % readies
A \emph{ready pair} of a process $p$ is a pair $(\alpha, X)$, with $\alpha \in \Act^{*}$ and $X \subseteq \Act$, such that $p \trans[\alpha] q$ for some process $q$ with $\init(q) = X$.
We let $\ready(p)$ denote the set of ready pairs of $p$.
Two processes $p$ and $q$ are \emph{ready equivalent}, written $p \requiv q$, if $\ready(p) = \ready(q)$.
\item % failure trace
A \emph{failure trace} of a process $p$ is a sequence $X_0a_1X_1 \dots a_nX_n$, with $X_i \subseteq \Act$ and $a_i \in \Act$, such that there are $p_1, \dots, p_n \in \proc$ with $p = p_0 \trans[a_1] p_1 \trans[a_2] \dots \trans[a_n] p_n$ and $\init(p_i) \cap X_i = \emptyset$ for all $0 \leq i \leq n$.
We write $\ftr(p)$ for the set of failure traces of $p$.
Two processes $p$ and $q$ are \emph{failure trace equivalent}, denoted $p \ftrequiv q$, if $\ftr(p) = \ftr(q)$.
\item % ready trace
A \emph{ready trace} of a process $p$ is a sequence $X_0 a_1 X_1 \dots a_n X_n$, for $X_i \subseteq \Act$ and $a_i \in \Act$, such that there are $p_1, \dots p_n \in \proc$ with $p = p_0 \trans[a_1] p_1 \trans[a_2] \dots \trans[a_n] p_n$ and $\init(p_i) = X_i$ for all $0 \le i \le n$.
We write $\rtr(p)$ for the set of ready traces of $p$.
Two processes $p$ and $q$ are \emph{ready trace equivalent}, denoted $p \rtrequiv q$, if $\rtr(p) = \rtr(q)$.
\end{itemize}
\end{defi}

We consider first the finest equivalence among those in Definition~\ref{def:RT_F}, namely ready trace equivalence.
This can be considered as the linear counterpart of ready simulation: we focus on the current execution of the process and we require that each step is mimicked by reaching processes having the same sets of initial actions.
Interestingly, we can find a similar correlation between the axioms characterising the distributivity of $\parcomp$ over $+$ modulo the two semantics.
Consider axiom FP in Table~\ref{tab:axioms_linear}.
We can see this axiom as the \emph{linear} counterpart of RSP1: since in the linear semantics we are interested only in the current execution of a process, we can characterise the distributivity of $\parcomp$ over $+$ by treating the two arguments of $\parcomp$ independently from one another.
To obtain the elimination result for $\sim_\rtr$ we do not need to introduce the linear counterpart of axiom RSP2.
In fact, FP imposes constraints on the form of only one argument of $\parcomp$.
Hence, it is possible to use it to reduce any process of the form $(\sum_{i \in I} p_i) \parcomp (\sum_{j \in J} p_j)$ into a sum of processes to which EL2 can be applied.
We can in fact prove that the axioms in the system $\E_\rtr = \E_1 \cup \{\mathrm{RT,FP,EL2}\}$ are sufficient to eliminate all occurrences of $\parcomp$ from closed $\bccsp$ terms.

\begin{lem}
\label{lem:rtr_elim}
For all closed \emph{BCCSP} terms $p,q$ there exists a closed \emph{BCCSP} term $r$ such that $\E_\rtr \vdash p \parcomp q \approx r$.
\end{lem}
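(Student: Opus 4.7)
The plan is to mimic the inductive scheme used in Lemma~\ref{lem:rsimpelelim} and Lemma~\ref{lem:cssimpelelim}, proceeding by induction on $\size(p) + \size(q)$. By Remark~\ref{rmk:form_bccsp} I would assume, without loss of generality, that $p = \sum_{i \in I} a_i p_i$ and $q = \sum_{j \in J} b_j q_j$, and then distinguish cases on the shape of these two summations. If $I = \emptyset$ or $J = \emptyset$, then $p \parcomp q$ reduces to either $p$ or $q$ via P0 and P1, and the required closed BCCSP term is immediate.

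In the inductive step, the key idea is to use FP to reduce to a situation in which the initial actions occurring on each side of $\parcomp$ are pairwise distinct, at which point the side condition of EL2 is satisfied. Concretely, whenever there are distinct $i_0, i_1 \in I$ with $a_{i_0} = a_{i_1}$, I would apply FP with $x := p_{i_0}$, $y := p_{i_1}$, $w := \sum_{i \in I \sdiff \{i_0, i_1\}} a_i p_i$, and $z := q$ to rewrite $p \parcomp q$ as
\[
\Bigl(a_{i_0} p_{i_0} + \sum_{i \in I \sdiff \{i_0,i_1\}} a_i p_i\Bigr) \parcomp q + \Bigl(a_{i_1} p_{i_1} + \sum_{i \in I \sdiff \{i_0,i_1\}} a_i p_i\Bigr) \parcomp q,
\]
and then appeal to the induction hypothesis on each summand, noting that both left operands are strictly smaller than $p$ so that $\size(p) + \size(q)$ strictly decreases. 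When the duplication occurs in $J$ instead, a prior application of P1 swaps the two arguments and reduces the problem to the previous situation.

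Once the initial actions in both arguments are pairwise distinct, the side condition of EL2 is met, and a single application yields
\[
p \parcomp q \approx \sum_{i \in I} a_i (p_i \parcomp q) + \sum_{j \in J} b_j (p \parcomp q_j).
\]
Since $\size(p_i \parcomp q) < \size(p \parcomp q)$ and $\size(p \parcomp q_j) < \size(p \parcomp q)$ for all $i \in I$ and $j \in J$, the induction hypothesis supplies closed BCCSP terms $r_i$ and $r_j'$ with $p_i \parcomp q \approx r_i$ and $p \parcomp q_j \approx r_j'$, and assembling these gives the required closed BCCSP term $\sum_{i \in I} a_i r_i + \sum_{j \in J} b_j r_j'$.

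I do not anticipate any serious obstacle, since FP is considerably more liberal than RSP1 and RSP2: because it constrains only one argument of $\parcomp$, the subcase proliferation observed in the proof of Lemma~\ref{lem:rsimpelelim} collapses to the single normalisation step sketched above. The only points requiring some bookkeeping are the swap via P1 when the duplicated initial actions appear on the right of $\parcomp$, and the verification that every application of FP strictly reduces $\size(p) + \size(q)$, which is transparent because FP removes at least one summand from the left argument in each resulting parallel composition.
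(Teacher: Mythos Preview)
Your proposal is correct and follows essentially the same approach as the paper: induction on $\size(p)+\size(q)$, using FP (together with P1 for the right argument) to eliminate duplicated initial actions, and then applying EL2 once both sides have pairwise distinct initials. The paper organises the argument by an explicit case split on $|I|$ and $|J|$, but this merely unfolds the same two-step normalisation you describe.
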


%--------------------------------------------------------
\begin{proof}
The proof proceeds by induction on the total number of operator symbols occurring in $p$ and $q$ together, not counting parentheses. 
First of all we notice that, since $p$ and $q$ are closed BCCSP terms, we can assume that $p=\sum_{i\in I} a_ip_i$ and $q=\sum_{j\in J}b_j q_j$
(see Remark~\ref{rmk:form_bccsp}).
  
We proceed by a case analysis according to the cardinalities of the sets $I$ and $J$.

\begin{enumerate}
\item Case ${|I|} = 0$ or ${|J|}=0$.
In case that ${|J|}=0$, i.e., $J=\emptyset$, then $q=\nil$, so $\E_\rtr \vdash p \parcomp q \approx p$ by P0, and $p$ is the required closed BCCSP term. 
Similarly, if ${|I|}=0$, i.e., $I=\emptyset$, then $p=\nil$, so $\E_\rtr \vdash p \parcomp q \approx q \parcomp p \approx q$ by P0 and P1 in Table~\ref{tab:basic-axioms}, and $q$ is the required closed BCCSP term.
%%%
  
\item Case ${|I|}={|J|}=1$.
Let $I=\{i_0\}$ and $J=\{j_0\}$, then
\[
p \parcomp q \stackrel{\text{(EL2)}}{\approx} 
a_{i_0} (p_{i_0} \parcomp q) + b_{j_0} (p \parcomp q_{j_0}).
\]
Since both $p_{i_0}$ and $q_{j_0}$ have fewer symbols than $p$ and $q$, respectively, by the inductive hypothesis there exist closed BCCSP terms $r_{i_0}$ and $r_{j_0}$ such that $\E_\rtr \vdash p_{i_0} \parcomp q \approx r_{i_0}$ and $\E_\rtr \vdash p\parcomp q_{j_0} \approx r_{j_0}$.
It follows that $\E_\rtr \vdash p \parcomp q \approx a_{i_0} r_{i_0} + b_{j_0} r_{j_0}$, where $a_{i_0} r_{i_0} + b_{j_0} r_{j_0}$ is a closed BCCSP term.
%%%

\item Case ${|I|}=1$ and $|J|>1$.
We can then assume that $I=\{i_0\}$ and there exist $j_0,j_1\in J$ such that $j_0\neq j_1$.
We can now distinguish two cases, according to whether $b_{j_0} = b_{j_1}$ for some $j_0,j_1 \in J$ with $j_0 \neq j_1$, or not.
\begin{itemize}
\item {\sc There are some $j_0,j_1 \in J$ such that $j_0 \neq j_1$ and $b_{j_0} = b_{j_1}$.}

In this case we have that 
\[
p \parcomp q 
\; \stackrel{\text{(P1)}}{\approx} \;  
\left(\sum_{j \in J} b_j q_j\right) \parcomp p 
\; \stackrel{\text{(FP)}}{\approx} \; 
\left(\sum_{j \in J\sdiff{\{j_0\}}} b_j q_j \right) \parcomp p 
+ \left(\sum_{j \in J\sdiff{\{j_1\}}} b_j q_j \right) \parcomp p.
\]
Since both $\sum_{j \in J\sdiff{\{j_0\}}} b_j q_j$ and $\sum_{j \in J\sdiff{\{j_1\}}} b_j q_j$ have fewer operator symbols than $q$, by the inductive hypothesis there exist closed BCCSP terms $r_{j_0}$ and $r_{j_1}$ such that 
\[
\E_\rtr \vdash \left(\sum_{j \in J\sdiff{\{j_0\}}} b_j q_j \right) \parcomp p \approx r_{j_0} 
\quad \text{ and } \quad
\E_\rtr \vdash \left(\sum_{j \in J\sdiff{\{j_1\}}} b_j q_j \right) \parcomp p \approx r_{j_1}.
\]
Therefore, we get that $\E_\rtr \vdash p \parcomp q \approx r_{j_0} + r_{j_1}$, where $r_{j_0} + r_{j_1}$ is a closed BCCSP term.

\item {\sc For all $j_0,j_1 \in J$ such that $j_0 \neq j_1$ it holds that $b_{j_0} \neq b_{j_1}$.}
In this case we have that
\[
p \parcomp q \stackrel{\text{(EL2)}}{\approx} 
a_{i_0} (p_{i_0} \parcomp q) + \sum_{j \in J} b_j (p \parcomp q_j),
\]
where $p_{i_0}$ has fewer operator symbols than $p$ and each $q_j$ has fewer operator symbols than $q$.
Hence, by the inductive hypothesis, we obtain that there is a closed BCCSP term $r_{i_0}$ such that
\[
\E_\rtr \vdash p_{i_0} \parcomp q \approx r_{i_0},
\]
and for each $j \in J$ there is a closed BCCSP term $r_j$ such that 
\[
\E_\rtr \vdash p \parcomp q_j \approx r_j.
\]
We can thus conclude that $\E_\rtr \vdash p \parcomp q \approx a_{i_0} r_{i_0} + \sum_{j \in J} b_j r_j$, where the right hand side of the equation is a closed BCCSP term.
\end{itemize}
%%%

\item Case $|I|>1$ and ${|J|}=1$.
The proof is similar as in the previous case, with an additional initial application of axiom P1 in Table~\ref{tab:basic-axioms}.
%%%

\item Case $|I|,|J|>1$.
In this case there exist $i_0,i_1\in I$ with $i_0\neq i_1$ and $j_0,j_1\in J$ with $j_0\neq j_1$.
The proof follows the same reasoning used in the case ${|I|} = 1$ and ${|J|} > 1$;
we distinguish three cases:
\begin{itemize}
\item {\sc There are $i_0,i_1 \in I$ such that $i_0 \neq i_1$ and $a_{i_0} = a_{i_1}$.}

In this case we have that 
\[
p \parcomp q \stackrel{\text{(FP)}}{\approx}
\left( \sum_{i \in I\sdiff{\{i_0\}}} a_i p_i \right) \parcomp q +
\left( \sum_{i \in I\sdiff{\{i_1\}}} a_i p_i \right) \parcomp q.
\]
As both $\sum_{i \in I\sdiff{\{i_0\}}} a_i p_i$ and $\sum_{i \in I\sdiff{\{i_1\}}} a_i p_i$ have fewer operator symbols than $p$, by the inductive hypothesis we get that there are closed BCSSP terms $r_{i_0}$ and $r_{i_1}$ such that
\[
\E_\rtr \vdash
\left( \sum_{i \in I\sdiff{\{i_0\}}} a_i p_i \right) \parcomp q \approx r_{i_0}
\quad \text{ and } \quad
\E_\rtr \vdash
\left( \sum_{i \in I\sdiff{\{i_1\}}} a_i p_i \right) \parcomp q \approx r_{i_1}.
\]
Consequently, we get that $\E_\rtr \vdash p \parcomp q \approx r_{i_0} + r_{i_1}$, where $r_{i_0} + r_{i_1}$ is a closed BCCSP term.

\item{\sc There are some $j_0,j_1 \in J$ such that $j_0 \neq j_1$ and $b_{j_0} = b_{j_1}$.}

This case can be obtained from the previous one, by an additional initial application of axiom P1 in Table~\ref{tab:basic-axioms}.

\item{\sc For all $i_0,i_1 \in I$ such that $i_0 \neq i_1$ we have that $a_{i_0} \neq a_{i_1}$ and for all $j_0,j_1 \in J$ such that $j_0 \neq j_1$ we have that $b_{j_0} \neq b_{j_1}$.}

In this case we have that 
\[
p \parcomp q \stackrel{\text{(EL2)}}{\approx}
\sum_{i \in I} a_i (p_i \parcomp q) + \sum_{j \in J} b_j (p \parcomp q_j).
\]
Since for each $i \in I$, the term $p_i$ has fewer operator symbols than $p$, by the inductive hypothesis we get that there is a closed BCCSP term $r_i$ such that 
\[
\E_\rtr \vdash p_i \parcomp q \approx r_i.
\]
Similarly, for each $j \in J$, the term $q_j$ has fewer operator symbols than $q$, so that by the inductive hypothesis we get that there is a closed BCCSP term $r_j$ such that 
\[
\E_\rtr \vdash p \parcomp q_j \approx r_j.
\]
Therefore, we can conclude that $\E_\rtr \vdash p \parcomp q \approx \sum_{i \in I} a_i r_i + \sum_{j \in J} b_j r_j$, where the right hand side of the equation is a closed BCCSP term. \qedhere
\end{itemize}
\end{enumerate}
\end{proof}
%------------------------------------------

\begin{prop}[$\E_\rtr$ elimination]
\label{prop:rtr_elim}
For every closed $\bccsp$ term $p$ there is a closed \emph{BCCSP} term $q$ such that $\E_\rtr \vdash p \approx q$.
\end{prop}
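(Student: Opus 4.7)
The plan is to mirror the structure of the proofs already given for Proposition~\ref{prop:rs_elim} and Proposition~\ref{prop:CSelimination}: proceed by structural induction on the closed $\bccsp$ term $p$, with the parallel composition case discharged immediately by the preceding Lemma~\ref{lem:rtr_elim}. Since $p$ is closed, there is no variable case to worry about.

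For the base case, if $p = \nil$ then $p$ is itself a BCCSP term and reflexivity (rule $e_1$) suffices. For the prefix case $p = a.p_1$, the induction hypothesis gives a closed BCCSP term $q_1$ with $\E_\rtr \vdash p_1 \approx q_1$, and then closure under prefix contexts (rule $e_5$) yields $\E_\rtr \vdash a.p_1 \approx a.q_1$ with $a.q_1$ a closed BCCSP term. The nondeterministic choice case $p = p_1 + p_2$ is handled analogously: two applications of the induction hypothesis produce closed BCCSP terms $q_1, q_2$ with $\E_\rtr \vdash p_i \approx q_i$ for $i=1,2$, and then closure under sum contexts (rule $e_6$) gives $\E_\rtr \vdash p_1 + p_2 \approx q_1 + q_2$, which is a closed BCCSP term.

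The only case that requires real work is $p = p_1 \parcomp p_2$. Here, by induction, there exist closed BCCSP terms $q_1, q_2$ with $\E_\rtr \vdash p_i \approx q_i$; closure under parallel contexts (rule $e_8$) then yields $\E_\rtr \vdash p_1 \parcomp p_2 \approx q_1 \parcomp q_2$. At this point the terms being composed in parallel are BCCSP terms, so Lemma~\ref{lem:rtr_elim} applies and delivers a closed BCCSP term $r$ with $\E_\rtr \vdash q_1 \parcomp q_2 \approx r$. Transitivity (rule $e_3$) concludes the case.

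There is no significant obstacle: the entire non-trivial content of the result has been packaged into Lemma~\ref{lem:rtr_elim}, and the induction above simply propagates that fact through the remaining operators of $\bccsp$.
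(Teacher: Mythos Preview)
Your proposal is correct and follows essentially the same approach as the paper: a structural induction on $p$ in which the parallel composition case is discharged by Lemma~\ref{lem:rtr_elim}. The paper's own proof is the one-line remark that the result is ``straightforward by induction on the structure of $p$, using Lemma~\ref{lem:rtr_elim} in the case that $p$ is of the form $p_1 \parcomp p_2$'', which is exactly what you have spelled out.
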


%----------------------------------------------------
\begin{proof}
The proof follows by an easy induction on the structure of $p$, using Lemma~\ref{lem:rtr_elim} in the case that $p$ is of the form $p_1 \parcomp p_2$ for some $p_1$ and $p_2$.
\end{proof}
%---------------------------------------------------

\begin{rem}
\label{rmk:non_derivable_RS_2}
Similarly to the case of completed simulation (cf.\ Remark~\ref{rmk:non_derivable_RS}), the reason why we propose to prove directly the elimination result for ready trace equivalence is that axiom RSP2 cannot be derived from those in $\E_\rtr$, even though all its closed instantiations can be derived.
Table~\ref{tab:model_rt} presents a model, found using Mace4, of $\E_\rtr$ from which RSP2 is not derivable: e.g., it does not satisfy
\[
ax \parcomp ( ay + az + w ) = 
a ( x \parcomp ( ay + az + w ) ) + 
ax \parcomp ( ay + w ) + 
ax \parcomp ( az + w ) )
\]
for $x = 0$, $y=0$, $z=0$ and $w = 1$.

\begin{table}[t!]  
\centering % size 3
\scalebox{0.9}{
\begin{tabular}{r|rrr}
a: & 0 & 1 & 2\\
\hline
   & 0 & 2 & 2
\end{tabular} 
}
\hspace{.5cm}
\scalebox{0.9}{
\begin{tabular}{r|rrr}
b: & 0 & 1 & 2\\
\hline
   & 0 & 0 & 0
\end{tabular} 
}
\hspace{.5cm}
\scalebox{0.9}{
\begin{tabular}{r|rrr}
$\parcomp$: & 0 & 1 & 2\\
\hline
    0 & 0 & 1 & 2 \\
    1 & 1 & 0 & 1 \\
    2 & 2 & 1 & 2
\end{tabular} 
}
\hspace{.5cm}
\scalebox{0.9}{
\begin{tabular}{r|rrr}
$+$: & 0 & 1 & 2\\
\hline
    0 & 0 & 1 & 2 \\
    1 & 1 & 1 & 2 \\
    2 & 2 & 2 & 2
\end{tabular}
}
\\[.3cm]
\caption{\label{tab:model_rt} A model for $\E_\rtr$ and $\E_\ctr$.}
\end{table}

We refer the interested reader to Appendix~\ref{app:Mace4_rt} for the Mace4 code.
\end{rem}

Interestingly, axiom FP also characterises the distributivity of $\parcomp$ over $+$ modulo $\sim_\ftr,\sim_\ready$ and $\sim_\fail$.
Consider the axiom systems $\E_\ftr = \E_1 \cup \{\textrm{FT,RS,FP,EL2}\}$, $\E_\ready = \E_1 \cup \{\textrm{R,FP,EL2}\}$ and $\E_\fail = \E_1 \cup \{\textrm{F,R,FP,EL2}\}$.
The following derivability relations among them and $\E_\rtr$ are then easy to check.

\begin{lem}
\label{lem:RT_derivations}
\begin{enumerate}
\item %EFT_ERT
The axioms in the system $\E_\rtr$ are derivable from $\E_\ftr$, namely $\E_\ftr \vdash$ \emph{RT}.
\item % ER_ERT
The axioms in the system $\E_\rtr$ are derivable from $\E_\ready$, namely
$\E_\ready \vdash$ \emph{RT}.
\item % EF_EFT and EF_ER
The axioms in the system $\E_\ftr$ are derivable from $\E_\fail$, namely,
\begin{enumerate}
\item \label{lem:EF_EFT_1}
$\E_\fail \vdash$ \emph{FT}, and
\item \label{lem:EF_EFT_2}
$\E_\fail \vdash$ \emph{RS}.
\end{enumerate}
Moreover, also the axioms in the system $\E_\ready$ are derivable from $\E_\fail$.
\end{enumerate}
\end{lem}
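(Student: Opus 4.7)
The plan is to prove each of the three derivability claims by a short explicit equational derivation. Write $B := \sum_{i=1}^{|\Act|} b_ix_i$ and $B' := \sum_{i=1}^{|\Act|} b_iy_i$. Two of the three derivations use the device of duplicating a summand via A3 and then applying the ambient axiom in reverse with carefully chosen substitutions.

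\emph{Item 1: $\E_\ftr \vdash$ RT.} I plan to show that both sides of RT are provably equivalent to the same enlarged term $a(B+z) + a(B'+z) + a(B+B'+z)$ (together with common intermediate summands). On the right-hand side, FT applied with $x := B+z$ and $y := B'+z$ inserts the extra summand $a((B+z)+(B'+z))$, which A1, A2, A3 rewrite to $a(B+B'+z)$. On the left-hand side, I apply RS repeatedly: with $bx := b_1x_1$, $by := b_1y_1$ and the remaining summands as $z$, one application adds $a(b_1x_1 + b_2x_2 + b_2y_2 + \cdots + z)$ as a new summand; continuing inside the new summand with $bx := b_2x_2$, $by := b_2y_2$ adds a further summand in which $b_2y_2$ is also absent, and so on. After $|\Act|$ such steps $a(B+z)$ appears as a summand; a symmetric sweep produces $a(B'+z)$. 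The same RS-sweeps can be replayed on the $a(B+B'+z)$ produced by FT on the right-hand side, so the two sides reach identical saturated terms (up to A1 and A3, which collapses duplicate summands).

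\emph{Item 2: $\E_\ready \vdash$ RT.} Here the single axiom R must bridge both directions. The key point is that R is bidirectional: read in reverse, it strips $by$ from a first summand of the shape $a(bx+by+z)$ provided the second summand has the shape $a(by+w)$. I duplicate $a(B+B'+z)$ via A3 and apply reverse R iteratively, with $bx := b_ix_i$ and $by := b_iy_i$, to strip each $b_iy_i$ from the first copy; at every iteration the second copy is still $a(B+B'+z)$ and therefore contains the $b_iy_i$ needed in the pattern $a(by+w)$. After $|\Act|$ steps the first copy is $a(B+z)$. A single A1-swap makes the remaining $a(B+B'+z)$ the first summand; using $a(B+z)$ as the auxiliary branch, iterated reverse R, now with $bx := b_iy_i$ and $by := b_ix_i$, strips each $b_ix_i$ from the first summand, turning it into $a(B'+z)$. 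A final A1-swap yields the right-hand side of RT.

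\emph{Item 3: $\E_\fail$ derives FT, RS, and the axioms of $\E_\ready$.} For FT, instantiate F with $z := \nil$; A0 collapses $a(y+\nil)$ to $ay$, and A1 reorders the summands to give $ax + ay \approx ax + ay + a(x+y)$. For RS, duplicate $a(bx+by+z)$ via A3 and apply R with the natural assignment of $bx$, $by$, $z$ on the first copy and $w := bx+z$ on the second (so $a(by+w)$ matches the second copy up to A1 and A2); this rewrites the pair to $a(bx+z) + a(bx+by+z)$, and A3 together with A1 yields RS. Finally, the axioms of $\E_\ready$ other than those in $\E_1$ are $\{R, \mathrm{FP}, \mathrm{EL2}\}$, all of which already belong to $\E_\fail$, so the remaining sub-claim is immediate.

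The principal obstacle is the book-keeping in Item 2: across $2|\Act|$ reverse-R applications, I must ensure that at every step the auxiliary branch still supplies the summand being stripped from the target branch, and that $z$ and $w$ are chosen so that the pattern of R matches up to A1 and A2. No individual step is subtle, but the ordering and indexing have to be threaded carefully.
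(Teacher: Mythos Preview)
Your proposal is correct and follows essentially the same approach as the paper. The only organisational difference is that the paper packages the iterated applications as explicit auxiliary equations proved by induction---for Item~1 it first derives $a(bx+by+z)\approx a(bx+z)+a(by+z)$ from RS and FT together (avoiding the accumulation of intermediate summands you carry along), and for Item~2 it abstracts your reverse-R sweep into the lemma $a(\sum b_ix_i+z)+a(\sum b_iy_i+w)\approx a(\sum(b_ix_i+b_iy_i)+z)+a(\sum b_iy_i+w)$---but the underlying derivations are the same as yours.
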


%------------------------------------------------------
\begin{proof}
\begin{enumerate}
\item 
To simplify notation, we introduce some abbreviations: let $|\Act| = n$, and for each $i \in \{1,\dots,n\}$ let $I_i = \{1,\dots,n\}\sdiff{\{i\}}$.

First of all, we notice that if we can prove that 
\begin{align*}
\E_\ftr \vdash{} & 
a \left( \sum_{i = 1}^n b_i x_i + z \right) + 
a \left( \sum_{i=1}^n b_i y_i + z \right)
\approx
a \left( \sum_{i = 1}^n b_i x_i + z \right) +
a \left( \sum_{i=1}^n b_i y_i + z \right) + \\
& \qquad\qquad\qquad\qquad\qquad\qquad\qquad\qquad\quad
+ a \left( \sum_{i =1}^n \big(b_i x_i + b_i y_i\big) + z \right)
\numberthis\label{axiom:eft1}
\\
\E_\ftr \vdash{} &
a \left( \sum_{i =1}^n \big(b_i x_i + b_i y_i\big) + z \right)
\approx
a \left( \sum_{i =1}^n \big(b_i x_i + b_i y_i\big) + z \right) +
a \left( \sum_{i =1}^n b_i x_i + z \right)
\numberthis\label{axiom:eft2},
\end{align*}
then $\E_\ftr \vdash$ RT, easily follows as
\begin{align*}
& a \left( \sum_{i =1}^n \big(b_i x_i + b_i y_i\big) + z \right) \\
\stackrel{\eqref{axiom:eft2}}{\approx{}}\qquad &
a \left( \sum_{i =1}^n \big(b_i x_i + b_i y_i\big) + z \right) +
a \left( \sum_{i =1}^n b_i x_i + z \right) \\
\stackrel{\textrm{(A1),(A3)},\eqref{axiom:eft2}}{\approx{}} &
a \left( \sum_{i =1}^n \big(b_i x_i + b_i y_i\big) + z \right) +
a \left( \sum_{i =1}^n b_i x_i + z \right) +
a \left( \sum_{i =1}^n b_i y_i + z \right) \\
\stackrel{(A1),\eqref{axiom:eft1}}{\approx{}}\quad &
a \left( \sum_{i =1}^n b_i x_i + z \right) +
a \left( \sum_{i =1}^n b_i y_i + z \right).
\end{align*}
We now proceed to prove the Equations~\eqref{axiom:eft1} and~\eqref{axiom:eft2} separately.
\begin{enumerate}
\item Equation~\eqref{axiom:eft1} directly follows by  
\begin{align*}
& a \left( \sum_{i = 1}^n b_i x_i + z \right) + 
a \left( \sum_{i=1}^n b_i y_i + z \right)
\\
\stackrel{\textrm{(FT)}}{\approx{}} \quad &
a \left( \sum_{i = 1}^n b_i x_i + z \right) + 
a \left( \sum_{i=1}^n b_i y_i + z \right) +
a \left( \Big(\sum_{i =1}^n b_i x_i + z \Big) + \Big(\sum_{i=1}^n b_i y_i + z \Big) \right) \\
\stackrel{\textrm{(A1),(A2)}}{\approx{}} &
a \left( \sum_{i = 1}^n b_i x_i + z \right) + 
a \left( \sum_{i=1}^n b_i y_i + z \right) +
a \left( \sum_{i =1}^n \big(b_i x_i + b_i y_i\big) + z \right).
\end{align*}
%%%

\item Before proceeding to the proof of Equation~\eqref{axiom:eft2}, we notice that
\begin{align*}
a (bx + by + z) \stackrel{\textrm{(RS)}}{\approx{}} \quad & 
a (bx + by + z) + a (bx + z) \\
\stackrel{\textrm{(A1),(RS)}}{\approx{}} &
a (bx + by + z) + a (bx + z) + a (by + z) \\
\stackrel{\textrm{(FT),(A3)}}{\approx{}} &
a (bx + z) + a (by + z),
\end{align*}
so that
\begin{equation}
\label{axiom:eft3}
\E_\ftr \vdash a (bx + by + z) \approx a (bx + z) + a (by + z).
\end{equation}
Using Equation~\eqref{axiom:eft3}, we can establish Equation~\eqref{axiom:eft2} holds for all $n\geq 1$ with induction on $n$.

If $n=1$, then
\begin{eqnarray*}
& & a \left( \sum_{i = 1}^n (b_i x_i + b_i y_i) + z \right)\\
& \stackrel{\eqref{axiom:eft3}}{\approx}
  & a(b_1x_1+z)+a(b_1y_1+z) \\
& \stackrel{\textrm{(A1),(A2),(A3)}}{\approx{}}
  & a(b_1x_1+z)+a(b_1y_1+z)+a(b_1x_1+z)\\
& \stackrel{\eqref{axiom:eft3}}{\approx} &
  a \left( \sum_{i = 1}^n (b_i x_i + b_i y_i) + z \right) + a \left( \sum_{i = 1}^n b_i x_i + z \right)
\end{eqnarray*}
Let $n\geq 1$, and suppose that Equation~\eqref{axiom:eft2} holds (IH). Then
\begin{eqnarray*}
& &
  a \left( \sum_{i =1}^{n+1} \big(b_i x_i + b_i y_i\big) + z \right) \\
& \stackrel{\textrm{(A1)},\textrm{(A2)},\eqref{axiom:eft3}}{\approx} &
  a \left( \sum_{i =1}^n \big(b_i x_i + b_i y_i\big) + b_{n+1}x_{n+1} + z \right) \\
 & & 
+ a \left( \sum_{i =1}^n \big(b_i x_i + b_i y_i\big) + b_{n+1}y_{n+1}+ z \right) \\
& \stackrel{\textrm{(IH)}}{\approx} &
  a \left( \sum_{i =1}^n \big(b_i x_i + b_i y_i\big) + b_{n+1}x_{n+1} + z \right)
+ a \left( \sum_{i =1}^n b_i x_i + b_{n+1}x_{n+1} + z \right) \\
&&\qquad\qquad
\mbox{} + a \left( \sum_{i =1}^n \big(b_i x_i + b_i y_i\big) + b_{n+1}y_{n+1}+ z \right) \\
& \stackrel{\textrm{(A1)},\textrm{(A2)},\eqref{axiom:eft3}}{\approx} &
  a \left( \sum_{i =1}^{n+1} \big(b_i x_i + b_i y_i\big) + z \right)
+ a \left( \sum_{i =1}^{n+1} b_i x_i + z \right)
\end{eqnarray*}
We conclude that Equation~\eqref{axiom:eft1} holds for all $n\geq 1$, and hence, in particular, for $n=|\Act|$.
\end{enumerate}

\item 
To prove that $\E_\ready \vdash\textrm{RT}$, we first establish that for every $n\geq 1$ we have:
\begin{multline}\label{axiom:eft4}
\E_\ready \vdash
  a\left( \sum_{i=1}^n b_ix_i+z\right) + a\left( \sum_{i=1}^n b_iy_i+w\right) \\
    \approx
  a\left( \sum_{i=1}^n (b_ix_i+b_iy_i)+z\right)
    + a\left( \sum_{i=1}^n b_iy_i+w\right).
\end{multline}
We proceed by induction on $n\geq 1$.

If $n=1$, then
\begin{eqnarray*}
& &
  a\left( \sum_{i=1}^n b_ix_i+z\right) + a\left( \sum_{i=1}^n b_iy_i+w\right) \\
& \stackrel{\textrm{(R)}}{\approx} &
  a\left( \sum_{i=1}^n (b_ix_i + b_iy_i) +z\right) + a\left( \sum_{i=1}^n b_iy_i+w\right)
\end{eqnarray*}
Let $n\geq 1$, and suppose that Equation~\eqref{axiom:eft4} holds (IH).
Then
\begin{eqnarray*}
& &
  a\left( \sum_{i=1}^{n+1} b_ix_i+z\right) + a\left( \sum_{i=1}^{n+1} b_iy_i+w\right) \\
& \stackrel{\textrm{(A1)},\textrm{(A2)}}{\approx} &
  a\left( b_{n+1}x_{n+1} + \sum_{i=1}^{n} b_ix_i+z\right) + a\left(b_{n+1}y_{n+1} + \sum_{i=1}^{n} b_iy_i+w\right) \\
& \stackrel{\textrm{(R)}}{\approx} &
  a\left(b_{n+1}x_{n+1} + b_{n+1}y_{n+1} + \sum_{i=1}^{n} b_ix_i+z\right) + a\left(b_{n+1}y_{n+1} + \sum_{i=1}^{n} b_iy_i+w\right) \\
& \stackrel{\textrm{(A1)},\textrm{(A2)}}{\approx} &
  a\left(\sum_{i=1}^{n} b_ix_i+b_{n+1}x_{n+1} + b_{n+1}y_{n+1} + z\right) + a\left(\sum_{i=1}^{n} b_iy_i+b_{n+1}y_{n+1} + w\right) \\
& \stackrel{\textrm{(IH)}}{\approx} &
  a\left(\sum_{i=1}^{n} (b_ix_i+b_iy_i)+b_{n+1}x_{n+1} + b_{n+1}y_{n+1} + z\right) \\
  & & + a\left(\sum_{i=1}^{n} b_iy_i+b_{n+1}y_{n+1} + w\right) \\
& \stackrel{\textrm{(A1)},\textrm{(A2)}}{\approx} &
  a\left(\sum_{i=1}^{n+1} (b_ix_i+b_iy_i) + z\right) + a\left(\sum_{i=1}^{n+1} b_iy_i + w\right) \\
\end{eqnarray*}
We conclude that Equation~\eqref{axiom:eft4} holds for all $n\geq 1$.

We can now derive axiom (RT) as follows:
\begin{eqnarray*}
&&
  a\left( \sum_{i=1}^{|\Act|} (b_ix_i+ b_iy_i) +z\right) \\
& \stackrel{\textrm{(A3)}}{\approx} &
  a\left( \sum_{i=1}^{|\Act|} (b_ix_i+ b_iy_i) +z\right) + \left( \sum_{i=1}^{|\Act|} (b_ix_i+ b_iy_i) +z\right)  \\
& \stackrel{\textrm{(A1)},\textrm{(A2)}}{\approx} &
  a\left( \sum_{i=1}^{|\Act|} (b_ix_i+ b_iy_i) +z\right) + \left( \sum_{i=1}^{|\Act|} b_ix_i+ \left( \sum_{i=1}^{|\Act|} b_iy_i +z\right)\right)  \\
& \stackrel{\textrm{(A1)},\textrm{(A2)},\eqref{axiom:eft4}}{\approx} &
  a\left( \sum_{i=1}^{|\Act|} b_iy_i+z\right) + \left( \sum_{i=1}^{|\Act|} b_ix_i+ \left( \sum_{i=1}^{|\Act|} b_iy_i +z\right)\right)  \\
& \stackrel{\textrm{(A1)},\textrm{(A2)}}{\approx} &
  a\left( \sum_{i=1}^{|\Act|} (b_ix_i+b_iy_i)+z\right) + a\left( \sum_{i=1}^{|\Act|} b_iy_i+z\right) \\
& \stackrel{\eqref{axiom:eft4}}{\approx} &
  a\left( \sum_{i=1}^{|\Act|} b_ix_i+z\right) + a\left( \sum_{i=1}^{|\Act|} b_iy_i+z\right).
\end{eqnarray*}

\item
The second claim, namely that the axioms in $\E_\ready$ are derivable from $\E_\fail$, follows directly from $\E_\ready \subseteq \E_\fail$.

To prove the first claim, we start by showing that the axiom FT in Table~\ref{tab:axioms_linear} is derivable from the axiom system $\E_\fail$. 
\begin{align*}
ax + ay \stackrel{\textrm{(A0)}}{\approx{}}\quad &
ax + a(y + \nil) \\
\stackrel{\textrm{(F)}}{\approx{}}\quad &
ax + a(x+y) + a(y+\nil) \\
\stackrel{\textrm{(A0),(A1)}}{\approx{}} &
ax + ay + a(x+y).
\end{align*}

We now proceed to prove that the axiom RS in Table~\ref{tab:axioms_linear} is derivable from the axiom system $\E_\fail$.
\begin{align*}
a (bx + by + z) + a (bx + z) \stackrel{\textrm{(A1),(A2)}}{\approx{}} &
a (bx + z) + a (by + (bx + z)) \\
\stackrel{\textrm{(R)}}{\approx{}}\quad &
a (bx + by + z) + a (by + (bx + z)) \\
\stackrel{\textrm{(A2),(A3)}}{\approx{}} &
a (bx + by + z). \qedhere
\end{align*}
\end{enumerate}
\end{proof}
%----------------------------------------------------

The next proposition is then a corollary of Proposition~\ref{prop:rtr_elim} and Lemma~\ref{lem:RT_derivations}.

\begin{prop}
[$\E_\ftr$, $\E_\ready$, $\E_\fail$ elimination]
\label{prop:FT_F_elim}
Let $\mathtt{X} \in \{\ftr,\ready,\fail\}$.
For every $\bccsp$ term $p$ there is a closed \emph{BCCSP} term $q$ such that $\E_{\mathtt{X}} \vdash p \approx q$.
\end{prop}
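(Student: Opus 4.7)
The plan is to treat this as a direct corollary of Proposition~\ref{prop:rtr_elim} and Lemma~\ref{lem:RT_derivations}, with essentially no new equational work to do. The key observation is that each of the axiom systems $\E_\ftr$, $\E_\ready$, and $\E_\fail$ already contains $\E_1 \cup \{\mathrm{FP},\mathrm{EL2}\}$ by definition, so in order to conclude that $\E_{\mathtt{X}} \vdash e$ for every axiom $e \in \E_\rtr$, it suffices to verify that axiom RT is derivable from $\E_{\mathtt{X}}$.

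Specifically, I would argue as follows. For $\mathtt{X} = \ftr$, derivability of RT from $\E_\ftr$ is the content of Lemma~\ref{lem:RT_derivations}(1). For $\mathtt{X} = \ready$, derivability of RT from $\E_\ready$ is the content of Lemma~\ref{lem:RT_derivations}(2). For $\mathtt{X} = \fail$, by Lemma~\ref{lem:RT_derivations}(3) we have $\E_\fail \vdash \mathrm{FT}$ and $\E_\fail \vdash \mathrm{RS}$; combined with $\E_1, \mathrm{FP}, \mathrm{EL2} \subseteq \E_\fail$, this shows that every axiom of $\E_\ftr$ is derivable from $\E_\fail$, and hence, by the previous case, also RT is derivable from $\E_\fail$. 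Consequently, in all three cases, $\E_{\mathtt{X}}$ proves every axiom of $\E_\rtr$, so any equational derivation carried out in $\E_\rtr$ can be replayed in $\E_{\mathtt{X}}$ by replacing uses of axioms of $\E_\rtr$ with their derivations from $\E_{\mathtt{X}}$.

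With this in hand, the proof is immediate: given a closed $\bccsp$ term $p$, Proposition~\ref{prop:rtr_elim} yields a closed BCCSP term $q$ with $\E_\rtr \vdash p \approx q$. Replaying this derivation using the derivations of the axioms of $\E_\rtr$ from $\E_{\mathtt{X}}$ produces a proof that $\E_{\mathtt{X}} \vdash p \approx q$, as required.

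There is no real obstacle here, as all the genuine technical work—the combinatorial case analysis eliminating $\parcomp$ and the equational derivations of RT, FT, and RS—has already been done in Lemma~\ref{lem:rtr_elim} and Lemma~\ref{lem:RT_derivations}. The present proposition is a purely formal corollary that packages these results uniformly for the three semantics.
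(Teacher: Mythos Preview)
Your proposal is correct and follows exactly the approach the paper takes: the paper simply states that this proposition is a corollary of Proposition~\ref{prop:rtr_elim} and Lemma~\ref{lem:RT_derivations}, with no additional argument given. Your unpacking of why each $\E_{\mathtt{X}}$ proves all axioms of $\E_\rtr$ (in particular, handling the $\fail$ case by first deriving all of $\E_\ftr$) is precisely the intended reasoning.
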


In~\cite{BFN03} it was proved that, under the assumption that $\Act$ is finite, the axiom system $\E_0 \cup\{\textrm{RT}\}$ is a ground-complete axiomatisation of BCCSP modulo $\sim_\rtr$.
Moreover, it was also proved that $\E_0 \cup\{\textrm{FT,RS}\}$ is a ground-complete axiomatisation of BCCSP modulo $\sim_\ftr$.
The ground-completeness of $\E_0 \cup \{\mathrm{R}\}$, modulo $\sim_\ready$, and that of $\E_0 \cup \{\mathrm{F,R}\}$, modulo $\sim_\fail$, over BCCSP were proved in~\cite{vG90}.
Consequently, the soundness and ground-completeness of the proposed axioms systems can then be derived from the elimination results above and the completeness results given in~\cite{vG90,BFN03}.

\begin{thm}
[Soundness and completeness of $\E_\rtr$, $\E_\ftr$, $\E_\ready$ and $\E_\fail$]
\label{thm:final_RT_F}
Let $\mathtt{X} \in \{\rtr,\ftr,\ready,\fail\}$.
The axiom system $\E_{\mathtt{X}}$ is a sound, ground-complete axiomatisation of $\bccsp$ modulo $\sim_{\mathtt{X}}$, i.e., $p \sim_{\mathtt{X}} q$ if and only if $\E_{\mathtt{X}} \vdash p \approx q$.
\end{thm}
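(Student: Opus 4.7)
The plan is to prove soundness and ground-completeness separately, with both parts reducing to results already established in the excerpt or in the cited literature.

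For \emph{soundness}, it suffices to check that every axiom in $\E_\mathtt{X}$ is sound modulo $\sim_\mathtt{X}$; the rules of equational logic in Table~\ref{tab:equational_logic} then propagate soundness to arbitrary derivations. The axioms in $\E_1$ are sound modulo bisimilarity and hence modulo every coarser equivalence in the spectrum, so they are sound modulo each of the four semantics considered here. The characteristic BCCSP axioms RT, FT, RS, R, and F were shown sound in \cite{vG90,BFN03}. For EL2, soundness modulo ready simulation was already established in Theorem~\ref{thm:RS_soundness}, and since each $\sim_\mathtt{X}$ under consideration is coarser than $\sim_\rs$, EL2 is sound modulo each $\sim_\mathtt{X}$ as well. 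The only new axiom is FP: to see it is sound modulo $\sim_\rtr$ (and hence the three coarser equivalences in this section), one verifies by direct inspection of the SOS rules in Table~\ref{tab:semantics} that the initial ready sets of both sides coincide and that the ready traces of $(ax+ay+w)\parcomp z$ are obtained precisely by taking the union of those of $(ax+w)\parcomp z$ and $(ay+w)\parcomp z$, since any transition of the first component of the left-hand side must be matched by a transition in exactly one of the summands on the right-hand side.

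For \emph{ground-completeness}, fix $\mathtt{X} \in \{\rtr,\ftr,\ready,\fail\}$ and closed $\bccsp$ terms $p, q$ with $p \sim_\mathtt{X} q$. By Proposition~\ref{prop:rtr_elim} in case $\mathtt{X}=\rtr$, or Proposition~\ref{prop:FT_F_elim} in the remaining three cases, there exist closed BCCSP terms $p', q'$ such that $\E_\mathtt{X} \vdash p \approx p'$ and $\E_\mathtt{X} \vdash q \approx q'$. By soundness (just established), $p \sim_\mathtt{X} p'$ and $q \sim_\mathtt{X} q'$, so $p' \sim_\mathtt{X} q'$ by symmetry and transitivity of $\sim_\mathtt{X}$. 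Each $\E_\mathtt{X}$ contains the ground-complete axiomatisation of BCCSP modulo $\sim_\mathtt{X}$ from \cite{vG90,BFN03}, namely $\E_0 \cup \{\mathrm{RT}\}$, $\E_0 \cup \{\mathrm{FT},\mathrm{RS}\}$, $\E_0 \cup \{\mathrm{R}\}$, or $\E_0 \cup \{\mathrm{F},\mathrm{R}\}$, respectively. Hence $\E_\mathtt{X} \vdash p' \approx q'$, and by transitivity of $\approx$ we obtain $\E_\mathtt{X} \vdash p \approx q$.

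The substantive work has already been carried out in the elimination lemma (Lemma~\ref{lem:rtr_elim}) and in the derivation results (Lemma~\ref{lem:RT_derivations}); from here the theorem is essentially a packaging step. The only genuinely new verification is the soundness of FP modulo ready trace equivalence, and this is the mildest of the obstacles since FP is a very local rearrangement whose validity modulo $\sim_\rtr$ (and hence all coarser equivalences in the section) follows from a short case distinction on which summand of the left-hand argument fires a transition.
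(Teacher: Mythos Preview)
Your proposal is correct and follows essentially the same approach as the paper: soundness is checked axiom by axiom (with the new axiom FP requiring a direct verification modulo $\sim_\rtr$, which then propagates to the coarser equivalences), and ground-completeness is obtained by combining the elimination results (Propositions~\ref{prop:rtr_elim} and~\ref{prop:FT_F_elim}) with the known ground-completeness of the corresponding BCCSP axiomatisations from \cite{vG90,BFN03}. The paper states the theorem without giving a detailed proof, noting only that it follows from the elimination results and the cited BCCSP completeness results, so your write-up is in fact more explicit than the paper's own treatment.
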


%====================================================
%======================================================

\subsection{Completed traces and traces}

It remains to consider completed trace equivalence and trace equivalence.

\begin{defi}
[Trace and completed trace equivalences]
Two processes $p$ and $q$ are \emph{trace equivalent}, denoted $p \trequiv q$, if $\tr(p) = \tr(q)$.
If, in addition, it holds that $\ctr(p) = \ctr(q)$, then $p$ and $q$ are \emph{completed trace equivalent}, denoted $p \sim_\ctr q$. 
\end{defi}

\begin{rem}
\label{rmk:ctr}
Since we are considering only $\bccsp$ processes with finite traces, then to prove completed trace equivalence it is enough to check whether two processes have the same sets of completed traces.
In fact, in our setting, $\ctr(p) = \ctr(q)$ implies $\tr(p) = \tr(q)$ for all processes $p,q$.
\end{rem}

\begin{table}
\centering
\begin{tabular}{l}
\hline \\[-.3cm]
\, \scalebox{0.85}{(CT)} \; $a(bx + z) + a(cy + w) \approx a(bx + cy + z + w)$ \\
\\[-.3cm]
\, \scalebox{0.85}{(CTP)} \; $(ax + by + w) \parcomp z \approx (ax + w) \parcomp z + (by + w) \parcomp z$ \\
\\[-.3cm]
\; $\E_\ctr = \E_1 \cup\{\textrm{CT, CTP, EL1}\}$
\\[.05cm]
\hline
\hline
\\[-.35cm]
\, \scalebox{0.85}{(T)} \; $ax + ay \approx a(x + y)$ \\
\\[-.3cm]
\, \scalebox{0.85}{(TP)} \; $(x + y) \parcomp z \approx x \parcomp z + y \parcomp z$ \\
\\[-.3cm]
\; $\E_\tr = \E_1 \cup\{\textrm{T, TP, EL1}\}$
\\[.1cm]
\hline
\\[-.2cm]
\end{tabular}
\caption{Additional axioms for completed trace and trace equivalences.}
\label{tab:axioms_linear_2}
\end{table}

Consider the axiom systems $\E_\ctr = \E_1 \cup\{\textrm{CT,CTP,EL1}\}$ and $\E_\tr = \E_1 \cup\{\textrm{T,TP,EL1}\}$, presented in Table~\ref{tab:axioms_linear_2}.
In the same way that axiom FP is the linear counterpart of RSP1 and RSP2, we have that CTP is the linear counterpart of CSP1 and CSP2, and TP is that of SP1 and SP2.
It is then easy to check that we can use the axioms in $\E_\ctr$ to obtain the elimination result for $\sim_\ctr$.

\begin{lem}
\label{lem:ctr_elim}
For every closed \emph{BCCSP} terms $p,q$ there is a closed \emph{BCCSP} term $r$ such that $\E_\ctr \vdash p \parcomp q \approx r$.
\end{lem}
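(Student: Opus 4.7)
The plan is to mimic the strategy of Lemma~\ref{lem:cssimpelelim} (and Lemma~\ref{lem:rtr_elim}), proceeding by induction on $\size(p) + \size(q)$. By Remark~\ref{rmk:form_bccsp} we may assume $p = \sum_{i \in I} a_i p_i$ and $q = \sum_{j \in J} b_j q_j$, and the case analysis will be driven by the cardinalities of $I$ and $J$. Note that the job is genuinely easier here than in the simulation cases: CTP is unconditional, with no side-condition on initial actions, so there is only one distributivity schema to use (rather than the RSP1/RSP2 or CSP1/CSP2 pairs), and it only needs to be applied to one argument of $\parcomp$.

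The base cases are routine. If $|I|=0$ or $|J|=0$ we use P0 (possibly after P1) to remove the parallel composition outright. If $|I|=|J|=1$, say $I=\{i_0\}$ and $J=\{j_0\}$, axiom EL1 gives
\[
p \parcomp q \;\approx\; a_{i_0}(p_{i_0}\parcomp q) + b_{j_0}(p \parcomp q_{j_0}),
\]
and the induction hypothesis, applicable since both $p_{i_0}\parcomp q$ and $p\parcomp q_{j_0}$ have strictly smaller total size, yields closed BCCSP terms $r_{i_0}$ and $r_{j_0}$ with $p_{i_0}\parcomp q\approx r_{i_0}$ and $p\parcomp q_{j_0}\approx r_{j_0}$, so $p\parcomp q\approx a_{i_0}r_{i_0}+b_{j_0}r_{j_0}$, which is a closed BCCSP term.

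The inductive step handles $|I|>1$ or $|J|>1$. If $|I|>1$, pick two distinct indices $i_0,i_1\in I$ and apply CTP to split off one summand:
\[
p \parcomp q
\;\stackrel{\mathrm{(CTP)}}{\approx}\;
\Bigl(\sum_{i\in I\sdiff\{i_0\}} a_i p_i\Bigr)\parcomp q
\;+\;
\Bigl(\sum_{i\in I\sdiff\{i_1\}} a_i p_i\Bigr)\parcomp q.
\]
Both of the resulting parallel compositions have strictly smaller total size, so the induction hypothesis supplies closed BCCSP terms $r_{i_0}, r_{i_1}$ for them, and $p\parcomp q\approx r_{i_0}+r_{i_1}$ is the required BCCSP term. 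The remaining case $|I|=1$ and $|J|>1$ is reduced to the previous one by an initial application of commutativity P1, after which CTP can be used to split the (now left) argument.

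The only mildly delicate point, and the one worth checking carefully in the full write-up, is making sure that every application of CTP produces arguments of strictly smaller size (which it clearly does, since each of the two halves drops at least one prefixed summand), and that after finitely many reductions every parallel composition we still need to eliminate has both arguments of the form $a t$ so that EL1 is applicable. This is automatic because the recursion on $\size(p)+\size(q)$ is well-founded, and the outer elimination statement then follows by a straightforward induction on the structure of an arbitrary closed $\bccsp{}$ term $p$, exactly as in Propositions~\ref{prop:CSelimination} and~\ref{prop:rtr_elim}.
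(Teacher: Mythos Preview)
Your proposal is correct and follows essentially the same route as the paper's proof: induction on $\size(p)+\size(q)$, base cases via P0/P1 and EL1, and the inductive step via CTP (with P1 to swap arguments when needed). The only difference is cosmetic: in the case $|I|,|J|>1$ the paper applies CTP to both arguments and obtains four summands, whereas you apply it to just one argument and obtain two---your version is slightly leaner but otherwise identical.
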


%---------------------------------------------------
\begin{proof}
The proof proceeds by induction on the total number of operator symbols occurring in $p$ and $q$ together, not counting parentheses. 
Since $p$ and $q$ are closed BCCSP terms, we can assume that $p=\sum_{i\in I} a_ip_i$ and  $q=\sum_{j\in J}b_j q_j$ (see Remark~\ref{rmk:form_bccsp}).
  
We proceed by a case analysis according to the cardinalities of the sets $I$ and $J$.

\begin{enumerate}
\item Case ${|I|} = 0$ or ${|J|}=0$.
In case that ${|J|}=0$, i.e., $J=\emptyset$, then $q=\nil$, so $\E_\ctr \vdash p \parcomp q \approx p$ by P0, and $p$ is the required closed BCCSP term. 
Similarly, if ${|I|}=0$, i.e., $I=\emptyset$, then $p=\nil$, so $\E_\ctr \vdash p \parcomp q \approx q \parcomp p \approx q$ by P0 and P1, and $q$ is the required closed BCCSP term.
%%%
  
\item Case ${|I|}={|J|}=1$.
Let $I=\{i_0\}$ and $J=\{j_0\}$, then
\[
p \parcomp q \stackrel{\text{(EL1)}}{\approx} 
a_{i_0} (p_{i_0} \parcomp q) + b_{j_0} (p \parcomp q_{j_0}).
\]
Since both $p_{i_0}$ and $q_{j_0}$ have fewer symbols than $p$ and $q$, respectively, by the inductive hypothesis there exist closed BCCSP terms $r_{i_0}$ and $r_{j_0}$ such that $\E_\ctr \vdash p_{i_0} \parcomp q \approx r_{i_0}$ and $\E_\ctr \vdash p\parcomp q_{j_0} \approx r_{j_0}$.
It follows that $\E_\ctr \vdash p \parcomp q \approx a_{i_0} r_{i_0} + b_{j_0} r_{j_0}$, where $a_{i_0} r_{i_0} + b_{j_0} r_{j_0}$ is a closed BCCSP term.
%%%

\item Case ${|I|}=1$ and $|J|>1$.
We can then assume that $I=\{i_0\}$ and there exist $j_0,j_1\in J$ such that $j_0\neq j_1$, then
\begin{align*}
p \parcomp q \stackrel{\textrm{(A2),(P1)}}{\approx{}} &
\left( b_{j_0} q_{j_0} + b_{j_1} q_{j_1} + \sum_{j \in J\sdiff{\{j_0,j_1\}}} b_j q_j \right) \parcomp p \\
\stackrel{\textrm{(CTP)}}{\approx{}} &
\left( b_{j_0} q_{j_0} + \sum_{j \in {J\sdiff{\{j_0,j_1\}}}} b_j q_j \right) \parcomp p + 
\left( b_{j_1} q_{j_1} + \sum_{j \in {J\sdiff{\{j_0,j_1\}}}} b_j q_j \right) \parcomp p \\
\stackrel{\textrm{(A2),(P1)}}{\approx{}} &
\left( \sum_{j \in {J\sdiff{\{j_1\}}}} b_j q_j \right) \parcomp p + 
\left( \sum_{j \in {J\sdiff{\{j_0\}}}} b_j q_j \right)
\parcomp p.
\end{align*}
Since $\sum_{j\in {J\sdiff{\{j_0\}}}}b_j q_j$ and $\sum_{j\in {J\sdiff{\{j_1\}}}}b_j q_j$ have fewer symbols than $q$, by the inductive hypothesis there exist closed BCCSP terms $r_{j_0}$ and $r_{j_1}$ such that
\begin{align*}
& \E_\ctr \vdash p \parcomp \left( \sum_{j\in {J\sdiff{\{j_0\}}}} b_j q_j \right) \approx r_{j_0} \\
& \E_\ctr \vdash p \parcomp \left( \sum_{j\in {J\sdiff{\{j_1\}}}} b_j q_j \right) \approx r_{j_0}.
\end{align*}
So we have $\E_\ctr \vdash p \parcomp q \approx r_{j_0} + r_{j_1}$ and $r_{j_0} + r_{j_1}$ is a closed BCCSP term.
%%%

\item Case $|I|>1$ and ${|J|}=1$.
This case can be obtained as the previous one, without the initial application of axiom P1.
%%%

\item Case $|I|,|J|>1$.
In this case there exist $i_0,i_1\in I$ with $i_0\neq i_1$ and $j_0,j_1\in J$ with $j_0\neq j_1$.
Then
\begin{align*}
p \parcomp q \stackrel{\text{(A2),(P1),(CTP)}}{\approx} &
\left( \sum_{i \in {I\sdiff\{i_0\}}} a_i p_i \right)\parcomp q +
\left( \sum_{i\in {I\sdiff\{i_1\}}} a_i p_i \right)\parcomp q + \\
& + p \parcomp \left( \sum_{j\in{J\sdiff\{j_0\}}} b_j q_j \right) + 
p \parcomp \left( \sum_{j\in{J\sdiff\{j_1\}}} b_j q_i \right).
\end{align*}
Note that
$\sum_{i\in{I\sdiff\{i_0\}}} a_i p_i$ and
$\sum_{i\in{I\sdiff\{i_1\}}} a_i p_i$
have fewer symbols than $p$, and
$\sum_{j\in{J\sdiff\{j_0\}}} b_j q_j$ and
$\sum_{j\in{J\sdiff\{j_1\}}} b_j q_i$
have fewer symbols than $q$, so by the inductive hypothesis there exist $r_{i_0}$, $r_{i_1}$, $r_{j_0}$ and $r_{j_1}$ such that
\begin{align*}
\E_\ctr \vdash \left( \sum_{i\in{I\sdiff\{i_0\}}} a_i p_i \right) \parcomp q \approx r_{i_0}
& & 
\E_\ctr \vdash \left( \sum_{i\in{I\sdiff\{i_1\}}} a_i p_i \right) \parcomp q \approx r_{i_1} \\
\E_\ctr \vdash p \parcomp \left( \sum_{j\in{J\sdiff\{j_0\}}} b_j q_j \right) \approx r_{j_0}
& &
\E_\ctr \vdash p \parcomp \left( \sum_{j\in{J\sdiff\{j_1\}}} b_j q_i \right) \approx r_{j_1}.
\end{align*}
Then $\E_\ctr \vdash p\parcomp q \approx r_{i_0} + r_{i_1} + r_{j_0} + r_{j_1}$ and $r_{i_0} + r_{i_1} + r_{j_0} + r_{j_1}$ is a closed BCCSP term. \qedhere
\end{enumerate}
\end{proof}
%------------------------------------------------

\begin{prop}[$\E_\ctr$ elimination]
\label{prop:ctr_elim}
For every closed $\bccsp$ term $p$ there is a closed BCCSP term $q$ such that $\E_\ctr \vdash p \approx q$.
\end{prop}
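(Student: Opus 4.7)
The plan is to prove the elimination result by straightforward structural induction on $p$, exactly mirroring the proof strategies used for Propositions~\ref{prop:rs_elim}, \ref{prop:CSelimination}, \ref{prop:s_elim} and~\ref{prop:rtr_elim}. Since $p$ is a closed $\bccsp$ term, the structural cases to consider are $p = \nil$, $p = a.p'$, $p = p_1 + p_2$, and $p = p_1 \parcomp p_2$; the case $p = x$ does not arise because $p$ is closed.

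For the base case $p = \nil$, the claim holds vacuously since $\nil$ is itself a closed BCCSP term. For $p = a.p'$, by the induction hypothesis there is a closed BCCSP term $q'$ such that $\E_\ctr \vdash p' \approx q'$; then using the congruence rule $e_5$ from Table~\ref{tab:equational_logic}, we obtain $\E_\ctr \vdash a.p' \approx a.q'$, and $a.q'$ is a closed BCCSP term. The case $p = p_1 + p_2$ is analogous: apply the induction hypothesis to each $p_i$ to obtain closed BCCSP terms $q_i$ with $\E_\ctr \vdash p_i \approx q_i$, and then invoke $e_6$ to conclude $\E_\ctr \vdash p_1 + p_2 \approx q_1 + q_2$, which is a closed BCCSP term.

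The only nontrivial case is $p = p_1 \parcomp p_2$. Here I would first apply the induction hypothesis to $p_1$ and $p_2$, obtaining closed BCCSP terms $q_1$ and $q_2$ with $\E_\ctr \vdash p_i \approx q_i$ for $i \in \{1,2\}$. Using the congruence rule $e_8$ for $\parcomp$, we then have $\E_\ctr \vdash p_1 \parcomp p_2 \approx q_1 \parcomp q_2$. At this point the two arguments of the parallel composition are closed BCCSP terms, so Lemma~\ref{lem:ctr_elim} applies and yields a closed BCCSP term $r$ with $\E_\ctr \vdash q_1 \parcomp q_2 \approx r$. Transitivity of equational derivation (rule $e_3$) then gives $\E_\ctr \vdash p \approx r$, completing the induction.

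The main work has already been carried out in Lemma~\ref{lem:ctr_elim}; the present proposition is a routine structural induction that lifts that lemma from the situation where both arguments of $\parcomp$ are already closed BCCSP terms to arbitrary closed $\bccsp$ terms. I therefore do not anticipate any obstacles beyond correctly invoking the congruence rules of equational logic and the induction hypothesis in the right order.
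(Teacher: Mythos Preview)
Your proposal is correct and follows essentially the same approach as the paper, which simply states that the proof is an easy induction on $p$ using Lemma~\ref{lem:ctr_elim} in the case $p = p_1 \parcomp p_2$. Your version spells out the routine details of the other structural cases and the use of the congruence rules, but the argument is the same.
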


%------------------------------------------------
\begin{proof}
The proof follows by an easy induction on $p$, using Lemma~\ref{lem:ctr_elim} in the case that $p$ is of the form $p_1 \parcomp p_2$ for some $p_1$ and $p_2$.
\end{proof}
%-----------------------------------------------------

Let us now focus on trace equivalence.
We can prove that the axioms in $\E_\ctr$ are derivable from those in $\E_\tr$.

\begin{lem}
\label{lem:ET_ECT}
The axioms in the system $\E_\ctr$ are derivable from $\E_\tr$, namely,
\begin{enumerate}
\item \label{lem:TprovesCT}
$\E_\tr \vdash\mathrm{CT}$, and
\item \label{lem:TprovesCTP} 
$\E_\tr \vdash\mathrm{CTP}$.
\end{enumerate}
\end{lem}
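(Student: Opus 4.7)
The plan is to derive both CT and CTP directly by straightforward equational manipulations, exploiting the fact that in the trace setting axioms T and TP express completely unrestricted forms of distributivity, without the side conditions that constrain their simulation counterparts.

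For part~\ref{lem:TprovesCT}, axiom T in the form $ax+ay\approx a(x+y)$ immediately collapses the two $a$-prefixed summands on the left-hand side of CT. Concretely, $a(bx+z)+a(cy+w) \stackrel{\textrm{(T)}}{\approx} a\bigl((bx+z)+(cy+w)\bigr) \stackrel{\textrm{(A1),(A2)}}{\approx} a(bx+cy+z+w)$, which is exactly the right-hand side of CT. No further work is needed beyond the basic $+$-laws in $\E_1$.

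For part~\ref{lem:TprovesCTP}, I would start from the right-hand side of CTP and reduce it to the left. Applying TP once to each summand yields $(ax+w)\parcomp z + (by+w)\parcomp z \stackrel{\textrm{(TP)}}{\approx} ax\parcomp z + w\parcomp z + by\parcomp z + w\parcomp z$. The two copies of $w\parcomp z$ can then be brought adjacent using A1 and A2 and collapsed via A3, leaving $ax\parcomp z + by\parcomp z + w\parcomp z$. A final application of TP in the reverse direction, combined once more with A2, regroups the three summands as $(ax+by+w)\parcomp z$, which is precisely the left-hand side of CTP.

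The only step that requires any bookkeeping at all is the idempotency step removing the duplicated $w\parcomp z$ summand in the derivation of CTP, and this is a standard manipulation with A1, A2, A3. Both derivations are essentially one-line proofs modulo such associativity-commutativity-idempotency reshuffling, reflecting the coarser nature of trace equivalence compared with completed simulation equivalence.
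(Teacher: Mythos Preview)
Your derivation of CT is exactly the paper's: a single application of T followed by reassociating the summands under the prefix.

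For CTP your argument is correct, but it takes a slightly different route from the paper's. The paper works from left to right: it first duplicates the summand $w$ inside the left-hand argument via A1 and A3, obtaining $(ax+w+by+w)\parcomp z$, and then a single application of TP (with $x\mapsto ax+w$, $y\mapsto by+w$) yields the right-hand side directly. You instead expand the right-hand side completely using TP, collapse the duplicated $w\parcomp z$ with A3, and then reassemble the left-hand side by applying TP in reverse (which, strictly speaking, requires two applications rather than the single one you mention, though this is just bookkeeping). Both arguments are essentially one-liners modulo A1--A3 juggling; the paper's version is marginally slicker because it only ever applies TP once and in one direction, whereas your expand-then-collapse approach touches TP four times. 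Either way the derivation is trivial, reflecting that TP is the fully unrestricted distributivity law.
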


%---------------------------------------------
\begin{proof}
In the case of CT, we have the following equational derivation:
  \begin{equation*}
    a(bx+z) + a(cy+w)
      \stackrel{\mathrm{(T)}}{\approx}
    a(bx + z + cy + w).
  \end{equation*}

For CTP, we have the following equational derivation:
  \begin{equation*}
     (ax+by+w)\parcomp z
    \stackrel{\mathrm{(A1),(A3)}}{\approx}
     (ax+w+by+w) \parcomp z
    \stackrel{\mathrm{(TP)}}{\approx}
      (ax+w)\parcomp z +(by+w)\parcomp z. \qedhere
  \end{equation*}
\end{proof}
%-----------------------------------------------------

Hence, the elimination result for $\sim_\tr$ can be obtained as an immediate consequence of Lemma~\ref{lem:ET_ECT} and the elimination result for completed traces (Proposition~\ref{prop:ctr_elim} above).

\begin{prop}[$\E_\tr$ elimination]
\label{prop:tr_elim}
For every closed $\bccsp{}$ term $p$ there exists a closed \emph{BCCSP} term $q$ such that $\E_\tr \vdash p\approx q$.
\end{prop}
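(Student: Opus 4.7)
The plan is to obtain this proposition essentially as a corollary of Proposition~\ref{prop:ctr_elim}, by the same schema used to derive Proposition~\ref{prop:s_elim} from Proposition~\ref{prop:CSelimination}. The observation is that $\E_\tr$ proves every axiom of $\E_\ctr$: by Lemma~\ref{lem:ET_ECT}, the axioms CT and CTP are derivable from $\E_\tr$, while the remaining axioms of $\E_\ctr$ (i.e., those in $\E_1$ together with EL1) are literally contained in $\E_\tr$.

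From this I would conclude that, for any equation $e$, if $\E_\ctr \vdash e$ then $\E_\tr \vdash e$. This follows by a routine induction on the structure of the equational derivation in Table~\ref{tab:equational_logic}: each application of an $\E_\ctr$ axiom (via rule $e_4$ from a suitable substitution instance) can be replaced by the corresponding derivation from $\E_\tr$ supplied by Lemma~\ref{lem:ET_ECT}, while the logical rules $e_1$--$e_6,e_8$ are shared between the two axiom systems.

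Given any closed $\bccsp$ term $p$, Proposition~\ref{prop:ctr_elim} supplies a closed BCCSP term $q$ with $\E_\ctr \vdash p \approx q$. By the preceding observation, $\E_\tr \vdash p \approx q$, which is exactly the desired conclusion.

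There is no genuine obstacle here: once Lemma~\ref{lem:ET_ECT} is available, the proposition is a one-line corollary of Proposition~\ref{prop:ctr_elim}, entirely analogous to the passage from $\E_\cs$ to $\E_\s$ in Proposition~\ref{prop:s_elim}. I would therefore present the proof simply as ``by Lemma~\ref{lem:ET_ECT}, $\E_\tr$ proves all axioms of $\E_\ctr$, so the claim follows from Proposition~\ref{prop:ctr_elim}.''
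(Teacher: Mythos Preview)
Your proposal is correct and takes essentially the same approach as the paper: the text immediately preceding the proposition states that the elimination for $\sim_\tr$ follows from the fact that the axioms in $\E_\ctr$ are derivable from those in $\E_\tr$ (Lemma~\ref{lem:ET_ECT}), so the result is a direct corollary of Proposition~\ref{prop:ctr_elim}, exactly as you outline and in full analogy with the passage from $\E_\cs$ to $\E_\s$.
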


\begin{rem}
\label{rmk:non_derivable_CS}
We can use the models in Tables~\ref{tab:model} and~\ref{tab:model_rt} to prove that, respectively, neither axiom EL2 nor axiom CSP2 are derivable from $\E_\ctr$.
In detail, in the case of EL2 the counterexample presented in Remark~\ref{rmk:non_derivable_RS} holds also in the case of $\E_\ctr$.
In the case of CSP2, we notice that the instance of RSP2 used in Remark~\ref{rmk:non_derivable_RS_2} is in fact also an instance of CSP2, and thus the counterexample for RSP2 holds also in this case.
We refer the interested reader to Appendix~\ref{app:Mace4_ct} for the Mace4 codes generating the desired models.
\end{rem}

In light of Proposition~\ref{prop:ctr_elim}, the ground-completeness of $\E_\ctr$ over $\bccsp$ modulo $\sim_\ctr$ follows from that of $\E_0 \cup \{\mathrm{CT}\}$ over BCCSP provided in~\cite{vG90}.
Similarly, the ground-completeness of $\E_0 \cup \{\mathrm{T}\}$ over BCCSP proved in~\cite{vG90} and Proposition~\ref{prop:tr_elim} give us the ground-completeness of $\E_\tr$ over $\bccsp$.

\begin{thm}
[Soundness and completeness of $\E_\ctr$ and $\E_\tr$]
\label{thm:tr-completeness}
Let $\mathtt{X} \in \{\ctr,\tr\}$.
The axiom system $\E_{\mathtt{X}}$ is a ground-complete axiomatisation of $\bccsp$ modulo $\sim_{\mathtt{X}}$, i.e., $p \sim_{\mathtt{X}} q$ if and only if $\E_{\mathtt{X}} \vdash p \approx q$.
\end{thm}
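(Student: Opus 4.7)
My plan is to establish the two directions of the theorem separately, reducing both to results already in place. For soundness, I would check that each axiom in $\E_\ctr$ (resp.\ $\E_\tr$) is sound modulo $\sim_\ctr$ (resp.\ $\sim_\tr$). The axioms in $\E_1$ are sound modulo every equivalence in the spectrum of Figure~\ref{fig:spectrum}, so it suffices to verify CT, CTP and EL1 for completed traces, and T, TP and EL1 for traces. In each case I would compute, under an arbitrary closed substitution, the (completed) trace sets of the two sides using the operational rules of Table~\ref{tab:semantics}; these checks are short and routine. Soundness then lifts from axioms to derivable equations by an easy induction on equational proofs using the rules of Table~\ref{tab:equational_logic}, noting that $\sim_\ctr$ and $\sim_\tr$ are congruences over $\bccsp$ by \cite[Theorem~4]{vG93b}.

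For ground-completeness I would exploit the elimination results of the previous subsection. Fix $\mathtt{X}\in\{\ctr,\tr\}$ and let $p,q$ be closed $\bccsp$ terms with $p \sim_{\mathtt{X}} q$. By Proposition~\ref{prop:ctr_elim} in case $\mathtt{X}=\ctr$ and by Proposition~\ref{prop:tr_elim} in case $\mathtt{X}=\tr$, there exist closed BCCSP terms $p'$ and $q'$ such that $\E_{\mathtt{X}} \vdash p \approx p'$ and $\E_{\mathtt{X}} \vdash q \approx q'$. Using the soundness direction just established together with symmetry and transitivity of $\sim_{\mathtt{X}}$, one obtains $p' \sim_{\mathtt{X}} q'$, reducing the problem to ground-completeness over the subalgebra BCCSP.

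At this point I would invoke the results of \cite{vG90}, which state that $\E_0 \cup \{\mathrm{CT}\}$ is ground-complete over BCCSP modulo $\sim_\ctr$ and $\E_0 \cup \{\mathrm{T}\}$ is ground-complete over BCCSP modulo $\sim_\tr$. Since $\E_0 \cup \{\mathrm{CT}\} \subseteq \E_\ctr$ and $\E_0 \cup \{\mathrm{T}\} \subseteq \E_\tr$, any derivation of $p' \approx q'$ provided by those results is also a derivation in $\E_{\mathtt{X}}$, yielding $\E_{\mathtt{X}} \vdash p' \approx q'$. Transitivity of $\approx$ in equational logic then gives $\E_{\mathtt{X}} \vdash p \approx q$, completing the argument. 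No real obstacle is expected: the heavy lifting of eliminating $\parcomp$ has already been carried out in Lemma~\ref{lem:ctr_elim}, Proposition~\ref{prop:ctr_elim}, Lemma~\ref{lem:ET_ECT} and Proposition~\ref{prop:tr_elim}, and the BCCSP completeness results are imported verbatim from the literature; the only genuinely new verification is the soundness of the parallel-composition axioms CTP, TP and EL1, which is mechanical.
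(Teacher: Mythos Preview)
Your proposal is correct and follows exactly the approach the paper takes: soundness is checked axiom by axiom, and ground-completeness is obtained by using the elimination results (Propositions~\ref{prop:ctr_elim} and~\ref{prop:tr_elim}) to reduce to closed BCCSP terms and then invoking the ground-completeness of $\E_0\cup\{\mathrm{CT}\}$ and $\E_0\cup\{\mathrm{T}\}$ over BCCSP from~\cite{vG90}. The paper states this reduction in the paragraph preceding the theorem rather than spelling out a separate proof, but your write-up simply makes those steps explicit.
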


%=============================================
% sec - negative
%=============================================

\section{The negative results}
\label{sec:negative}

We dedicate this section to the negative results: we prove that all the congruences between possible futures equivalence ($\sim_\pf$) and bisimilarity ($\sim_\B$) do not admit a finite, ground-complete axiomatisation over $\bccsp$.
This includes all the nested trace and nested simulation equivalences.
In~\cite{AFGI04} it was shown that, even if the set of actions is a singleton, the nested semantics admit no finite axiomatisation over BCCSP.
Indeed, the presence of the additional operator $\parcomp$ might, at least in principle, allow us to finitely axiomatise the equations over closed BCCSP terms that are valid modulo the considered equivalences. 
Hence, we prove these results directly.

In detail, firstly we focus on the negative result for possible futures semantics, corresponding to the $2$-nested trace semantics~\cite{HM85}.
To obtain it, we apply the general technique used by Moller to prove that interleaving is not finitely axiomatisable modulo bisimilarity~\cite{Mol89,Mol90a,moller90}.
Briefly, the main idea is to identify a \emph{witness property}.
This is a specific property of $\bccsp$ terms, say $W_N$ for $N \ge 0$, that, when $N$ is \emph{large enough}, is an invariant that is preserved by provability from finite, sound axiom systems.
Roughly, this means that if $\E$ is a finite set of axioms that are sound modulo possible futures equivalence, the equation $p \approx q$ can be derived from $\E$, and $N$ is larger than the size of all the terms in the equations in $\E$, then either both $p$ and $q$ satisfy $W_N$, or none of them does.
Then, we exhibit an infinite family of valid equations, called the \emph{witness family of equations}, in which $W_N$ is not preserved, namely it is satisfied only by one side of each equation.

Afterwards, we exploit the soundness modulo bisimilarity of the equations in the witness family to lift the negative result for $\sim_\pf$ to all congruences between $\sim_\B$ and $\sim_\pf$.

Differently from the aforementioned negative results over BCCSP, ours are obtained assuming that the set of actions contains at least two distinct elements.
In fact, when the action set is a singleton, and \emph{only} in that case, the axiom
\[
ax \parcomp (ay + az) \approx ax \parcomp (ay + a(y+z)) + ax \parcomp (az + a(y + z))
\]
is sound modulo $\sim_\pf$.
Due to this axiom we were not able to prove the negative result for $\sim_\pf$ in the case that $|\Act| = 1$, which we leave as an open problem for future work.

%======================================
% sec - possible futures
%========================================

\subsection{Possible futures equivalence}
\label{sec:possible_futures}

According to possible futures equivalence~\cite{RB81} two processes are deemed equivalent if, by performing the same traces, they reach processes that are trace equivalent.
For this reason, possible futures equivalence is also known as the $2$-\emph{nested trace equivalence}~\cite{HM85}.

\begin{defi}
[Possible futures equivalence]
A \emph{possible future} of a process $p$ is a pair $(\alpha,X)$ where $\alpha \in \Act^{*}$ and $X \subseteq \Act^{*}$ such that $p \trans[\alpha] p'$ for some $p'$ with $X = \tr(p')$.
We write $\pf(p)$ for the set of possible futures of $p$.
Two processes $p$ and $q$ are said to be \emph{possible futures equivalent}, denoted $p \pfequiv q$, if $\pf(p) = \pf(q)$.
\end{defi}

Our order of business is to prove the following result.

\begin{thm}
\label{thm:pf_negative}
Assume that ${|\Act|}\ge 2$.
Possible futures equivalence has no finite, ground-complete, equational axiomatisation over the language $\bccsp$.
\end{thm}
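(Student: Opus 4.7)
The plan is to adapt the strategy used by Moller to show that bisimilarity admits no finite ground-complete axiomatisation over CCS \cite{Mol89,Mol90a,moller90}. The argument has three stages.

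First, I would isolate an infinite family of \emph{witness equations} $\{e_n : p_n \approx q_n\}_{n\ge N_0}$ between closed $\bccsp$ terms, chosen so that each $e_n$ is sound modulo bisimilarity $\sim_\B$. Since $\sim_\B$ is included in every equivalence in the spectrum, each $e_n$ is then automatically sound modulo $\sim_\pf$. The hypothesis $|\Act|\ge 2$ enters essentially here: picking two distinct actions $a,b\in\Act$, one can build terms of the form $a^n\parcomp r$ whose distinguishing interleaving structure grows with $n$. The remark already made in the excerpt, that with a single action the axiom $ax \parcomp (ay+az) \approx ax \parcomp (ay+a(y+z)) + ax \parcomp (az + a(y+z))$ is sound modulo $\sim_\pf$, explains why the family cannot survive when $|\Act|=1$.

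Second, I would identify a \emph{witness property} $W_n$ on closed $\bccsp$ terms so that $p_n$ satisfies $W_n$ while $q_n$ does not. A natural Moller-style candidate is a local property asserting that, after some distinguished trace $\alpha$, $p$ reaches a state having a summand of the form $r\parcomp s$ with both $r$ and $s$ possessing traces of length at least $n$. Such an ``indecomposable large parallel component'' is precisely the kind of structure that finitely many bounded axioms cannot fabricate out of thin air.

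The central step, and the main obstacle, is the \emph{invariance lemma}: for any finite axiom system $\E$ sound modulo $\sim_\pf$, there is a threshold $N_\E$ (strictly larger than the size of every term appearing in $\E$) such that, whenever $n\ge N_\E$ and $\E\vdash p\approx q$ between closed $\bccsp$ terms, $W_n(p)$ holds iff $W_n(q)$ does. I would establish this by induction on the derivation using the rules of Table~\ref{tab:equational_logic}. The delicate cases are the substitution rule and the congruence rule for $\parcomp$: one must argue that any witness-bearing subterm arising in a substitution instance $\sigma(t)$ of an axiom $t\approx u$ of size below $N_\E$ must in fact be inherited from the substituted variables, so that the witness transfers unchanged to $\sigma(u)$. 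The soundness modulo $\sim_\pf$ of the axiom is what underwrites the transfer at the leaves, and the freedom offered by possible futures equivalence (which sees only trace sets after the first transition) is what makes the invariance subtle compared with the bisimilarity setting. This structural analysis, tailored to $\sim_\pf$, is the technical heart of the proof.

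Finally, assuming the invariance lemma, a finite ground-complete axiomatisation $\E$ of $\bccsp$ modulo $\sim_\pf$ would have to derive every $e_n$; for $n\ge N_\E$ the invariance lemma would then force $W_n(p_n)\iff W_n(q_n)$, contradicting the design of the witness family. This yields the theorem.
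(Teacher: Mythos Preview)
Your high-level architecture is exactly that of the paper: a Moller-style argument with a witness family of equations, a witness property, and an invariance lemma proved by induction on equational derivations. However, the concrete instantiations you propose would not go through, and this is where the actual content of the proof lives.

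First, your suggested witness property---``after some trace $\alpha$, reach a state with a summand $r\parcomp s$ where \emph{both} $r$ and $s$ have traces of length at least $n$''---does not match either Moller's original argument or the paper's. In both, one parallel factor is tiny: the paper takes $p_N=\sum_{i=1}^N b^i a$ and the witness equation is
\[
e_N\colon\quad a\parcomp p_N \;\approx\; a\,p_N + \sum_{i=1}^N b\,(a\parcomp b^{i-1}a),
\]
so the left factor is just $a$, of depth $1$. The witness property is simply ``has a summand $\sim_\pf a\parcomp p_N$'', not a size condition on both factors. Your proposed property is not even satisfied by the left-hand side of any natural candidate family.

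Second, the invariance lemma for the substitution rule (your ``delicate case'') requires a quite specific structural analysis tailored to $\sim_\pf$, which your sketch does not supply. The paper needs: a characterisation of processes $\sim_\pf p_N$ (showing they must be sums of terms each $\sim_\pf b^i a$; this uses that $p_N$ has no $\parcomp$ since its completed traces all end in a single $a$); a unique-decomposition result that $p\parcomp q\sim_\pf a\parcomp p_N$ with $p,q\not\sim_\pf\nil$ forces $\{p,q\}\sim_\pf\{a,p_N\}$; and a key lemma that if $\sigma(t)\sim_\pf\sum_{k=1}^m b^{i_k}a$ with $m>1$ and $t$ not a sum, then $t$ is a variable. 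These are used to show that when $\sigma(t_i)\sim_\pf a\parcomp p_N$ with $t_i=t'\parcomp t''$, some summand of $t''$ must be a variable $x$ with $\sigma(x)\sim_\pf\sum b^{i_k}a$, and then a careful case analysis on where $x$ can occur in $u$ transfers the witness to $\sigma(u)$. None of this machinery is hinted at in your plan, and it is exactly here that the coarseness of $\sim_\pf$ (compared to bisimilarity) makes the argument substantially harder than Moller's.

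In short: the strategy is right, but the proposal is a plan rather than a proof, and the one concrete choice you make (the witness property) is wrong.
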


In what follows, for actions $a,b \in \Act$ and $i \ge 0$, we let $b^0 a$ denote $a.0$ and $b^{i+1}a$ stand for $b(b^ia)$.  
Consider now the infinite family of equations $\{e_N \mid N \ge 1\}$ given, for $a \neq b$, by:
\begin{align*}
& p_N = \sum_{i = 1}^N b^i a 
& (N \ge 1)\phantom{.} \\
& e_N \; \colon \; a \parcomp p_N \approx a p_N + \sum_{i = 1}^{N} b (a \parcomp b^{i-1}a)   
& (N \ge 1).
\end{align*}

Notice that the equations $e_N$ are sound modulo $\sim_\pf$ for all $N \ge 1$.

We also notice that none of the summands in the right-hand side of equation $e_N$ is, alone, possible futures equivalent to $a \parcomp p_N$.
However, we now proceed to show that, when $N$ is large enough, having a summand possible futures equivalent to $a \parcomp p_N$ is an invariant under provability from finite sound axiom systems, and it will thus play the role of witness property for our negative result.

To this end, we introduce first some basic notions and results on $\sim_\pf$. 

Firstly, as a simplification, we can focus on the $\nil$ \emph{absorption properties} of $\bccsp$.
Informally, we can restrict the axiom system to a collection of equations that do not introduce unnecessary terms that are possible futures equivalent to $\nil$ in the equational proofs, namely $\nil$ summands and $\nil$ \emph{factors}.
(We refer the interested reader to Appendix~\ref{app:saturation} for further details.)

\begin{defi}
\label{def:nil-factors}
We say that a $\bccsp$ term $t$ has a {\sl $\nil$ factor} if it contains a subterm of the form $t_1 \parcomp t_2$, where either $t_1$ or $t_2$ is possible futures equivalent to $\nil$.
\end{defi}

Next, we characterise closed $\bccsp$ terms that are possible futures equivalent to $p_N$.

\begin{lem}
\label{lem:ct_closed_pn}
Let $q$ be a $\bccsp$ term that does not have $\nil$ summands or factors and such that $\ctr(q) = \ctr(p_N)$ for some $N \ge 1$.
Then $q$ does not contain any occurrence of $\parcomp$.
\end{lem}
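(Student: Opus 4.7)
My plan is to establish a slight generalisation by structural induction on $q$: every closed $\bccsp$ term without $\nil$ summands or factors, and whose completed traces all have the form $b^k a$ for some $k \ge 0$, contains no occurrence of $\parcomp$. The statement of the lemma then follows immediately, since $\ctr(p_N) = \{b^i a \mid 1 \le i \le N\}$ is contained in this shape.

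The routine cases in the induction would go as follows. The case $q = \nil$ is vacuous, because $\ctr(\nil) = \{\e\}$ does not satisfy the hypothesis. For a prefix $q = c.q'$, analysing which strings $c\sigma$ can belong to $\{b^k a \mid k \ge 0\}$ forces either $c = b$, in which case $\ctr(q') \subseteq \{b^j a \mid j \ge 0\}$ and the inductive hypothesis applies to $q'$, or $c = a$, in which case $\ctr(q') \subseteq \{\e\}$. The latter situation, combined with the absence of $\nil$ summands and factors and the easy observation that any closed $\bccsp$ term without such $\nil$ subterms and with a nonempty summand decomposition has nonempty initials, forces $q' = \nil$, so $q = a.\nil$ contains no $\parcomp$. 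The sum case $q = q_1 + q_2$ is similarly easy: the hypotheses on $q$ ensure $\init(q_1), \init(q_2) \ne \emptyset$, so $\ctr(q) = \ctr(q_1) \cup \ctr(q_2)$, each summand inherits the no-$\nil$-summand/factor condition, and the inductive hypothesis applies.

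The crux, and the main obstacle, is the case $q = q_1 \parcomp q_2$, where a contradiction must be extracted. Because $q$ has no $\nil$ factors, neither $q_1$ nor $q_2$ is $\sim_\pf \nil$, so $\e \notin \ctr(q_i)$ for $i \in \{1,2\}$. The operational semantics of $\parcomp$ entails that every shuffle of $\sigma \in \ctr(q_1)$ with $\tau \in \ctr(q_2)$ is itself a completed trace of $q$, hence a member of $\{b^k a \mid k \ge 0\}$. A case analysis on the occurrences of $a$ in $\sigma$ and $\tau$ rules out the possibilities that both contain $a$ (yielding a shuffle with two $a$'s) and that neither does (yielding a shuffle with no $a$). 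Up to symmetry, the only remaining possibility is $\sigma = b^m a$ and $\tau = b^n$ for some $m \ge 0$ and $n \ge 1$. The shuffles of these two strings include each of $b^{m+k} a b^{n-k}$ for $0 \le k \le n$; for all of them to lie in $\{b^j a \mid j \ge 0\}$ we must have $n = 0$, contradicting $n \ge 1$. This rules out the $\parcomp$ case and completes the induction.
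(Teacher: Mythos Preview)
Your proof is correct and follows the same idea as the paper's: the completed traces of $p_N$ each contain exactly one $a$, occurring as the last symbol, and no term with a genuine $\parcomp$-subterm can have that property. The paper states this in a single sentence and leaves the structural induction implicit; you spell it out carefully, including the necessary generalisation of the hypothesis to $\ctr(q)\subseteq\{b^k a\mid k\ge 0\}$ so that the induction goes through for the prefix and sum cases. One small presentational gap: in the $\parcomp$ case you jump from ``$\sigma$ contains an $a$ and $\tau$ contains none'' to ``$\sigma=b^m a$'' without saying why---but this is harmless, since the shuffle $\sigma\tau=\sigma b^n$ with $n\ge 1$ already ends in $b$ and therefore cannot lie in $\{b^k a\}$, giving the contradiction directly.
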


%------------------------------------------------------
\begin{proof}
That $q \sim_\ctr p_N$ implies that $q$ does not contain any occurrence of $\parcomp$ directly follows by observing that the completed traces of $p_N$, and thus those of $q$, contain exactly one occurrence of $a$, and this occurrence must be as the last action in the completed trace.
\end{proof}
%--------------------------------------------------------

\begin{lem}
\label{lem:pf_closed_pn}
Let $q$ be a $\bccsp$ term that does not have $\nil$ summands or factors. Then $q \sim_\pf p_N$, for some $N \ge 1$, if and only if $q = \sum_{j \in J} q_j$ for some terms $q_j$ such that none of them has $+$ as head operator and:
\begin{itemize}
\item for each $i \in \{1,\dots,N\}$ there is some $j \in J$ such that $b^i a \sim_\pf q_j$;
\item for each $j \in J$ there is some $i \in \{1,\dots,N\}$ such that $q_j \sim_\pf b^i a$.
\end{itemize}
\end{lem}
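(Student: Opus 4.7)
\medskip
\noindent\emph{Proof plan.}
The plan is to treat the two directions separately: $(\Leftarrow)$ follows cleanly from the congruence properties of $\pfequiv$, whereas $(\Rightarrow)$ requires a detailed structural analysis of each summand of $q$.

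For the direction $(\Leftarrow)$, I would pick a function $\phi\colon J \to \{1, \dots, N\}$ with $q_j \pfequiv b^{\phi(j)}a$, which exists by the second bullet (and is uniquely determined, since distinct $b^i a$ are pairwise non-$\pfequiv$). Congruence of $\pfequiv$ for $+$ then gives $q \pfequiv \sum_{j \in J} b^{\phi(j)}a$, the first bullet makes $\phi$ surjective, and idempotence of $+$ modulo $\pfequiv$ collapses the repetitions to yield $q \pfequiv \sum_{i=1}^N b^i a = p_N$.

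For the direction $(\Rightarrow)$, I would decompose $q = \sum_{j \in J} q_j$ so that no $q_j$ has $+$ as head. Since $\init(q) = \init(p_N) = \{b\}$ and no $q_j$ is $\pfequiv \nil$, each $\init(q_j) = \{b\}$; moreover $\ctr(q_j) \subseteq \ctr(q) = \ctr(p_N) = \{b^i a : 1 \le i \le N\}$. The crucial structural step is to exclude $\parcomp$ as head operator of any $q_j$: if $q_j = r_1 \parcomp r_2$ with neither side $\pfequiv \nil$, then every interleaving of non-empty $w_1 \in \ctr(r_1)$ and $w_2 \in \ctr(r_2)$ must lie in $\{b^i a\}$; tracking where the unique $a$-occurrence can sit (via the concatenations $w_1 w_2$, $w_2 w_1$, and the interleaving that inserts all of $w_1$ inside $w_2$) forces $|w_1| = |w_2| = 1$ and $w_1 = w_2 = a$, so $aa$ would have to equal some $b^i a$ with $i \ge 1$, which is impossible. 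Hence each $q_j$ is an action prefix $b.t_j$, and from $(b, \tr(t_j)) \in \pf(q) = \pf(p_N)$ one reads off $\tr(t_j) = \tr(b^{i_j-1}a)$ for a uniquely determined $i_j \in \{1, \dots, N\}$; since the only completed-trace shape inside $\tr(b^{i_j-1}a)$ is $b^{i_j-1}a$ itself, this forces $\ctr(t_j) = \{b^{i_j-1}a\}$.

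The main obstacle will be lifting the trace equivalence $\tr(t_j) = \tr(b^{i_j-1}a)$ to the possible-futures equivalence $t_j \pfequiv b^{i_j-1}a$, after which congruence delivers $q_j = b.t_j \pfequiv b^{i_j}a$. I plan to address this via an auxiliary claim, proved by induction on $k \ge 0$: any closed $\bccsp$ term $t$ inheriting the ``no $\nil$ summands or factors'' property and satisfying $\tr(t) = \tr(b^k a)$ and $\ctr(t) = \{b^k a\}$ is $\pfequiv b^k a$. The inductive step runs the same head-operator analysis on every summand of $t$: each must be of the form $b.u$ (or $a.\nil$, up to $\pfequiv$, in the base case $k=0$), and the trace and completed-trace constraints descend cleanly to $u$, matching those of $b^{k-1}a$, so the induction hypothesis applies at $k-1$. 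The coverage condition of the first bullet is then an immediate consequence: for any $i \in \{1, \dots, N\}$, the pair $(b, \tr(b^{i-1}a)) \in \pf(p_N) = \pf(q)$ points to some summand $q_j = b.t_j$ with $\tr(t_j) = \tr(b^{i-1}a)$, and the auxiliary claim delivers $q_j \pfequiv b^i a$.
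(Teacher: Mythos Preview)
Your proposal is correct and follows the same overall strategy as the paper (the $(\Leftarrow)$ direction via congruence and idempotence of $+$; the $(\Rightarrow)$ direction by decomposing $q$ into summands, showing each is a $b$-prefix, and matching each with some $b^{i}a$ through the depth-$1$ possible futures), but the execution of the key step differs. To lift $\tr(t_j)=\tr(b^{i_j-1}a)$ to $t_j\sim_\pf b^{i_j-1}a$, the paper argues by direct contradiction: if they are not $\sim_\pf$, then some $t_j\trans[b^k]t'$ has $\tr(t')\subsetneq\tr(b^{i_j-1-k}a)$, so $\tr(t')$ misses the trailing $a$ and the possible future $(b^{k+1},\tr(t'))$ of $q$ is unmatched in $p_N$. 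Your inductive auxiliary claim (any term with $\tr(t)=\tr(b^ka)$ and $\ctr(t)=\{b^ka\}$ is $\sim_\pf b^ka$) is a clean alternative that also absorbs the no-$\parcomp$ argument at every level and lets you avoid the paper's separate treatment of $N=1$; the paper instead tacitly relies on the preceding lemma (that $\ctr(q)=\ctr(p_N)$ makes $q$ $\parcomp$-free) to exclude parallel summands. One point to tighten: the phrase ``the only completed-trace shape inside $\tr(b^{i_j-1}a)$ is $b^{i_j-1}a$'' does not by itself force $\ctr(t_j)=\{b^{i_j-1}a\}$ (witness $b+ba$, which has $\tr(b+ba)=\tr(ba)$ but $b\in\ctr(b+ba)$); you must also use the containment $\ctr(q_j)\subseteq\ctr(q)=\ctr(p_N)$ that you already recorded to rule out completed traces of $t_j$ ending in $b$.
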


%-------------------------------------------------------------
\begin{proof}
Due to a slight difference in the technical development of the proof, we treat the case of $N=1$ separately.

\begin{itemize}
\item {\sc Case $N >1$.}

\noindent($\Rightarrow$)
Let $q \sim_\pf p_N$.

First of all, we show that $q$ cannot be of the form $q = b.q'$ for some term $q'$.
In fact, in that case, since $N > 1$, we have that $p_N \trans[b] a$ and $p_N \trans[b] ba$. 
Therefore, as $\pf(q) = \pf(p_N)$, it follows that both equalities must hold, $(b,\{\e,a\}) = (b,\tr(q'))$ and $(b,\{\e,b,ba\}) = (b,\tr(q'))$
(since $q'$ is the only term reachable from $q$ via the trace $b$, it has to be trace equivalent to both $a$ and $ba$, besides all the other terms reachable from $p_N$ via the trace $b$ in the case $N > 2$), 
which is a contradiction. 

We have therefore obtained that $q = \sum_{i \in J} q_j$ for some $J$ with $|J| \ge 2$, and some $q_j$, $j \in J$ of the form $b.q_j'$.
Let $i \in \{1,\ldots,N\}$. 
Since $p_N \trans[b] b^{i-1}a$, we have that $(b,\tr(b^{i-1}a))$ is a possible future of $p_N$.
Since $q$ is possible futures equivalent to $p_N$, there is some $j$ with $\tr(q'_j)=\tr(b^{i-1}a)$. 
We claim that $q'_j$ is possible futures equivalent to $b^{i-1}a$, which yields that $q_j = b.q'_j$ is possible futures equivalent to $b^i a$. 
To see this, assume, towards a contradiction, that $\tr(q'_j)=\tr(b^{i-1}a)$ but $q_j'$ is not possible futures equivalent to $b^{i-1}a$. 
This means that there are some $k \le i-1$ and $q'$ such that $q'_j \trans[b^k] q'$ and $\tr(q')$ is strictly included in $\tr(b^{i-1-k}a)$. 
Since $\tr(q')$ is prefix closed, this means that $\tr(q') = \{ b^h \mid 0\leq h \leq \ell \}$ for some $\ell \le i-1-k$. 
This means that $q$ has $(b^{k+1},\tr(q'))$ as one of its possible futures.
On the other hand, $p_N$ has no such possible future, due to the missing trailing $a$, which contradicts our assumption that $q \sim_\pf p_N$.

Assume now, towards a contradiction, that there is a $j \in J$ such that $q_j \not\sim_\pf b^i a$ for all $i \in \{1,\dots,N\}$.
As $q_j$ is of the form $b.q'_j$, for some $q_j'$, we get that $q \trans[b] q_j'$ and $(b,\tr(q_j')) \in \pf(q)$.
We can distinguish two cases according to the form of $\ctr(q_j')$:
\begin{itemize}
\item $\ctr(q_j')$ includes one completed trace of the form $b^k a$ for some $k \in \{0,\dots,N-1\}$.
Since $p_N$ is possible futures equivalent to $q$, there is some $i \in \{1,\dots,N\}$ with $\tr(b^{i-1}a)=\tr(q_j')$. 
In this case, the same reasoning applied above yields a contradiction with $q \sim_\pf p_N$.
\item There are $b^k a, b^h a \in \ctr(q_j')$ for some $k \neq h, k,h \in \{0,\dots,N-1\}$.
Since there is no $p'$ such that $p_N \trans[b] p'$ and $\ctr(p') = \ctr(q_j')$, we get an immediate contradiction with $q \sim_\pf p_N$.
\end{itemize}

($\Leftarrow$) Assume now that $q = \sum_{j \in J} q_j$, for some terms $q_j$ that do not have $+$ as head operator and such that
\begin{enumerate}
\item for each $i \in \{1,\dots,N\}$ there is a $j \in J$ such that $b^i a \sim_\pf q_j$, and
\item for each $j \in J$ there is a $i \in \{1,\dots,N\}$ such that $q_j \sim_\pf b^i a$.
\end{enumerate}
Since possible futures equivalence is a congruence with respect to summation and, moreover, summation is an idempotent operator (axiom A3 in Table~\ref{tab:basic-axioms}), from these assumptions we can directly conclude that $q \sim_\pf p_N$.
%%%

\item {\sc Case $N = 1$.}

\noindent($\Rightarrow$)
Let $q \sim_\pf p_1$.
Hence 
\[
\pf(q) = \left\{
\Big( \e,\{\e,b,ba\} \Big), 
\Big( b,\{\e,a\} \Big), 
\Big( ba,\{\e\} \Big)
\right\}
\]
Assume that $q = b.q'$ for some term $q'$.
Then $(b,\tr(q')) \in \pf(q)$ implies $\tr(q') = \{\e,a\}$, and hence $q' \sim_\pf a$.
This gives $q = \sum_{j \in J} q_j$ for some $J$ with $|J| = 1$ and $q_j \sim_\pf ba$.
The two properties relating the summands of $q$ and $p_1$ are then immediate.

Assume now that $q = \sum_{j \in J} q_j$ for some $J$ with $|J| \ge 2$ and some $q_j$ of the form $b.q_j'$.
Assume, towards a contradiction, that there is some $q_j$ such that $q_j \not\sim_\pf ba$.
Since $q \sim_\pf p_1$ implies $\tr(q) = \tr(ba)$, we get that $q_j$, and thus $q$, has $(b,\{\e\})$ as a possible future.
This gives an immediate contradiction with $q \sim_\pf p_1$.
\\

\noindent($\Leftarrow$)
The proof of this implication follows as in the case of $N > 1$.
\end{itemize}
This concludes the proof.
\end{proof}
%------------------------------------------------------------

In light of Lemma~\ref{lem:pf_closed_pn}, we can also provide a decomposition-like characterisation of closed $\bccsp$ terms that are possible futures equivalent to $a \parcomp p_N$.
To this end, we need first to lift the notions of transition, norm and depth from closed terms to terms.
The action-labelled transition relation over $\bccsp$ terms contains exactly the literals that can derived using the rules in Table~\ref{tab:semantics}.
Clearly, $t \trans[a] t'$ implies $\sigma(t) \trans[a] \sigma(t')$ for all substitutions $\sigma$.
The norm and depth of a term are defined exactly as the norm and depth of processes, by replacing the transition relation over processes with that over terms.
Equivalently, the norm and depth of terms can be defined inductively over the structure of terms as follows:
\begin{multicols}{2}
\begin{itemize}
\item $\norm(\nil) = 0$;
\item $\norm(x) = 0$;
\item $\norm(a.t) = 1 + \norm(t)$;
\item $\norm(t+u) = \min\{\norm(t), \norm(u)\}$;
\item $\norm(t \parcomp u) = \norm(t) + \norm(u)$.
\end{itemize}
\columnbreak
\begin{itemize}
\item $\depth(\nil) = 0$;
\item $\depth(x) = 0$;
\item $\depth(a.t) = 1 + \depth(t)$;
\item $\depth(t+u) = \max\{\depth(t), \depth(u)\}$;
\item $\depth(t \parcomp u) = \depth(t) + \depth(u)$.
\end{itemize}
\end{multicols}

For $k \ge 0$, we denote by $\var_k(t)$ the set of variables occurring in the $k$-derivatives of $t$, namely $\var_k(t) = \{x \in \var(t') \mid t \trans[\alpha] t', |\alpha| = k\}$.
Finally, we write $t \sim_\pf u$ if $\sigma(t) \sim_\pf \sigma(u)$ for all closed substitutions $\sigma$.

\begin{lem}
\label{lem:pf_basic_properties}
Let $t,u$ be two $\bccsp$ terms that do not have $\nil$ summands or factors.
If $t \sim_\ctr u$ then:
\begin{enumerate}
\item \label{lem:pf_basic_properties_vark}
For each $k \ge 0$ it holds that $\var_k (t) = \var_k(u)$.
\item \label{lem:pf_basic_properties_summand}
$t$ has a summand $x$, for some variable $x$, if and only if $u$ does.
\item \label{lem:pf_basic_properties_norm_depth}
$\norm(t) = \norm(u)$ and $\depth(t) = \depth(u)$. 
\end{enumerate}
\end{lem}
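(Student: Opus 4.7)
The plan is to unwind $t \sim_\ctr u$ into its universal-substitution form, i.e.\ $\ctr(\sigma(t)) = \ctr(\sigma(u))$ for every closed substitution $\sigma$, and then to instantiate $\sigma$ so that the structural property in question becomes visible in the completed-trace set. I would tackle (3) first, then (2), then (1), since the analyses grow progressively more delicate in that order.

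For (3) I would apply the substitution $\sigma_\nil$ that maps every variable to $\nil$. A direct structural induction on terms, using $\norm(x) = \norm(\nil) = 0$ and $\depth(x) = \depth(\nil) = 0$ as base cases of the inductive clauses for $\norm$ and $\depth$, yields $\norm(\sigma_\nil(t)) = \norm(t)$ and $\depth(\sigma_\nil(t)) = \depth(t)$, and similarly for $u$. For a closed process the norm and depth are exactly the lengths of the shortest and longest completed traces, so $\ctr(\sigma_\nil(t)) = \ctr(\sigma_\nil(u))$ immediately delivers the two equalities.

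For (2) I would argue by contradiction. If $x$ is a summand of $t$ but not of $u$, write $t = x + t_0$ and build a closed substitution $\sigma$ with $\sigma(x) = \nil$ and, for every other variable $y$, $\sigma(y) = a.\nil$ for a fixed $a \in \Act$; I then refine $\sigma$ on the variables inside $t_0$ so that every non-$x$ summand of $t$ is sent to a process with empty initial set. This forces $\e \in \ctr(\sigma(t))$, while a case analysis on the summands of $u$ (none of which equals $x$) shows that $\sigma(u)$ must have at least one outgoing transition, so $\e \notin \ctr(\sigma(u))$, contradicting $\sigma(t) \sim_\ctr \sigma(u)$.

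For (1) I would proceed by induction on $k$. The base case $k = 0$ reduces to $\var(t) = \var(u)$: if $x \in \var(t) \setminus \var(u)$, I compare the closed substitutions $\sigma$ and $\sigma'$ agreeing on every variable except $x$, with $\sigma'(x) = \nil$ and $\sigma(x) = c^M.\nil$ for an action $c$ and $M$ strictly larger than $\depth(t) + \depth(u)$; since $x \notin \var(u)$ we have $\sigma(u) = \sigma'(u)$, so $\ctr(\sigma(t)) = \ctr(\sigma'(t))$, but a path in $t$ to a position where $x$ appears followed by the $c^M$-computation produces a completed trace of $\sigma(t)$ that is strictly longer than any element of $\ctr(\sigma'(t))$, a contradiction. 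For the inductive step, given a witness $t \trans[\alpha] t'$ with $|\alpha| = k$ and $x \in \var(t')$, the aim is to choose $\sigma$ so that the completed traces of $\sigma(t)$ which extend $\alpha$ through the occurrence of $x$ have a length that can only come from a depth-$k$ occurrence of $x$; matching such a trace in $\ctr(\sigma(u))$ then yields a length-$k$ derivative of $u$ containing $x$. The main obstacle will lie here: parallel composition can interleave transitions coming from several copies of $x$ at different depths, so the length of a completed trace is not a priori uniquely tied to $k$, and the bulk of the work will be designing a $\sigma$ whose numerical fingerprint on $\ctr(\sigma(t))$ pins down the exact depth at which $x$ occurs.
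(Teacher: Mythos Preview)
Your argument for (3) is correct and actually more direct than the paper's, which instead derives (3) from (1) and (2).

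For (2) there is a genuine gap. The assertion $\e \in \ctr(\sigma(t))$ means $\init(\sigma(t)) = \emptyset$, but if $t$ has any prefixed summand $a.s$ then $\sigma(t)$ has an outgoing $a$-transition for \emph{every} $\sigma$; your ``refine $\sigma$ so that every non-$x$ summand is sent to a process with empty initial set'' is therefore impossible in general. The paper goes in the opposite direction: it takes $\sigma(x) = a^n$ with $n > \depth(u)$ and $\sigma(y) = b$ for $y \neq x$, so that the summand $x$ contributes the completed trace $a^n$ to $\ctr(\sigma(t))$, and then argues that no summand of $u$ can match it. To keep that case analysis short the paper proves (1) \emph{before} (2) and uses $x \in \var_0(u)$ to restrict the possible shapes of the summands of $u$; your ordering forfeits this leverage.

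For (1) your base case is fine, but the inductive step is, as you yourself note, left open, and the interleaving obstacle you describe is real for an induction-on-$k$ approach. The paper bypasses the induction via a monotonicity observation you are missing: the set $\{k : x \in \var_k(t)\}$ is downward closed (if a $k$-derivative of $t$ contains $x$, so does every $j$-derivative along the same path for $j \le k$), hence an initial segment of $\mathbb{N}$. One may therefore assume that $k$ is the \emph{largest} index with $x \in \var_k(t)$. At this maximal $k$, in any $k$-derivative $t'$ with $x \in \var(t')$ the variable $x$ is not under a prefix, so $t'$ has the form $x+w$ or $(x+w)\parcomp w'$. Setting $\sigma(x)=a^n$ for $n>\depth(t),\depth(u)$ and $\sigma(y)=\nil$ otherwise then produces a completed trace of $\sigma(t)$ of the form $\alpha a^n\beta$ with $|\alpha|=k$, and matching it in $\ctr(\sigma(u))$ would force $x$ into a $k$-derivative of $u$, a contradiction. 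This ``go to the maximal depth'' trick is precisely what dissolves the depth-ambiguity you were worried about.
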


%---------------------------------------------------
\begin{proof}
\begin{enumerate}
\item Assume, towards a contradiction, that for some $k \ge 0$ there is a variable $x$ such that $x \in \var_k(t) \setminus \var_k(u)$.
In particular, this means that there is a term $t'$ such that $t \trans[\alpha] t'$ for some trace $\alpha$ with $|\alpha| = k$ and $x \in \var(t')$.
However, there is no $u'$ such that $u \trans[\beta] u'$ for some trace $\beta$ with $|\beta| = k$ and $x \in \var(u')$.
We can assume, without loss of generality, that $k$ is the largest natural number such that $x \in \var_k(t)$.

Let $n > \depth(t),\depth(u)$.
Consider the closed substitution $\sigma$ defined by
\[
\sigma(y) =
\begin{cases}
a^n & \text{ if } y = x \\
\nil & \text{ otherwise.}
\end{cases}
\]
Then $t \sim_\ctr u$ implies $t \sim_\tr u$ and thus $t \trans[\alpha] t'$ implies that $u \trans[\alpha] u'$ for some $u'$.
Let us proceed by a case analysis on the structure of $t'$ to obtain the desired contradiction.
\begin{itemize}
\item $t' = x + w$ for some term $w$.
Then we get that $\sigma(t)$ has $\alpha a^n$ as a completed trace.
However, due to the choices of $n$ and $\sigma$, we have that $\sigma(u)$ cannot perform the completed trace $a^n$ after $\alpha$, thus contradicting $t \sim_\ctr u$.
By the assumption that $k$ is the largest natural number such that $x \in \var_k(t)$ is suffices to consider two cases:
\item $t' = (x + w) \parcomp w'$ for some terms $w,w'$ with $w' \neq \nil$.
From $t \sim_\tr u$, we infer that for all $\beta \in \ctr(\sigma(w'))$ we have that $\alpha\beta$ is a trace of $\sigma(t)$ and thus of $\sigma(u)$.
However, we can proceed as in the previous case and argue that $\alpha a^n \beta \in \ctr(t) \setminus \ctr(u)$, which is a contradiction with $t \sim_\ctr u$.
\end{itemize}

\item Assume now that $t$ has a summand $x$ for some variable $x$.
Then, from item~\eqref{lem:pf_basic_properties_vark} of this lemma, $x \in \var_0(t)$ implies $x \in \var_0(u)$.
By applying similar reasoning as in the proof of the first statement of the lemma we obtain that it cannot be the case that all the occurrences of $x$ in $u$ are in the scope of prefixing.
Hence, $u$ has, at least, one summand $u'$ such that either $u' = x$ or $u' = (x + w) \parcomp w'$, for some terms $w,w'$ with $w' \neq \nil$.
Our order of business is to show that it cannot be the case that all summands $u'$ of $u$ with $x \in \var_0(u')$ are of the latter form. 
To this end, assume, towards a contradiction, that whenever $x \in \var_0(u')$ then $u' = (x + w) \parcomp w'$.
Let $n > depth(u)$ and consider the closed substitution $\sigma'$ defined by
\[
\sigma'(y) =
\begin{cases}
a^n & \text{ if } y = x \\
b & \text{ otherwise.}
\end{cases}
\]
Since $t$ has a summand $x$, we obtain that $a^n \in \ctr(\sigma'(t))$.
However, $a^n \not\in\ctr(\sigma'(u))$.
In fact, $w' \neq \nil$ and $\sigma'(y) \neq \nil$ for all variables $y \in \Var$ give that $\init(\sigma'(w')) \neq \emptyset$ for all terms $w'$ occurring in the summands $u'$ of $u$ such that $x \in \var_0(u')$.
Therefore $a^n \not \in \ctr(\sigma'(u'))$ for all such summands.
Due to choices of $\sigma'$ and $n$, we can conclude that there is no summand of $\sigma'(u)$ having $a^n$ has a completed trace, so that $a^n \not \in \ctr(\sigma'(u))$.
This gives a contradiction with $\sigma'(t) \sim_\ctr \sigma'(u)$ and thus with $t \sim_\ctr u$.
\begin{comment}
Since $t$ has a summand $x$, we obtain that $P = (a, \{a^k \mid k=0,\dots,n-1\}) \in \pf(\sigma'(t))$.
However, $P \not\in \pf(\sigma'(u))$.
In fact, $w' \neq \nil$ and $\sigma'(y) \neq \nil$ for all variables $y \in \Var$ give that $\init(\sigma'(w')) \neq \emptyset$ for all terms $w'$ occurring in the summands $u'$ of $u$ such that $x \in \var_0(u')$.
Hence, $\tr(a^{n-1} \parcomp \sigma'(w')) \neq \{a^{k} \mid k=0,\dots,n-1\}$.
Similarly, whenever $\sigma'(w) \trans[a] r$, we have $\tr(\sigma'(x) + \sigma'(w) \parcomp r) \neq \{a^k \mid k=0,\dots,n-1\}$.
Due to choices of $\sigma'$ and $n$, we can conclude that no other summand of $\sigma'(u)$ can have $P$ as possible future, giving thus $P \not\in\pf(\sigma'(u))$.
This gives a contradiction with $\sigma'(t) \sim_\pf \sigma'(u)$ and thus with $t \sim_\pf u$.
\end{comment}

\item $\norm(t) = \norm(u)$ and $\depth(t) = \depth(u)$ follow immediately from $t \sim_\ctr u$ and items~\eqref{lem:pf_basic_properties_vark}, \eqref{lem:pf_basic_properties_summand} of this lemma. \qedhere
\end{enumerate}
\end{proof}
%---------------------------------------------------

\begin{rem}
Since for all $\bccsp$ terms $t,u$ we have that $t \sim_\pf u$ implies $t \sim_\ctr u$, Lemma~\ref{lem:pf_basic_properties} holds also for all pairs of possible futures equivalent terms.
\end{rem}

\begin{prop}
\label{prop:pf_parcomp}
Assume that $p,q$ are two $\bccsp$ processes such that $p,q \not \sim_\pf \nil$, $p,q$ do not have $\nil$ summands or factors, and $p \parcomp q \sim_\pf a \parcomp p_N$, for some $N > 1$.
Then either $p \sim_\pf a$ and $q \sim_\pf p_N$, or $p \sim_\pf p_N$ and $q \sim_\pf a$.
\end{prop}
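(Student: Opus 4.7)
The plan is first to use Lemma~\ref{lem:pf_basic_properties} to extract $\norm(p) + \norm(q) = \norm(a \parcomp p_N) = 3$ and $\depth(p) + \depth(q) = \depth(a \parcomp p_N) = N + 2$. Since $p,q \not\sim_\pf \nil$, both norms are positive, and by the symmetry of the claim I would assume $\norm(p) = 1$ and $\norm(q) = 2$. Thus $p$ has a summand of the form $c.p_0$ with $p_0 \sim_\pf \nil$, and the transition $p \parcomp q \trans[c] p_0 \parcomp q \sim_\pf q$ exhibits $(c, \tr(q))$ as a possible future of $p \parcomp q$, and hence of $a \parcomp p_N$. The only length-$1$ first-component possible futures of $a \parcomp p_N$ are $(a, \tr(p_N))$ and $(b, \tr(a \parcomp b^{i-1}a))$ for $i \in \{1, \ldots, N\}$.

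The first hurdle is ruling out $c = b$. If $c = b$, then $\tr(q) = \tr(a \parcomp b^{i-1} a)$ for some $i$, which contains a trace with two occurrences of $a$; since $\tr(a \parcomp p_N)$ admits at most two $a$'s per trace, $\tr(p)$ must be contained in $\{b\}^*$. Then $a \in \init(p \parcomp q) \setminus \init(p)$ must come from $q$, so $q \trans[a] q'$ and matching the unique $(a, \cdot)$-possible future gives $\tr(p \parcomp q') = \tr(p_N)$. But $b^{i-1} a \in \tr(q')$ (since $a b^{i-1}a \in \tr(q)$) and $b \in \tr(p)$ (since $b \in \init(p)$), so the interleaving of $b$ with $b^{i-1}a$ produces the trace $b^{i-1}a b$, which has $a$ followed by $b$ and hence does not belong to $\tr(p_N) = \{\epsilon\} \cup \{b^j : 1 \le j \le N\} \cup \{b^j a : 1 \le j \le N\}$: contradiction. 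Therefore $c = a$, $\tr(q) = \tr(p_N)$, and in particular $\depth(q) = N+1$, $\depth(p) = 1$, and $\init(q) = \{b\}$.

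Since $\depth(p) = 1$, every summand of $p$ is of the form $d.r$ with $r \sim_\pf \nil$; a $b$-summand would produce the possible future $(b, \tr(p_N))$, which cannot match any $(b, \tr(a \parcomp b^{i-1}a))$ because $a \in \tr(a \parcomp b^{i-1}a)$ while $a \notin \tr(p_N)$. Hence $p$ has only $a$-summands and $p \sim_\pf a$, whence $a \parcomp q \sim_\pf a \parcomp p_N$. Writing $q = \sum_{j \in J} b.q'_j$ (no $\nil$-summands), the set of $\sim_\pf$-classes $\{[a \parcomp q'_j]_{\sim_\pf}\}_{j\in J}$ must equal $\{[a \parcomp b^{i-1}a]_{\sim_\pf} : 1 \le i \le N\}$; matching the $(a,\cdot)$-component of possible futures at one level deeper then gives $\tr(q'_j) = \tr(b^{i-1}a)$ for a suitable $i$ depending on $j$, with every $i \in \{1,\ldots,N\}$ realised. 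The plan is to upgrade each such trace equivalence to $q'_j \sim_\pf b^{i-1}a$ by induction on $i$, exploiting the fact that $b^{i-1}a$ has a unique $b$-derivative $b^{i-2}a$ so that the matching of $b$-derivatives of $a \parcomp q'_j$ forces every $b$-derivative $r$ of $q'_j$ to satisfy $a \parcomp r \sim_\pf a \parcomp b^{i-2}a$, which yields $r \sim_\pf b^{i-2}a$ by the inductive hypothesis. Lemma~\ref{lem:pf_closed_pn} then delivers $q \sim_\pf p_N$.

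The main obstacle will be this last inductive upgrade from $\tr$- to $\pf$-equivalence: the no-$\nil$-summands-or-factors hypothesis is not directly inherited by derivatives of $q$, so a separate saturation-style argument (in the spirit of Appendix~\ref{app:saturation}) will be needed to assume the $q'_j$ and their iterated derivatives can be taken to satisfy that condition without loss of generality. A secondary subtlety is the careful $a$-counting argument used to exclude $c = b$, which must cover both the case $i = 1$ (where the contradiction surfaces already at depth $2$) and the cases $i \ge 2$ uniformly.
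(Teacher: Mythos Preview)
Your argument up to $p \sim_\pf a$ is essentially correct and parallels the paper's, albeit with a more involved trace-counting argument to exclude a $b$-summand in $p$; the paper does this more directly via completed traces (if $\init(p)=\{b\}$ then $p$ has a summand $b$, so $\alpha b\in\ctr(p\parcomp q)$ for any $\alpha\in\ctr(q)$, contradicting that every completed trace of $a\parcomp p_N$ ends in $a$).

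The second half, however, has a genuine gap. From $a\parcomp q \sim_\pf a\parcomp p_N$ you \emph{cannot} conclude that the $\sim_\pf$-classes of the $b$-derivatives match as sets. Possible futures equivalence is only $2$-nested trace equivalence: $\pf(a\parcomp q)=\pf(a\parcomp p_N)$ yields, for first component $b$, that the collections of trace sets $\{\tr(a\parcomp q'_j)\}_j$ and $\{\tr(a\parcomp b^{i-1}a)\}_i$ coincide, and for longer first components only that the \emph{unions} $\bigcup_j\pf(a\parcomp q'_j)$ and $\bigcup_i\pf(a\parcomp b^{i-1}a)$ coincide. Neither implies that each $a\parcomp q'_j$ is $\sim_\pf$-equivalent to some $a\parcomp b^{i-1}a$, so your inductive upgrade ``$a\parcomp r\sim_\pf a\parcomp b^{i-2}a$ hence $r\sim_\pf b^{i-2}a$'' never gets its hypothesis. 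A separate, smaller gap is that writing $q=\sum_j b.q'_j$ presupposes $q$ has no summand of the form $q_1\parcomp q_2$, which you have not ruled out.

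The paper's route for this half is different and avoids both issues. It first establishes $\ctr(q)=\ctr(p_N)$ (using that $(ab^i a,\{\varepsilon\})\in\pf(a\parcomp q)$ and that the initial $a$ cannot come from $q$, together with a short argument excluding completed traces of $q$ ending in $b$). Then Lemma~\ref{lem:ct_closed_pn} gives that $q$ contains no occurrence of $\parcomp$ at all, so $q=\sum_j q_j$ with each $q_j$ a prefixed term. Finally, one invokes the structural characterisation of Lemma~\ref{lem:pf_closed_pn}: if $q\not\sim_\pf p_N$, some $q_j$ must have two distinct completed traces $b^k a$ and $b^h a$, and this produces a possible future of $a\parcomp q$ not present in $\pf(a\parcomp p_N)$, contradicting $a\parcomp q\sim_\pf a\parcomp p_N$. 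No propagation of $\sim_\pf$ to derivatives is needed.
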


%------------------------------------------------------------
\begin{proof}
By Lemma~\ref{lem:pf_basic_properties}.\ref{lem:pf_basic_properties_norm_depth}, from 
\begin{equation}
\label{eq:prop_pf_parcomp}
p \parcomp q \sim_\pf a \parcomp p_N
\end{equation}
we can infer that $\norm(p \parcomp q) = 3$.
Hence, since $p,q \not\sim_\pf \nil$, we have that either $\norm(p) = 1$ and $\norm(q) = 2$, or vice versa.
We consider only the former case and prove that then $p \sim_\pf a$ and $q \sim_\pf p_N$.
In the other case it can be proved that $p \sim_\pf p_N$ and $q \sim_\pf a$ by analogous reasoning.

So assume that $\norm(p) = 1$ and $\norm(q) = 2$.
From Equation~\eqref{eq:prop_pf_parcomp} we get that $\init(p),\init(q) \subseteq \{a,b\}$.
First of all we argue that it cannot be the case that $a \in \init(p) \cap \init(q)$.
In fact, in that case $p \parcomp q$ would be able to perform the trace $a^2$, whereas $a \parcomp p_N$ cannot do so, thus giving a contradiction with Equation~\eqref{eq:prop_pf_parcomp}.
We now proceed to show that $a \in \init(p)$.
Assume now, towards a contradiction, that this is not the case and $\init(p) = \{b\}$.
This, together with $\norm(p) = 1$, gives that $p$ has a summand $b$.
Hence, given any $\alpha \in \ctr(q)$ we have that $\alpha b \in \ctr(p\parcomp q)$.
However, $\alpha b \not \in \ctr(a \parcomp p_N)$, as any completed trace of $a \parcomp p_N$ must end with an $a$, thus giving that $p \parcomp q \not\sim_\ctr a \parcomp p_N$, which contradicts Equation~\eqref{eq:prop_pf_parcomp}.
We have therefore obtained that $a \in \init(p), a \not \in \init(q)$ and, moreover, $p$ does not have a summand $b$.
Since $p$ has norm $1$, we conclude that it has a summand $a$.

We now proceed to show that also $\depth(p) = 1$, and thus $p \sim_\pf a$.
Since $p$ has a summand $a$, we have that $p \parcomp q \trans[a] \nil \parcomp q \sim_\pf q$.
By Equation~\eqref{eq:prop_pf_parcomp}, we get $a \parcomp p_N \trans[a] r$ for some $r$ such that $q \sim_\tr r$.
Since $a \parcomp p_N$ has a unique initial $a$-transition $a \parcomp p_N \trans[a] \nil \parcomp p_N$, we get that $r \sim_\pf p_N$ and $q \sim_\tr p_N$.
As a consequence, we obtain that $\depth(q) = N+1$.
Then we have
\begin{align*}
\depth(p) ={} & \depth(p \parcomp q) - \depth(q) \\
={} & \depth(a \parcomp p_N) - \depth(q) & \text{(by Eq.~\eqref{eq:prop_pf_parcomp} and Lem.~\ref{lem:pf_basic_properties}.\ref{lem:pf_basic_properties_norm_depth})} \\
={} & N+2 - (N+1) \\
={} & 1.
\end{align*}
Therefore, we have obtained that $p \sim_\pf a$ and thus, since possible futures equivalence is a congruence with respect to parallel composition, from Equation~\eqref{eq:prop_pf_parcomp} it follows that
\begin{equation}
\label{eq:prop_pf_parcomp_2}
a \parcomp q \sim_\pf a \parcomp p_N.
\end{equation}
Our order of business will now be to show that $q \sim_\pf p_N$.
We already know that $\tr(q) = \tr(p_N)$.
Moreover, as an initial $a$-transition of $a \parcomp q$ cannot stem from $q$, from $(ab^i a,\emptyset) \in \pf(a \parcomp p_N)$ for all $i \in \{1,\dots,N\}$, we get that $\ctr(p_N) \subseteq \ctr(q)$.
Assume, towards a contradiction, that $\ctr(q) \not\subseteq \ctr(p_N)$.
As $q \sim_\tr p_N$, this implies that $q$ has a completed trace of the form $b^j$ for some $1 \le j \le N$.
Hence, $(ab^j,\emptyset) \in \pf(a \parcomp q)$.
However, no completed trace of $a \parcomp p_N$ has a $b$ as last symbol and thus $a \parcomp p_N$ has no possible future of the form $(\alpha b,\emptyset)$ for some trace $\alpha$.
This gives a contradiction with Equation~\eqref{eq:prop_pf_parcomp_2}.
Therefore, we have that $\tr(q) = \tr(p_N)$ and $\ctr(q) = \ctr(p_N)$, thus giving $q \sim_\ctr p_N$.
Notice that, by Lemma~\ref{lem:ct_closed_pn}, this implies that $q$ does not contain any occurrence of $\parcomp$ and $q$ can be written in the general form $q = \sum_{j \in J} q_j$ where none of the $q_j$ has $+$ as head operator.

Assume, towards a contradiction, that $q \not\sim_\pf p_N$.
By Lemma~\ref{lem:pf_closed_pn} this implies that either there is an $i \in \{1,\dots, N\}$ such that $b^i a \not\sim_\pf q_j$ for all $j \in J$, or there is a $j \in J$ such that $q_j \not\sim_\pf b^i a$ for all $i \in \{1,\dots,N\}$.
In both cases, we can proceed as in the proof of Lemma~\ref{lem:pf_closed_pn} and obtain a contradiction with $a \parcomp p_N \sim_\pf a \parcomp q$.

We have therefore obtained that $q \sim_\pf p_N$, and the proof is thus concluded.
\end{proof}
%------------------------------------------------------------

The following lemma characterises the open $\bccsp$ terms whose substitution instances can be equivalent in possible futures semantics to terms having at least two summands of $p_N$ ($N > 1$) as their summands. 

\begin{lem}
\label{lem:pf_summand_var}
Let $t$ be a $\bccsp$ term that does not have $+$ as head operator.
Let $m > 1$ and $\sigma$ be a closed substitution such that $\sigma(t)$ has no $\nil$ summands or factors.
If $\sigma(t) \sim_\pf \sum_{k = 1}^m b^{i_k} a$, for some $1\le i_1< \dots < i_m$, then $t = x$ for some variable $x$.
\end{lem}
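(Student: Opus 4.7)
My approach is a case analysis on the syntactic shape of $t$: since $t$ has no $+$ as head operator, it must be $\nil$, a variable, a prefix $c.t'$, or a parallel composition $t_1 \parcomp t_2$, and I aim to rule out every case except the variable. The case $t = \nil$ is immediate, as $\sigma(t) = \nil$ has no initial transitions while the target $\sum_{k=1}^m b^{i_k}a$ has initial action $b$ (and $\sim_\pf$ refines the information contained in $\init$).

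For $t = c.t'$, the process $\sigma(t) = c.\sigma(t')$ has exactly one initial transition, hence admits at most one possible future of the form $(c, Y)$. The target, however, has $m$ initial $b$-transitions to the processes $b^{i_k - 1}a$, whose trace sets are pairwise distinct because $1 \le i_1 < \cdots < i_m$ yields distinct depths; this produces $m$ pairwise distinct possible futures starting with $b$. Since $m > 1$, this contradicts $\sigma(t) \sim_\pf \sum_{k=1}^m b^{i_k}a$, and this step is precisely where the assumption $m > 1$ is indispensable.

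The main work is ruling out $t = t_1 \parcomp t_2$. Writing $p_j = \sigma(t_j)$, the no-$\nil$-factor hypothesis on $\sigma(t)$ gives $p_j \not\sim_\pf \nil$ for $j \in \{1,2\}$, so $\init(p_j) \neq \emptyset$ and every completed trace of $p_j$ is nonempty. Moreover, $\tr(\sigma(t)) = \tr(\sum_{k=1}^m b^{i_k}a) \subseteq \{a,b\}^*$ forces every action reachable from $\sigma(t)$ to lie in $\{a,b\}$, so every completed trace of each $p_j$ lies in $\{a,b\}^+$. The key observation is that for any $\alpha_j \in \ctr(p_j)$ the plain concatenations $\alpha_1\alpha_2$ and $\alpha_2\alpha_1$ are valid interleavings, hence completed traces of $p_1 \parcomp p_2$; since $\sim_\pf$ preserves completed traces they must belong to $\ctr(\sum_{k=1}^m b^{i_k}a) \subseteq \{b^i a \mid i \ge 0\}$. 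But every such word contains exactly one $a$, placed at its end, so $\alpha_1\alpha_2 \in \{b^i a\}$ forces $\alpha_1 \in \{b\}^+$ and $\alpha_2$ to end in $a$, while the symmetric condition on $\alpha_2\alpha_1$ forces the opposite, a contradiction. The delicate point here is that the no-$\nil$-factor hypothesis is essential to guarantee that both $\alpha_1$ and $\alpha_2$ are nonempty, without which the concatenation argument collapses; apart from this subtlety, the remaining cases are routine.
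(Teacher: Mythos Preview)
Your proof is correct and follows essentially the same case analysis as the paper's proof. The only minor differences are that you treat the $\nil$ case and arbitrary prefixes $c.t'$ explicitly (the paper only spells out $b.t'$, leaving the others implicit via the observation $\init(q_m)=\{b\}$), and in the parallel case you make the completed-trace concatenation argument fully explicit, whereas the paper simply asserts that no $p_1\parcomp p_2$ with both components non-$\nil$ can have all completed traces of the form $b^ia$.
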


%-----------------------------------------------------------
\begin{proof}
For simplicity of notation, we let $q_m$ denote $\sum_{k = 1}^m b^{i_k} a$.
We proceed by showing that the remaining possible forms for $t$ give a contradiction with $\sigma(t) \sim_\pf q_m$.
We remark that $\init(q_m) = \{b\}$, and $\sigma(t) \sim_\pf q_m$ implies $\ctr(\sigma(t)) = \ctr(q_m)$.
\begin{itemize}
\item $t = b.t'$ for some term $t'$.
Then $\ctr(\sigma(t')) = \{b^{i_k-1}a \mid k \in \{1,\dots,m\} \}$, and since $m>1$ we have that $|\ctr(t')| \ge 2$.
It is then immediate to verify that $(b,\tr(\sigma(t'))) \in \pf(\sigma(t))$, whereas $(b,\tr(\sigma(t'))) \not \in \pf(q_m)$, since whenever $q_m \trans[b] r$ then $\tr(r)$ includes only one trace of the form $b^j a$ for some $j \in \{i_1-1,\dots,i_m-1\}$.
This gives a contradiction with $\sigma(t) \sim_\pf q_m$. 

\item $t = t' \parcomp t''$ for some terms $t',t''$.
Since $\sigma(t)$ has no $\nil$ summands or factors, neither does $t$ thus giving $t',t'',\sigma(t'),\sigma(t'') \not \sim_\pf \nil$.
Hence, we directly get a contradiction with $\ctr(\sigma(t)) = \ctr(q_m)$, since all completed traces of $q_m$ have exactly one occurrence of $a$ and this occurrence is as the last action in the completed trace.
No process of the form $p \parcomp q$ with $\init(p),\init(q) \neq \emptyset$ can satisfy the same property. \qedhere
\end{itemize}
\end{proof}
%-----------------------------------------------------------

We now have all the ingredients necessary to prove Theorem~\ref{thm:pf_negative}.
To streamline our presentation, we split the proof of into two parts: Proposition~\ref{prop:pf_substitution} deals with the preservation of the witness property under provability from the substitution rule of equational logic.
Theorem~\ref{thm:pf_axiom_derivation} builds on Proposition~\ref{prop:pf_substitution} and proves the witness property to be an invariant under provability from finite sound axiom systems.

The following lemma is immediate.

\begin{lem}
\label{lem:depth_var}
Let $t$ be a $\bccsp$ term, and let $\sigma$ be a closed substitution. 
If $x\in\var(t)$ then $\depth(\sigma(t))\geq \depth(\sigma(x))$.
\end{lem}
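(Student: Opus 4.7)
The plan is to prove this by straightforward structural induction on the term $t$, exploiting the inductive definition of $\depth$ on $\bccsp$ terms recalled in the text just before Proposition~\ref{prop:pf_parcomp}. The claim simply says that a closed instance of $t$ is at least as deep as the closed instance of any variable it contains, which is intuitively clear since each operator either preserves or increases the depth of its argument(s).

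First, I would dispatch the base cases. If $t = \nil$, then $\var(t) = \emptyset$, so the statement is vacuously true. If $t = y$ for some variable $y$, then $x \in \var(t)$ forces $x = y$, hence $\sigma(t) = \sigma(x)$ and the inequality holds as an equality.

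For the inductive step, I would argue by cases on the head operator of $t$. If $t = a.t'$ and $x \in \var(t) = \var(t')$, the induction hypothesis gives $\depth(\sigma(t')) \geq \depth(\sigma(x))$, and then $\depth(\sigma(t)) = 1 + \depth(\sigma(t')) \geq \depth(\sigma(x))$. If $t = t_1 + t_2$ and $x \in \var(t) = \var(t_1) \cup \var(t_2)$, then $x \in \var(t_i)$ for some $i \in \{1,2\}$; the induction hypothesis yields $\depth(\sigma(t_i)) \geq \depth(\sigma(x))$, and so $\depth(\sigma(t)) = \max\{\depth(\sigma(t_1)), \depth(\sigma(t_2))\} \geq \depth(\sigma(t_i)) \geq \depth(\sigma(x))$. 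Finally, if $t = t_1 \parcomp t_2$, a symmetric case analysis applies, using that $\depth(\sigma(t)) = \depth(\sigma(t_1)) + \depth(\sigma(t_2)) \geq \depth(\sigma(t_i))$ for either $i$, so the bound $\depth(\sigma(t_i)) \geq \depth(\sigma(x))$ furnished by the induction hypothesis suffices.

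There is no genuine obstacle here; the statement is almost definitional once one recalls that depth is monotone in each argument of every operator of $\bccsp$. The only minor subtlety worth flagging explicitly in the formal write-up is that $\sigma$ distributes over the term constructors (so $\sigma(a.t') = a.\sigma(t')$, $\sigma(t_1 + t_2) = \sigma(t_1) + \sigma(t_2)$, and $\sigma(t_1 \parcomp t_2) = \sigma(t_1) \parcomp \sigma(t_2)$), which is what allows the inductive equations for $\depth$ to be applied to $\sigma(t)$ directly.
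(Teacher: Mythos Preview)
Your argument is correct and is exactly the routine structural induction the paper has in mind; the paper simply declares the lemma ``immediate'' without spelling it out. There is nothing to add.
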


\begin{prop}
\label{prop:pf_substitution}
Let $t \approx u$ be an equation over $\bccsp$ that is sound modulo $\sim_\pf$. 
Let $\sigma$ be a closed substitution with $p=\sigma(t)$ and $q=\sigma(u)$. 
Suppose that $p$ and $q$ have neither $\nil$ summands nor $\nil$ factors, and that $p,q \sim_\pf a \parcomp p_N$ for some $N$ larger than the sizes of $t$ and $u$. 
If $p$ has a summand possible futures equivalent to $a \parcomp p_N$, then so does $q$.
\end{prop}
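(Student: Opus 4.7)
The plan is to trace which subterm of $t$ gives rise to the special summand $r \sim_\pf a \parcomp p_N$ of $p = \sigma(t)$, and then use the soundness $t \sim_\pf u$ (which in particular implies $t \sim_\ctr u$) together with the size bound $N > \size(t)$ to locate a corresponding summand in $q$. By Remark~\ref{rmk:form_bccsp}, decompose $t = \sum_{i \in I} t_i$ with no $t_i$ having $+$ as head operator. Every summand of $p$ then arises in exactly one of two ways: (A) as a summand of $\sigma(x)$ for some variable summand $x = t_i$ of $t$; or (B) as $\sigma(t_i)$ itself, for some non-variable $t_i$ (whose $\sigma$-image does not have $+$ as head either, so contributes a single summand of $p$).

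In case (A), I would appeal to Lemma~\ref{lem:pf_basic_properties}(\ref{lem:pf_basic_properties_summand}) applied to $t \sim_\ctr u$ to conclude that $u$ also has $x$ as a summand. Therefore $q = \sigma(u)$ inherits all summands of $\sigma(x)$, in particular the witness $r$.

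In case (B), the head operator of $t_i$ cannot be $\nil$ (since $\sigma(\nil) = \nil \not\sim_\pf a \parcomp p_N$) nor a prefix $a'.\_$ (which would give $|\init(\sigma(t_i))| = 1$, contradicting $\init(a \parcomp p_N) = \{a,b\}$), so $t_i = s_1 \parcomp s_2$. Since $p$ has no $\nil$ factors, $\sigma(s_1), \sigma(s_2) \not\sim_\pf \nil$, and Proposition~\ref{prop:pf_parcomp} forces, up to swapping, $\sigma(s_1) \sim_\pf a$ and $\sigma(s_2) \sim_\pf p_N$.

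The main obstacle is completing case (B): one must exhibit a summand of $q$ that is $\sim_\pf a \parcomp p_N$. The plan is a counting argument using $N > \size(s_2)$. Writing $s_2 = \sum_{k \in K} s_{2,k}$ with no $s_{2,k}$ having $+$ as head, Lemma~\ref{lem:pf_closed_pn} applied to $\sigma(s_2) \sim_\pf p_N$ shows that the summands of $\sigma(s_2)$, up to $\sim_\pf$, are precisely the $N$ pairwise inequivalent terms $b^1 a, \dots, b^N a$. Each non-variable $s_{2,k}$ contributes at most one such summand, accounting for at most $|K| \leq \size(s_2) < N$ of them; consequently $s_2$ must have a variable summand $y$ whose image $\sigma(y)$ realises the missing $b^i a$ terms, with Lemma~\ref{lem:pf_summand_var} constraining the shape of the remaining non-variable $s_{2,k}$. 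The presence of $y$ as a summand of the second component of the $\parcomp$-summand $s_1 \parcomp s_2$ of $t$ is a syntactic pattern that can be transferred to $u$ by applying Lemma~\ref{lem:pf_basic_properties}(\ref{lem:pf_basic_properties_vark}) to $t \sim_\pf u$; performing the same decomposition and case analysis on the $u$-side then yields a summand of $q$ of the form $\sigma(s_1') \parcomp \sigma(s_2')$ with $\sigma(s_1') \sim_\pf a$ and $\sigma(s_2') \sim_\pf p_N$, as required.
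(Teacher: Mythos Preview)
Your opening strategy matches the paper's exactly: decompose $t=\sum_{i\in I} t_i$, locate the $t_i$ responsible for the special summand, dispatch the variable case via Lemma~\ref{lem:pf_basic_properties}(\ref{lem:pf_basic_properties_summand}), rule out prefixing, and in the parallel case $t_i=s_1\parcomp s_2$ use Proposition~\ref{prop:pf_parcomp} together with the size bound $N>\size(t)$ and Lemma~\ref{lem:pf_closed_pn} to extract a variable summand $y$ of $s_2$ with $\sigma(y)\sim_\pf\sum_{k=1}^m b^{i_k}a$ for some $m>1$.

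The genuine gap is your last sentence. Lemma~\ref{lem:pf_basic_properties}(\ref{lem:pf_basic_properties_vark}) only yields $y\in\var_0(u)$; it does \emph{not} transfer any ``syntactic pattern'' such as ``$y$ occurs as a summand of the second argument of a $\parcomp$-summand''. From $y\in\var_0(u_j)$ for some $j$, the variable could a~priori sit under prefixing inside $u_j$, or one could have $u_j=y$, or $u_j=(y+w)\parcomp w'$. You cannot ``perform the same decomposition and case analysis on the $u$-side'': on the $t$-side you started from the knowledge that $\sigma(t_i)\sim_\pf a\parcomp p_N$, whereas on the $u$-side that is precisely what you must prove. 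The paper spends roughly half of the proof on this step: it rules out each prefix configuration by exhibiting possible futures of $\sigma(u)$ not present in $a\parcomp p_N$ (in one sub-case it even changes the substitution, setting $\sigma'(y)=a p_N$, to derive a contradiction); it then excludes $u_j=y$ by a direct possible-futures argument; and in the remaining case $u_j=(y+w)\parcomp w'$ it carries out a depth/trace computation to force $\sigma(w')\sim_\pf a$ and $\sigma(y)+\sigma(w)\sim_\pf p_N$. None of this is captured by ``the same case analysis'', and your sketch gives no indication of how to carry out these exclusions.
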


%---------------------------------------------
\begin{proof}
Since $\sigma(t)$ and $\sigma(u)$ have no $\nil$ summands or factors, neither do $t$ and $u$.  
We can therefore assume that, for some finite non-empty index sets $I, J$,
\begin{equation}
\label{eq:t_sum}
t = \sum_{i\in I} t_i 
\quad \text{ and } \quad 
u  = \sum_{j\in J} u_j,
\end{equation}
where none of the $t_i$ ($i \in I$) and $u_j$ ($j \in J$) is $\nil$ or has $+$ as its head operator. 
Note that, as $t$ and $u$ have no $\nil$ summands or factors, neither do $t_i$ ($i \in I$) and $u_j$ ($j \in J$).

Since $p=\sigma(t)$ has a summand which is possible futures equivalent to $a \parcomp p_N$, there is an index $i\in I$ such that
\[
\sigma(t_i) \sim_\pf a \parcomp p_N. 
\]
Our aim is now to show that there is an index $j \in J$ such that 
\[
\sigma(u_j) \sim_\pf a \parcomp p_N, 
\]
proving that $q=\sigma(u)$ has the required summand.
This we proceed to do by a case analysis on the form $t_i$ may have.
\begin{enumerate}
\item \label{case:t_variable} 
{\sc Case $t_i = x$ for some variable $x$}.
In this case, we have that $\sigma(x)\sim_\pf a \parcomp p_N$ and $t$ has $x$ as a summand. 
As $t \approx u$ is sound with respect to possible futures equivalence, from $t \sim_\pf u$ we get $t \sim_\ctr u$.
Hence, by Lemma~\ref{lem:pf_basic_properties}.\ref{lem:pf_basic_properties_summand}, we obtain that $u$ has a summand $x$ as well,
namely there is an index $j \in J$ such that $u_j = x$.
It is then immediate to conclude that $q=\sigma(u)$ has a summand which is possible futures equivalent to $a \parcomp p_N$.
%%%

\item \label{case:t_prefixing} 
{\sc Case $t_i = c t'$ for some action $c \in \{a,b\}$ and term $t'$}. 
This case is vacuous because, since $\sigma(t_i) = c \sigma(t') \trans[c] \sigma(t')$ is the only transition afforded by $\sigma(t_i)$, this term cannot be possible futures equivalent to $a \parcomp p_N$. 
%%%

\item \label{case:t_parallel} 
{\sc Case $t_i = t' \parcomp t''$ for some terms $t',t''$}. 
We have that $\sigma(t_i) = \sigma(t') \parcomp \sigma(t'') \sim_\pf a \parcomp p_N$.
As $\sigma(t_i)$ has no $\nil$ factors, it follows that $\sigma(t')\not\sim_\pf \nil$ and $\sigma(t'')\not\sim_\pf \nil$.  
Thus, by Proposition~\ref{prop:pf_parcomp}, we can infer that, without loss of generality, 
\[
\sigma(t') \sim_\pf a \text{ and } \sigma(t'')\sim_\pf p_N.
\] 
Notice that $\sigma(t'') \sim_\pf p_N$ implies $\ctr(\sigma(t'')) = \ctr(p_N)$.
Now, $t''$ can be written in the general form $t'' = v_1 + \cdots + v_l$ for some $l >0$, where none of the summands $v_h$ is $\nil$ or a sum.
By Lemma~\ref{lem:pf_closed_pn}, $\sigma(t'') \sim_\pf p_N$ implies that for each $i \in \{1,\dots,N\}$ there is a summand $r_i$ of $\sigma(t'')$ such that $b^i a \sim_\pf r_i$, and for each summand $r$ of $\sigma(t'')$ there is an $i_r \in \{1,\dots,N\}$ such that $r \sim_\pf b^i a$.
Observe that, since $N$ is larger than the size of $t$, we have that $l < N$.
Hence, there must be some $h\in\{1,\ldots,l\}$ such that 
\[
\sigma(v_h) \sim_\pf \sum_{k = 1}^m b^{i_k} a 
\]
for some $m>1$ and $1\le i_1 < \ldots < i_m\le N$. 
The term $\sigma(v_h)$ has no $\nil$ summands or factors, or else, so would $\sigma(t'')$ and $\sigma(t)$. 
By Lemma~\ref{lem:pf_summand_var}, it follows that $v_h$ can only be a variable $x$ and
%thus that
\begin{equation}
\label{eq:sigma(x)}
\sigma(x) \sim_\pf  \sum_{k=1}^m b^{i_k} a.
\end{equation}
Observe, for later use, that, since $t'$ has no $\nil$ factors, the above equation yields that $x\not\in\var(t')$, or else
$\sigma(t')\not \sim_\pf a$ due to Lemma~\ref{lem:depth_var}. 
So, modulo possible futures equivalence, $t_i$ has the form $t' \parcomp (x+t''')$, for some term $t'''$, with $x\not\in\var(t')$, $\sigma(t')\sim_\pf a$ and $\sigma(x + t'') \sim_\pf p_N$.
    
Our order of business will now be to show that $\sigma(u)$ has a summand $u_j$ that is possible futures equivalent to $a \parcomp p_N$.
We recall that $t \sim_\pf u$ implies $t \sim_\ctr u$.
Thus, by Lemma~\ref{lem:pf_basic_properties}.\ref{lem:pf_basic_properties_vark} we obtain that $\var_k(t) = \var_k(u)$ for all $k \ge 0$.
Hence, from $x \in \var_0(t_i)$ we get that there is at least one $j \in J$ such that $x \in \var_0(u_j)$.

So, firstly, we show that $x$ cannot occur in the scope of prefixing in $u_j$, namely $u_j$ cannot be of the form $c.u'$ or $(c.u' + u'') \parcomp u'''$ for some $c \in \{a,b\}$ and $u'$ with $x \in \var(u')$.
We proceed by a case analysis:
\begin{enumerate}
\item $c = b$ and $u_j = (b.u' + u'') \parcomp u'''$ for some $u',u'',u''' \in \bccsp$ with $x \in \var(u')$.
As $\sigma(u)$ does not have $\nil$ summands or factors we have that $\sigma(u''') \not\sim_\pf \nil$.
Let $D = \max\{d \mid x \in \var_d(u')\}$.
From $\sigma(x) \sim_\pf \sum_{k = 1}^m b^{i_k} a$ (Equation~\eqref{eq:sigma(x)}) and $\ctr(\sigma(u)) = \ctr(a \parcomp p_N)$ we can infer that the completed traces of $\sigma(u''')$ are of the form $b^i a$, for some $i \in \{0,\dots,N-i_m-D-1\}$.
In fact, since $\sigma(x)$ can perform at least one completed trace of the form $b^{i_k}a$, for some $1 \le i_k \le N$, and the completed traces of $a \parcomp p_N$ contain exactly two occurrences of $a$, of which one as the final action of the trace, we can infer that the completed traces of $\sigma(u''')$ have to contain exactly one occurrence of $a$, and this occurrence has to be as the last symbol of the completed trace. 
Let $\alpha \in \tr(\sigma(u'))$ be such that $|\alpha| = D$ and $u' \trans[\alpha] w$ with $x \in \var(w)$.
By the choice of $D$, we can infer that $x$ does not occur in the scope of prefixing in $w$, and thus $\tr(\sigma(x)) \subseteq \tr(\sigma(w))$.
Then we get that $(b^iab\alpha,\tr(\sigma(w))) \in \pf(\sigma(u))$, where $b^ia \in \ctr(\sigma(u'''))$.
However, as $m \ge 2$, there is no $p'$ such that $a \parcomp p_N \trans[b^iab\alpha] p'$ and $\tr(\sigma(x)) \subseteq \tr(p')$, thus giving $(b^iab\alpha,\tr(\sigma(w))) \not\in \pf(a \parcomp p_N)$.
This gives a contradiction with $\sigma(u) \sim_\pf a \parcomp p_N$.
\item $c = b$ and $u_j = b.u'$ for some $\bccsp$ term $u'$ with $x \in \var(u')$.
The proof of this case is similar to, actually simpler than, that of the previous case and it is therefore omitted.
\item $c = a$ and $u_j = (a.u' + u'') \parcomp u'''$ for some $u',u'',u''' \in \bccsp$ with $x \in \var(u')$.
As $\sigma(u)$ does not have $\nil$ summands or factors we have that $\sigma(u''') \not\sim_\pf \nil$.
From $\sigma(x) \sim_\pf \sum_{k = 1}^m b^{i_k} a$ we infer that $\tr(a.\sigma(u'))$ includes traces having two occurrences of action $a$.
Since $\sigma(u) \sim_\pf a \parcomp p_N$, this implies that there is no $\alpha \in \tr(\sigma(u'''))$ such that $\alpha$ contains an occurrence of action $a$, for otherwise $\sigma(u)$ could perform a trace having 3 occurrences of that action.
In particular, this implies that the last symbol in each trace of $\sigma(u''')$ must be action $b$.
This gives that there is at least one completed trace of $\sigma(u_j)$, and thus of $\sigma(u)$, whose last symbol is action $b$.
Hence we get $\ctr(\sigma(u)) \neq \ctr(a \parcomp p_N)$, thus giving a contradiction with $\sigma(u) \sim_\pf a \parcomp p_N$.
\item $c = a$ and $u_j = a.u'$ for some $\bccsp$ term $u'$ with $x \in \var(u')$.
In this case we are going to prove a slightly weaker property, namely that not all summands $u_j$ with $x \in \var(u_j)$ can be of this form.
Despite being weaker, this property is enough because, as shown above, all other possibilities for an occurrence of $x$ in a summand $u_j$ have already been excluded.
To this end, consider the closed substitution $\sigma'$ defined by 
\[
\sigma'(y) = 
\begin{cases}
a p_N & \text{ if } y = x \\
\sigma(y) & \text{ otherwise.}
\end{cases}
\]
Then we have that $\sigma'(t_i) = \sigma'(t') \parcomp \sigma'(x) + \sigma'(t''') \trans[a] \sigma(t') \parcomp p_N \sim_\pf a \parcomp p_N$.
Since $\sigma'(t) \sim_\pf \sigma'(u)$ then there is a process $r$ such that $\sigma'(u) \trans[a] r$ and $\tr(r) = \tr(a \parcomp p_N)$.
In particular, this means that $\depth(r) = N+2$.
Hence, from the choices of $N,\sigma$ and $\sigma'$, we can infer that such an $a$-move by $\sigma'(u)$ can only stem from a summand $u_j$ such that $x \in \var(u_j)$.
Assume, towards a contradiction, that all such summands $u_j$ are of the form $a.u_j'$ for some $\bccsp$ term $u_j'$ with $x \in \var(u_j')$ and $r = \sigma'(u_j')$.
As $\depth(\sigma'(u_j')) = N+2 = \depth(\sigma'(x))$, by Lemma~\ref{lem:depth_var} we get that $u_j'$ can only be of the form $u_j' = x + w_j$ for some $\bccsp$ term $w_j$ with $\depth(\sigma'(w_j)) \le N+2$.
Notice that $\tr(\sigma'(x)) \subset \tr(a \parcomp p_N)$.
Hence $\sigma'(w_j) \neq \nil$.
More precisely, $\sigma'(x) = a p_N$ implies that $\{b\alpha \mid b\alpha \in \tr(a \parcomp p_N)\} \subseteq \tr(\sigma'(w_j)) \subseteq \tr(a \parcomp p_N)$.
Clearly, no trace starting with action $b$ can stem from $\sigma'(x)$ and we can then infer, in light of Lemma~\ref{lem:depth_var}, that $x \not \in \var(w_j)$, as $\depth(\sigma'(w_j)) \le N+2$.
This implies that $\sigma'(w_j) = \sigma(w_j)$ and thus $\{b\alpha \mid b\alpha \in \tr(a \parcomp p_N)\} \subseteq \tr(\sigma(w_j)) \subseteq \tr(a \parcomp p_N)$.
In particular, $\sigma(w_j)$ can perform at least one (completed) trace of the form $b\alpha$ where $\alpha$ contains two occurrences of action $a$.
From $\sigma(u_j) = a.(\sigma(x) + \sigma(w_j))$, we then get that $(ab\alpha,\emptyset) \in \pf(\sigma(u))$, namely $\sigma(u)$ can perform at least one (completed) trace containing 3 occurrences of action $a$.
This gives a contradiction with $\sigma(u) \sim_\pf a \parcomp p_N$.
\end{enumerate}

We have therefore obtained that $x$ does not occur in the scope of prefixing in (at least one) $u_j$.
We proceed now by a case analysis on the possible forms of this summand.
\begin{enumerate}
\item $u_j = x$.
Then $\sigma(u)$ has a summand which is possible futures equivalent to $\sum_{k = 1}^m b^{i_k} a$.
We show that this gives a contradiction with $\sigma(u) \sim_\pf a \parcomp p_N$.
This follows directly by noticing that, due to the summand $b^{i_1}a$, we have that $(b^{i_1}a,\emptyset) \in \pf(\sigma(u))$.
However, $(b^{i_1}a,\emptyset) \not\in\pf(a\parcomp p_N)$, since $a \parcomp p_N$ by performing the trace $b^{i_1}a$ can reach either a process that can perform an $a$ (in case the first $b$-move is performed by the summand $b^{i_1}a$ of $p_N$) or a $b$ (in case the first $b$-move is performed by a summand $b^i a$ of $p_N$ such that $i > i_1$).

\item $u_j = (x + w) \parcomp w'$, for some terms $w,w'$ with $w' \not \sim_\pf \nil$.
From $\sigma(u) \sim_\pf a \parcomp p_N$, we infer that $\ctr(\sigma(u_j)) \subseteq \ctr(a \parcomp p_N)$.
We recall that no completed trace of $a \parcomp p_N$ has $b$ as last symbol and, moreover, in all the completed traces of $a \parcomp p_N$ there are exactly two occurrences of $a$.
Hence, all (nonempty) completed traces of $\sigma(x),\sigma(w)$ and $\sigma(w')$ must have exactly one occurrence of $a$ and this occurrence must be as the last symbol in the completed trace.

We now proceed to show that $\sigma(w')$ has a summand $a$ and $a \not \in \init(\sigma(x) + \sigma(w))$.
We start by noticing that it cannot be the case that $a \in \init(\sigma(x) + \sigma(w)) \cap \init(\sigma(w'))$, for otherwise we would have $a^2 \in \tr(\sigma(u_j)) \subseteq \tr(\sigma(u))$, thus contradicting $\sigma(u) \sim_\pf a \parcomp p_N$.
Assume now, towards a contradiction, that $\init(\sigma(w')) = \{b\}$.
Then, due to summand $b^{i_m}a$ of $\sigma(x)$, we have that $\sigma(u_j) \trans[b^{i_m-1}] ba \parcomp \sigma(w')$ and $a\alpha \not\in\tr(ba \parcomp \sigma(w'))$ for any trace $\alpha \in \Act^{\ast}$.
Clearly, $(b^{i_m-1},\tr(ba \parcomp \sigma(w'))) \in \pf(\sigma(u_j))$, and thus it is also a possible future of $\sigma(u)$.
However, $(b^{i_m-1},\tr(ba \parcomp \sigma(w'))) \not \in \pf(a\parcomp p_N)$, as the interleaving of $p_N$ with $a$ guarantees that after an initial trace of an arbitrary number of $b$-transitions it is always possible to perform a trace starting with $a$.
This gives a contradiction with $\sigma(u) \sim_\pf a \parcomp p_N$.
We have therefore obtained that $a \in \init(\sigma(w'))$.
More precisely, from the constraints on the completed traces of $\sigma(w')$, we can infer that $\sigma(w')$ has a summand $a$.

Our order of business will now be to show that $\sigma(w') \sim_\pf a$.
Since $\sigma(w') \trans[a] \nil$, we have that $\sigma(u_j) \trans[a] (\sigma(x) + \sigma(w)) \parcomp \nil \sim_\pf \sigma(x) + \sigma(w)$.
Thus, $\sigma(u) \sim_\pf a \parcomp p_N$ implies that $a \parcomp p_N \trans[a] r$ for some $r$ with $\tr(r) = \tr(\sigma(x) + \sigma(w))$.
Since $a \parcomp p_N$ has only one possible initial $a$-transition, namely $a \parcomp p_N \trans[a] \nil \parcomp p_N$, we get that $r \sim_\pf p_N$ and thus $\tr(\sigma(x) + \sigma(w)) = \tr(p_N)$.
In particular, this implies that $\depth(\sigma(x) + \sigma(w)) = N+1$.
Therefore, we have
\begin{align*}
1 \le \depth(\sigma(w')) ={} & \depth(\sigma(u_j)) - \depth(\sigma(x)+\sigma(w)) \\
={} & \depth(\sigma(u_j)) - (N + 1) \\
\le{} & \depth(\sigma(u)) - (N + 1) \\
={} & \depth(a \parcomp p_N) - (N + 1) & \text{(by Lem.~\ref{lem:pf_basic_properties}.\ref{lem:pf_basic_properties_norm_depth})}\\
={} & N + 2 - (N + 1) \\
={} 1
\end{align*}
and we can therefore conclude that $\sigma(w') \sim_\pf a$.
Furthermore, it is not difficult to prove that $\ctr(\sigma(x) + \sigma(w)) = \ctr(p_N)$, for otherwise we get a contradiction with $\sigma(u) \sim_\pf a \parcomp p_N$.

So far we have obtained that, modulo possible futures equivalence,
\begin{align*}
&\sigma(u_j) \sim_\pf \left(\sum_{k = 1}^m b^{i_k} a + \sigma(w) \right) \parcomp a \text{ and} \\
&\ctr\left(\sum_{k = 1}^m b^{i_k} a + \sigma(w)\right) = \left\{b^i a \mid i \in \{1,\dots,N\}\right\}.
\end{align*}
To conclude the proof, we need to show that $\sum_{k=1}^m b^{i_k} a + \sigma(w) \sim_\pf p_N$.
Let $I_m = \{i_1,\dots,i_m\}$ and $I_N = \{1,\dots,N\}$.
Assume, towards a contradiction, that $\sum_{k = 1}^m b^{i_k} a + \sigma(w) \not\sim_\pf p_N$.
Since $\ctr(\sigma(x) + \sigma(w)) = \ctr(p_N)$, from Lemma~\ref{lem:ct_closed_pn} we can infer that $\sigma(w)$ does not contain any occurrence of $\parcomp$.
In particular, $\sigma(w)$ can be written in the general form $\sigma(w) = \sum_{l \in L} q_l$ for some terms $q_l$ that do not have $+$ as head operator nor contain any occurrence of $\parcomp$.
Moreover, as $\sum_{k = 1}^m b^{i_k} a + \sigma(w) \not\sim_\pf p_N$, by Lemma~\ref{lem:pf_closed_pn}, this means that either there is an $i \in I_N \setminus I_m$ such that $b^i a \not\sim_\pf q_l$ for any $l \in L$, or that there is a summand $q_l$ of $\sigma(w)$ such that $q_l \not\sim_\pf b^i a$ for any $i \in I_N$.
In both cases, we obtain that there is (at least) a summand $q_l$ of $\sigma(w)$ such that $b^k a, b^h a \in \ctr(q_l)$ for some $k\neq h, h,k \in I_N$.
We can then proceed as in the proof of Lemma~\ref{lem:pf_closed_pn} to prove that this gives the desired contradiction.
We have therefore obtained that $\sum_{k=1}^m b^{i_k} a + \sigma(w) \sim_\pf p_N$.
Hence, since possible futures equivalence is a congruence with respect to parallel composition, we get that
\[
\sigma(u_j) \sim_\pf a \parcomp p_N
\]
and we can therefore conclude that $\sigma(u)$ has the desired summand.
\end{enumerate}
\end{enumerate}
This concludes the proof.
\end{proof}
%-----------------------------------------------

We can now proceed to prove the witness property to be an invariant under provability from finite sound axiom systems.
Theorem~\ref{thm:pf_negative} can be then obtained as a consequence of the following result.
In fact, as the left-hand side of equation $e_N$, i.e., the term $a \parcomp p_N$, has a summand possible futures equivalent to $a \parcomp p_N$, whilst the right-hand side, i.e., the term $a p_N + \sum_{i = 1}^{N} b (a \parcomp b^{i-1}a)$, does not, we can conclude that the collection of infinitely many equations $e_N$ ($N \ge 1$) is the desired witness family.

\begin{thm}
\label{thm:pf_axiom_derivation}
Let $\E$ be a finite axiom system over $\bccsp$ that is sound modulo $\sim_\pf$.
Let $N$ be larger than the size of each term in the equations in $\E$.
Let $p$ and $q$ be closed terms such that $p,q \sim_\pf a \parcomp p_N$.
Assume, moreover, that $p$ and $q$ contain no occurrences of $\nil$ as a summand or factor.
If $\E \vdash p \approx q$ and $p$ has a summand possible futures equivalent to $a \parcomp p_N$, then so does $q$.
\end{thm}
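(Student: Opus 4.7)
The plan is to reduce the theorem to Proposition~\ref{prop:pf_substitution} by induction on the length of an equational derivation from $\E$. A preliminary setup is needed so that the induction can proceed without intermediate terms developing $\nil$ summands or $\nil$ factors. I would first invoke the $\nil$-absorption saturation technique (Appendix~\ref{app:saturation}) to replace $\E$ by an equivalent finite axiom system $\E^\star$ that is sound modulo $\sim_\pf$, proves the same closed equations as $\E$, and has the property that every derivation between closed terms without $\nil$ summands or factors can be reconstructed through intermediate closed terms that are also free of such summands and factors. By enlarging $N$ if needed, we retain that $N$ exceeds the size of every term in $\E^\star$.

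The claim to prove by induction is the symmetric strengthening: under the hypotheses of the theorem, $p$ has a summand possible futures equivalent to $a \parcomp p_N$ if and only if $q$ does. The bi-implicational form is essential for handling the symmetry rule $e_2$ of equational logic, and the transitivity rule $e_3$ is then immediate because soundness guarantees that any intermediate term $r$ with $\E^\star \vdash p \approx r$ and $\E^\star \vdash r \approx q$ is also possible futures equivalent to $a \parcomp p_N$, and by saturation can be chosen to have no $\nil$ summands or factors.

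The heart of the argument is the atomic rewrite step. A derivation $\E^\star \vdash p \approx q$ can be presented as a sequence $p = p_0, p_1, \ldots, p_n = q$ where each step replaces an occurrence of $\sigma(t)$ at some position inside $p_{i-1}$ by $\sigma(u)$, for some $t \approx u$ in $\E^\star$ (or its symmetric) and closed substitution $\sigma$. For each such step one performs a case analysis on the position of the rewrite. If the rewrite is applied at the top level, then $p_{i-1} = \sigma(t)$, $p_i = \sigma(u)$, both are $\sim_\pf a \parcomp p_N$ and neither has $\nil$ summands or factors, and so Proposition~\ref{prop:pf_substitution} applies directly, in both directions. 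If the rewrite is applied strictly below the head operator of $p_{i-1}$, the top-level shape of the term is untouched and two subcases arise: either the head operator is $+$, in which case the rewrite modifies a single top-level summand $r_j$ into some $r_j'$ with $r_j \sim_\pf r_j'$ by soundness, so the summand decomposition of $p_i$ matches that of $p_{i-1}$ up to $\sim_\pf$ and the summand property is preserved; or the head operator is prefixing or $\parcomp$, in which case $p_{i-1}$ itself is its unique top-level summand, so the summand property reduces to $p_{i-1} \sim_\pf a \parcomp p_N$, which is inherited by $p_i$ via soundness.

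The main obstacle is organising the preliminary $\nil$-absorption step precisely enough that the intermediate terms in the derivation are guaranteed to have no $\nil$ summands or factors, because Proposition~\ref{prop:pf_substitution} is only available under that hypothesis; without it, an axiom instance $\sigma(t) \approx \sigma(u)$ acting at the top level of a pair of terms possible futures equivalent to $a \parcomp p_N$ could a priori introduce spurious $\nil$-shaped subterms that escape the reach of Proposition~\ref{prop:pf_substitution}. Once the saturation step is in place, the induction is routine and every step of the rewrite sequence either falls under Proposition~\ref{prop:pf_substitution} or preserves the summand property trivially by congruence and soundness.
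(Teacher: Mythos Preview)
Your plan is essentially the paper's: saturation to ensure intermediate terms have no $\nil$ summands or factors, followed by induction on a derivation, with Proposition~\ref{prop:pf_substitution} doing the work at the axiom-instantiation step. The paper inducts on the depth of the proof tree (rules $e_1$--$e_8$ of Table~\ref{tab:equational_logic}) rather than on a rewrite sequence, and closes $\E$ under symmetry rather than proving the bi-implication; these are cosmetic differences.

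There is, however, a real imprecision in your $+$ case. You say that a rewrite below the head $+$ ``modifies a single top-level summand $r_j$ into some $r_j'$ with $r_j \sim_\pf r_j'$, so the summand decomposition of $p_i$ matches that of $p_{i-1}$ up to $\sim_\pf$''. This fails when the rewrite sits at the root of $r_j$ (or, more generally, at a node of the $+$-spine): then $r_j' = \sigma(u)$ may itself have $+$ as head, so the top-level summands of $p_i$ are not in bijection with those of $p_{i-1}$, and knowing $r_j' \sim_\pf a \parcomp p_N$ does not by itself give a \emph{summand} of $r_j'$ that is $\sim_\pf a \parcomp p_N$. What is missing is precisely the recursion that the paper builds into its induction: one must observe that the sub-sum containing the distinguished summand is itself $\sim_\pf a \parcomp p_N$ (since $\pf(r_j)\subseteq\pf(p_{i-1})=\pf(a\parcomp p_N)$ and the reverse inclusion comes from the distinguished summand), and then re-invoke Proposition~\ref{prop:pf_substitution} on the pair $\sigma(t)\approx\sigma(u)$ at that level. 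In the paper's formulation this is exactly the $e_6$ case: having located the special summand in $p'$, one argues $p'\sim_\pf a\parcomp p_N$ and applies the inductive hypothesis to the shorter proof of $p'\approx q'$. Your rewrite-sequence presentation can be made to work the same way, but the case split must be threefold (rewrite at the root; rewrite at a node of the $+$-spine or at the root of a summand; rewrite strictly inside a summand), with Proposition~\ref{prop:pf_substitution} invoked in the first two.
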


%--------------------------------------------------
\begin{proof}
Assume that $\E$ is a finite axiom system over the language $\bccsp$ that is sound modulo possible futures equivalence, and that the statements~\eqref{statement_one}--\eqref{statement_four} below hold, for some closed terms $p$ and $q$ and positive integer $N$ larger than the size of each term in the equations in $\E$:
\begin{enumerate}
\item \label{statement_one}
$E \vdash p \approx q$,
\item $p \sim_\pf q \sim_\pf a \parcomp p_N$, 
\item $p$ and $q$ contain no occurrences of $\nil$ as a summand or factor, and
\item \label{statement_four}
$p$ has a summand possible futures equivalent to $a \parcomp p_N$.
\end{enumerate}
We prove that $q$ also has a summand possible futures equivalent to $a \parcomp p_N$ by induction on the depth of the closed proof of the equation $p \approx q$ from $\E$. 
Recall that, without loss of generality, we may assume that the closed terms involved in the proof of the equation $p \approx q$ have no $\nil$ summands or factors (by Proposition~\ref{prop:saturation}, as $\E$ may be assumed to be saturated), and that applications of symmetry happen first in equational proofs (that is, $\E$ is closed with respect to symmetry). 
  
We proceed by a case analysis on the last rule used in the proof of $p \approx q$ from $\E$. 
The case of reflexivity is trivial, and that of transitivity follows immediately by using the inductive hypothesis twice. 
Below we only consider the other possibilities.
\begin{itemize}
\item {\sc Case $E \vdash p \approx q$, because $\sigma(t)=p$ and $\sigma(u)=q$ for some equation $(t\approx u)\in E$ and closed substitution $\sigma$}. 
Since $\sigma(t)=p$ and $\sigma(u)=q$ have no $\nil$ summands or factors, and $N$ is larger than the size of each term mentioned in equations in $\E$, the claim follows by Proposition~\ref{prop:pf_substitution}.
\item {\sc Case $E \vdash p \approx q$, because $p=c p'$ and $q=c q'$ for some $p',q'$ such that $E \vdash p' \approx q'$, and for some action $c$}.
This case is vacuous because $p=c p'\not\sim_\pf a \parcomp p_N$, and thus $p$ does not have a summand possible futures equivalent to $a \parcomp p_N$.
\item {\sc Case $E \vdash p \approx q$, because $p=p'+p''$ and $q=q'+q''$ for some $p',q',p'',q''$ such that $E \vdash p' \approx q'$ and $E \vdash p'' \approx q''$}.
Since $p$ has a summand possible futures equivalent to $a \parcomp p_N$, we have that so does either $p'$ or $p''$. 
Assume, without loss of generality, that $p'$ has a summand possible futures equivalent to $a \parcomp p_N$. 
Since $p$ is possible futures equivalent to $a \parcomp p_N$, so is $p'$. Using the soundness of $\E$ modulo possible futures equivalence, it follows that $q'\sim_\pf a \parcomp p_N$. 
The inductive hypothesis now yields that $q'$ has a summand possible futures equivalent to $a \parcomp p_N$. 
Hence, $q$ has a summand possible futures equivalent to $a \parcomp p_N$, which was to be shown.
\item {\sc Case $E \vdash p \approx q$, because $p= p' \parcomp p''$ and $q=q' \parcomp q''$ for some $p',q',p'',q''$ such that $E \vdash p' \approx q'$ and $E \vdash p'' \approx q''$}. 
Since the proof involves no uses of $\nil$ as a summand or a factor, we have that $p',p''\not\sim_\pf \nil$ and $q',q''\not\sim_\pf \nil$. 
It follows that $q$ is a summand of itself. 
By our assumptions, $q' \parcomp q'' \sim_\pf a \parcomp p_N$ which, by Proposition~\ref{prop:pf_parcomp} gives that either $q' \sim_\s a$ and $q'' \sim_\s p_N$, or $q' \sim_\s p_N$ and $q'' \sim_\s a$.
In both cases, we can conclude that $q$ has itself as summand of the required form.
\end{itemize}
This completes the proof of Theorem~\ref{thm:pf_axiom_derivation}. 
\end{proof}
%----------------------------------------------------

The proof of Theorem~\ref{thm:pf_negative} is now concluded.

%==================================================
% sec - 2-nested simulation
%====================================================

\subsection{Extending the negative result}
\label{sec:nested_simulation}

It is easy to check that the equations $e_N$ ($N \ge 1$) in the witness family of the negative result for $\sim_\pf$ are all sound modulo bisimilarity, i.e., the largest symmetric simulation.
Consequently, they are also sound modulo any congruence $\rel$ such that ${\sim_\B} \subseteq {\rel} \subseteq {\sim_\pf}$.
Hence, the negative result for all these equivalences can be derived from that for $\sim_\pf$, by exploiting this fact and that any finite axiom system that is sound modulo $\rel$ is also sound modulo $\sim_\pf$.

\begin{thm}
\label{thm:nested_negative}
Assume that ${|\Act|}\ge 2$.
Let $\rel$ be a congruence such that ${\sim_\B} \subseteq {\rel} \subseteq {\sim_\pf}$.
%$n$-nested simulation equivalence ($n \ge 2$)
Then $\rel$ has no finite, ground-complete, equational axiomatisation over the language $\bccsp$.
\end{thm}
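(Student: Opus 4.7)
The plan is to lift the negative result for $\sim_\pf$ (Theorem~\ref{thm:pf_negative}) to every congruence $\rel$ sandwiched between bisimilarity and possible futures equivalence, by reusing the very same witness family $\{e_N \mid N \ge 1\}$ that was used in Section~\ref{sec:possible_futures}. The two ingredients that make this work are (i) that each $e_N$ is already sound modulo bisimilarity, and (ii) that any axiom system sound modulo $\rel$ is automatically sound modulo $\sim_\pf$, because $\rel \subseteq \sim_\pf$.

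More concretely, I would argue by contradiction: suppose that some finite axiom system $\E$ over $\bccsp$ is sound and ground-complete modulo $\rel$. Since ${\rel} \subseteq {\sim_\pf}$, every equation in $\E$ is also sound modulo $\sim_\pf$, so $\E$ is a finite axiom system sound modulo possible futures equivalence. Next, for each $N \ge 1$, the equation $e_N$ given by
\[
a \parcomp p_N \approx a p_N + \sum_{i=1}^N b(a \parcomp b^{i-1}a)
\]
is easily verified to be sound modulo bisimilarity (both sides have the same transition structure up to the standard expansion of interleaving), and hence sound modulo $\rel$, since ${\sim_\B} \subseteq {\rel}$. Ground-completeness of $\E$ for $\rel$ then yields $\E \vdash a \parcomp p_N \approx a p_N + \sum_{i=1}^N b(a \parcomp b^{i-1}a)$ for every $N \ge 1$.

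Now choose $N$ strictly larger than the size of every term occurring in the finitely many equations of $\E$. Both sides of this closed equation are possible futures equivalent to $a \parcomp p_N$, and neither contains $\nil$ as a summand or factor, so the hypotheses of Theorem~\ref{thm:pf_axiom_derivation} are satisfied. The left-hand side $a \parcomp p_N$ is trivially a summand of itself, and is of course possible futures equivalent to $a \parcomp p_N$; hence, by Theorem~\ref{thm:pf_axiom_derivation}, the right-hand side $a p_N + \sum_{i=1}^N b(a \parcomp b^{i-1}a)$ must also have a summand that is possible futures equivalent to $a \parcomp p_N$. However, a direct inspection shows that none of its summands has this property: the summand $a p_N$ has depth $N+1 \ne N+2 = \depth(a \parcomp p_N)$, and each summand $b(a \parcomp b^{i-1}a)$ has $\{b\}$ as its set of initial actions whereas $\init(a \parcomp p_N) = \{a, b\}$. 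This contradiction rules out the existence of $\E$.

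The only non-routine ingredient is the soundness modulo $\sim_\B$ of the family $e_N$; the rest is a straightforward relativisation argument. I do not foresee any real obstacle here, since once Theorem~\ref{thm:pf_axiom_derivation} is in place and the witness family is known to be valid already at the bisimulation level, the sandwich condition on $\rel$ delivers the result uniformly for every congruence in the stated range, and in particular for all nested trace and nested simulation semantics listed in Figure~\ref{fig:spectrum}.
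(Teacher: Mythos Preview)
Your approach is exactly the paper's: use that $\E$ sound modulo $\rel$ implies $\E$ sound modulo $\sim_\pf$, invoke Theorem~\ref{thm:pf_axiom_derivation} for a sufficiently large $N$, and conclude from the bisimilarity-soundness of $e_N$ that $\E$ cannot be ground-complete for $\rel$.

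One small slip: your depth argument for the summand $a p_N$ is incorrect, since $\depth(a p_N)=1+\depth(p_N)=N+2=\depth(a\parcomp p_N)$. The desired conclusion $a p_N\not\sim_\pf a\parcomp p_N$ still holds, e.g.\ because $\init(a p_N)=\{a\}\neq\{a,b\}=\init(a\parcomp p_N)$ (or because $a p_N$ has a unique initial transition). With this correction the argument goes through unchanged.
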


%--------------------------------------------------------
\begin{proof}
Let $\E$ be a finite equational axiomatisation for $\bccsp$ that is sound modulo $\rel$. 
Since $\rel$ is included in $\sim_\pf$, we have that the axiom system $\E$ is sound modulo $\sim_\pf$.
Let $N$ be larger than the size of each term in the equations in $\E$.
Theorem~\ref{thm:pf_axiom_derivation} implies that
the equation
\[
a \parcomp p_N \approx a p_N + \sum_{i = 1}^N b(a \parcomp b^{i-1}a)
\]
cannot be derived from $\E$.
Since this equation is sound modulo $\rel$, namely
\[
a \parcomp p_N \,\rel\, a p_N + \sum_{i = 1}^N b(a \parcomp b^{i-1}a)
\]
it follows that $\E$ is not complete modulo $\rel$.
\end{proof}
%-----------------------------------------------

Theorem~\ref{thm:nested_negative} can be applied to establish for $n \ge 2$ that the $n$-nested trace and simulation semantics have no finite, ground-complete equational axiomatisation over $\bccsp$.
The $n$-nested trace equivalences were introduced in~\cite{HM85} as an alternative tool to define bisimilarity.
The hierarchy of $n$-nested simulations, namely simulation relations contained in a (nested) simulation equivalence, was introduced in~\cite{GV92}.

\begin{defi}
[$n$-nested semantics]
For $n \ge 0$, the relation $\sim^n_\tr$ over $\proc$, called the $n$-\emph{nested trace equivalence}, is defined inductively as follows:
\begin{itemize}
\item $p \sim^0_\tr q$ for all $p,q \in \proc$,
\item $p \sim^{n+1}_\tr q$ if and only if for all traces $\alpha \in \Act^{*}$:
\begin{itemize}
\item if $p \trans[\alpha] p'$ then there is a $q'$ such that $q \trans[\alpha] q'$ and $p' \sim^n_\tr q'$, and
\item if $q \trans[\alpha] q'$ then there is a $p'$ such that $p \trans[\alpha] p'$ and $p' \sim^n_\tr q'$.
\end{itemize}
\end{itemize}
For $n \ge 0$, the relation $\sqsubseteq^n_\s$ over $\proc$ is defined inductively as follows:
\begin{itemize}
\item $p \sqsubseteq^0_\s q$ for all $p,q \in \proc$,
\item $p \sqsubseteq^{n+1}_\s q$ if and only if $p \rel q$ for some simulation $\rel$, with $\rel^{-1}$ included in $\sqsubseteq^n_\s$.
\end{itemize}
$n$-\emph{nested simulation equivalence} is the kernel of $\sqsubseteq^n_\s$, i.e., the equivalence ${\sim^n_\s} = {\sqsubseteq^n_\s} \cap {(\sqsubseteq^n_\s)^{-1}}$. 
\end{defi}

Note that $\sim^1_\tr$ corresponds to trace equivalence, $\sim^2_\tr$ is possible futures equivalence, and $\sim^1_\s$ is simulation equivalence.
The following theorem is a corollary of Theorems~\ref{thm:pf_negative} and~\ref{thm:nested_negative}.

\begin{thm}
Assume that ${|\Act|}\ge 2$.
Let $n \ge 2$.
Then, $n$-nested trace equivalence and $n$-nested simulation equivalence admit no finite, ground-complete, equational axiomatisation over the language $\bccsp$.
\end{thm}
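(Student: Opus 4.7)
The plan is to present this theorem as an immediate consequence of Theorem~\ref{thm:nested_negative}: I only need to verify that, for each $n \ge 2$, both $\sim^n_\tr$ and $\sim^n_\s$ are congruences over $\bccsp$ satisfying ${\sim_\B} \subseteq {\sim^n_\tr} \subseteq {\sim_\pf}$ and ${\sim_\B} \subseteq {\sim^n_\s} \subseteq {\sim_\pf}$, and then invoke Theorem~\ref{thm:nested_negative}. The congruence property is for free: the SOS rules in Table~\ref{tab:semantics} are in de Simone format, so by \cite{vG93b} every ``reasonable'' behavioural equivalence definable from them, including the $n$-nested ones, is a congruence with respect to the $\bccsp$ operators.

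For the inclusions ${\sim_\B} \subseteq {\sim^n_\tr}$ and ${\sim_\B} \subseteq {\sim^n_\s}$ I would proceed by a straightforward induction on $n$: a bisimulation is in particular a simulation whose inverse is a bisimulation, so in the induction step the witnessing bisimulation itself serves as the required simulation with inverse contained in the previously established preorder. For the upper bound I would first show that the hierarchies are decreasing, namely $\sim^{n+1}_\tr \subseteq \sim^n_\tr$ and $\sqsubseteq^{n+1}_\s \subseteq \sqsubseteq^n_\s$ for all $n$; both go through by one-line inductions exploiting the inductive hypothesis directly inside the definitions. This immediately gives ${\sim^n_\tr} \subseteq {\sim^2_\tr} = {\sim_\pf}$ for $n \ge 2$, and reduces the simulation case to verifying ${\sim^2_\s} \subseteq {\sim_\pf}$.

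The main (and only mildly nontrivial) step is thus establishing ${\sim^2_\s} \subseteq {\sim_\pf}$. The key observation is that whenever a simulation $\rel$ witnesses $p \sqsubseteq^2_\s q$ (so $\rel^{-1} \subseteq {\sqsubseteq^1_\s}$), any pair $p' \rel q'$ satisfies both $p' \sqsubseteq^1_\s q'$ (because $\rel$ itself is a simulation) and $q' \sqsubseteq^1_\s p'$ (by the nesting condition), hence $\tr(p') = \tr(q')$. Consequently, for any $(\alpha,\tr(p')) \in \pf(p)$ with $p \trans[\alpha] p'$, matching this transition through $\rel$ produces $q \trans[\alpha] q'$ with $\tr(q') = \tr(p')$, so $(\alpha,\tr(p')) \in \pf(q)$. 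Combining with the symmetric argument from $q \sqsubseteq^2_\s p$ yields $\pf(p) = \pf(q)$, as required.

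With these facts in hand, for any $n \ge 2$ the relations $\sim^n_\tr$ and $\sim^n_\s$ fit the hypotheses of Theorem~\ref{thm:nested_negative}, which then immediately gives the non-existence of a finite, ground-complete axiomatisation over $\bccsp$. I do not anticipate any serious obstacle: the only step requiring a moment of care is the unfolding of the two nesting conditions in the proof that $2$-nested simulation equivalence refines possible futures equivalence, which is standard folklore but worth spelling out for completeness.
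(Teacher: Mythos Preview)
Your proposal is correct and takes the same route as the paper, which simply records the theorem as a corollary of Theorems~\ref{thm:pf_negative} and~\ref{thm:nested_negative} without further argument. You have additionally spelled out the routine verifications (congruence via the de Simone format, and the inclusions ${\sim_\B} \subseteq {\sim^n_\tr}, {\sim^n_\s} \subseteq {\sim_\pf}$) that the paper leaves implicit as well-known facts about the linear time--branching time spectrum.
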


%========================================================
% sec - CCS communicaiton
%========================================================

\section{Adding CCS synchronisation}
\label{sec:communication}

The negative results provided above conclude our analysis of the axiomatisability of the purely interleaving parallel composition operator modulo the congruences in the linear time-branching time spectrum.

The most natural extension of our work consists in allowing the parallel components to synchronise.
In particular, we are interested in the \emph{CCS-style communication}.
It presupposes a bijection $\overline{\cdot}$ on $\Act$ such that $\overline{\overline{a}} = a$ and $\overline{a} \neq a$ for all $a \in \Act$.
Following~\cite{M89}, the special action symbol $\tau \not \in \Act$, will result from the synchronised occurrence of the complementary actions $a$ and $\bar{a}$. 
Let $\Act_\tau = \Act \cup \{\tau\}$.
Then, we let the metavariables $\alpha, \beta, \dots$ range over $\Act_\tau$.

The rules in Table~\ref{tab:semantics_communication} define the operational semantics of the parallel composition operator when also synchronisation is taken into account.

\begin{table}[t]
\centering
\begin{gather*}
\inference{x \trans[\alpha] x'}{x \parcomp y \trans[\alpha] x' \parcomp y} \quad
\inference{y \trans[\alpha] y'}{x \parcomp y \trans[\alpha] x \parcomp y'} \quad
\inference{x \trans[a] x' \quad y \trans[\bar{a}] y'}{x \parcomp y \trans[\tau] x' \parcomp y'} \quad
\inference{x \trans[\bar{a}] x' \quad y \trans[a] y'}{x \parcomp y \trans[\tau] x' \parcomp y'}
\end{gather*}
\caption{\label{tab:semantics_communication} Operational semantics of parallel composition with CCS communication.}
\end{table}

Our order of business for this section will then be to show if and how the analysis we carried out in the previous sections is affected by the addition of synchronisation \`a la CCS.

%======================================================
%=======================================================

\subsection{The positive results}

It is not difficult to see that the arguments we used to prove the existence of finite, ground-complete axiomatisations still hold also with synchronisation.
The only changes we need to apply are reported in Table~\ref{tab:axioms_communication}.
For $\textrm{Y} \in \{1,2,3\}$, the axiom schema ELCY simply adds the terms related to communication to ELY.

\begin{table}[t]
\centering
\begin{tabular}{l}
\hline \\[-.3cm]
\, \scalebox{0.85}{(ELC1)}\; $\alpha x \parcomp \beta y \approx \alpha(x \parcomp \beta y) + \beta(\alpha x \parcomp y)$ \quad if $\alpha \neq \overline{\beta}$, or $\alpha = \tau$, or $\beta = \tau$\\
\\[-.3cm]
\, \scalebox{0.85}{(ELC1$\tau$)}\; $\alpha x \parcomp \beta y \approx \alpha(x \parcomp \beta y) + \beta(\alpha x \parcomp y) + \tau(x \parcomp y)$ \quad if $\alpha = \overline{\beta}$\\
\\[-.3cm]
\, \scalebox{0.85}{(ELC2)}\; $\sum_{i \in I} \alpha_ix_i \parcomp \sum_{j \in J} \beta_j y_j \approx \sum_{i \in I} \alpha_i (x_i \parcomp \sum_{j \in J} \beta_j y_j) + \sum_{j \in J} \beta_j (\sum_{i \in I} \alpha_i x_i \parcomp y_j) +$\\[.1cm]
\qquad\qquad\phantom{$\sum_{i \in I} \alpha_ix_i \parcomp \sum_{j \in J} \beta_j y_j \approx$} $+ \sum_{i \in I, j \in J \atop \alpha_i = \overline{\beta_j}} \tau(x_i \parcomp y_j)$\\
\qquad with $\alpha_i \neq \alpha_k$ whenever $i \neq k$ and $\beta_j \neq \beta_h$ whenever $j \neq h$, $\forall\, i,k \in I, \forall\, j,h \in J$\\
\\[-.3cm]
\, \scalebox{0.85}{(ELC3)}\; $\sum_{i \in I} \alpha_ix_i \parcomp \sum_{j \in J} \beta_j y_j \approx \sum_{i \in I} \alpha_i (x_i \parcomp \sum_{j \in J} \beta_j y_j) + \sum_{j \in J} \beta_j (\sum_{i \in I} \alpha_i x_i \parcomp y_j) +$ \\[.1cm]
\qquad\qquad\phantom{$\sum_{i \in I} \alpha_ix_i \parcomp \sum_{j \in J} \beta_j y_j \approx$} $+ \sum_{i \in I, j \in J \atop \alpha_i = \overline{\beta_j}} \tau(x_i \parcomp y_j)$
\\[.1cm]
\hline
\\[-.2cm] 
\end{tabular}
\caption{\label{tab:axioms_communication} The different instantiations of the expansion law when communication is considered.}
\end{table}

We remark that, by convention, $\sum_{i \in \emptyset} t_i = \nil$.
A finite, ground-complete axiomatisation over $\bccsp$ modulo $\sim_{\mathtt{X}}$, for $\mathtt{X} \in \{\tr,\ctr,\fail,\ready,\ftr,\rtr,\s,\cs,\rs\}$ is given by the axiom system $\E^c_{\mathtt{X}}$ obtained from $\E_{\mathtt{X}}$ as follows:
\begin{itemize}
\item we include $\{\textrm{ELC1}, \textrm{ELC1}\tau \}$ instead of EL1;
\item we include ELC2 instead of EL2;
\item all other axioms are unchanged (although the action variables occurring in them now range over $\Act_{\tau}$).
\end{itemize}
For instance, a finite, ground-complete axiomatisation over $\bccsp$ modulo ready simulation equivalence is given by the axiom system $\E^c_\rs = \E_1 \cup \{\textrm{RS},\textrm{RSP1}, \textrm{RSP2},\textrm{ELC2}\}$.

%==================================================
%==================================================

\subsection{The negative results}

In~\cite{Mol89} it was proved that bisimilarity does not admit a finite, ground-complete axiomatisation over $\bccsp$ with CCS-style synchronisation.
We now proceed to show that, by applying similar arguments to those used in Section~\ref{sec:possible_futures}, we can obtain the same negative result for possible futures equivalence.
More precisely, we prove the following:

\begin{thm}
\label{thm:pf_negative_communication}
Assume that ${|\Act_\tau|}\ge 2$.
Possible futures equivalence has no finite, ground-complete, equational axiomatisation over the language $\bccsp$ with CCS synchronisation.
\end{thm}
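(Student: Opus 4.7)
The plan is to reuse the same witness family $\{e_N\}_{N \ge 1}$ as in Section~\ref{sec:possible_futures}, provided that we choose the two actions in the equation so that CCS synchronisation never fires in any process that appears in the argument. Concretely, fix an action $a \in \Act$; if $\Act\setminus\{a,\bar a\}\neq \emptyset$, pick such an action as $b$, and otherwise (when $\Act = \{a,\bar a\}$) let $b = \tau$. Since the complementation bijection is defined only on $\Act$, in either case $b\neq \bar a$, so no closed subterm of the form $a \parcomp b^{i-1}a$ or $a \parcomp p_N$ can fire the synchronisation rules of Table~\ref{tab:semantics_communication}: such a subterm has the same transitions in the CCS-style LTS as in the purely interleaving one. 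Consequently, the witness equations
\[
e_N \;:\; a \parcomp p_N \approx a p_N + \sum_{i=1}^N b(a \parcomp b^{i-1} a), \qquad p_N = \sum_{i=1}^N b^i a,
\]
remain sound modulo possible futures equivalence in the extended language.

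The next step is to re-establish the structural lemmas that describe closed $\bccsp$ terms possible futures equivalent to $p_N$, to $b^i a$, and to $a \parcomp p_N$, i.e.\ the analogues of Lemma~\ref{lem:ct_closed_pn}, Lemma~\ref{lem:pf_closed_pn}, Lemma~\ref{lem:pf_summand_var} and Proposition~\ref{prop:pf_parcomp}. These go through essentially verbatim because the trace-shape fingerprint of $a \parcomp p_N$ still completely characterises any term equivalent to it: every trace is a word in $\{a,b\}^*$, every completed trace contains exactly two occurrences of $a$ with the last symbol equal to $a$, and no $\tau$ ever appears. The only extra observation needed is that, when decomposing a candidate summand $r_1 \parcomp r_2$, any $\tau$-transition produced by synchronisation between $r_1$ and $r_2$ would yield a trace containing $\tau$ (or, in the case $b\in\Act$, a trace with too many occurrences of $a$ or $\bar a$), which immediately contradicts equivalence to $a \parcomp p_N$. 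Thus the parallel decomposition step in the analogue of Proposition~\ref{prop:pf_parcomp} is concluded exactly as before: one of the factors must be possible futures equivalent to $a$ and the other to $p_N$.

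With these structural results in place, I would reproduce the counterpart of Proposition~\ref{prop:pf_substitution} by the same case analysis on the head of the summand $t_i$ witnessing the property in $\sigma(t)$: the variable case is handled by Lemma~\ref{lem:pf_basic_properties}.\ref{lem:pf_basic_properties_summand}; the prefix case $t_i = c.t'$ is vacuous for the very same reason (any such $\sigma(t_i)$ has a unique initial transition and cannot be equivalent to $a \parcomp p_N$, which has two); the parallel case uses the adapted Proposition~\ref{prop:pf_parcomp}, picks out a summand $\sigma(v_h)\sim_\pf \sum_{k=1}^m b^{i_k}a$ with $m>1$ using $N>\size(t),\size(u)$, forces $v_h$ to be a variable via the adapted Lemma~\ref{lem:pf_summand_var}, and then analyses the shape of the summands $u_j$ of $u$ in which that variable occurs. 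The closing induction on the depth of equational proofs, i.e.\ the analogue of Theorem~\ref{thm:pf_axiom_derivation}, is then literally the same: reflexivity, transitivity, congruence under prefix, $+$ and $\parcomp$ are handled exactly as in the purely interleaving setting, and the axiom case reduces to the just-proved substitution proposition. Since the left-hand side of $e_N$ has itself as the required summand while the right-hand side does not, no finite sound axiomatisation can derive all the $e_N$.

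The principal obstacle is the careful re-examination of the parallel-decomposition subcases inside the adapted Proposition~\ref{prop:pf_substitution} (in particular the subcases of item~\ref{case:t_parallel} where one must rule out that $x$ occurs under a prefix in some $u_j$), where one must systematically exclude the possibility that a synchronisation inside $\sigma(t_i)$ or $\sigma(u_j)$ introduces a trace absent from $a \parcomp p_N$. Once that bookkeeping is settled, Theorem~\ref{thm:pf_negative_communication} follows, and the lift to all congruences between $\sim_\B$ and $\sim_\pf$ (in particular to all $n$-nested trace and $n$-nested simulation equivalences for $n\ge 2$) is obtained exactly as in Theorem~\ref{thm:nested_negative}, since the equations $e_N$ are also sound modulo bisimilarity for our choice of $a$ and $b$.
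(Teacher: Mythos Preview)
Your proposal is correct and takes essentially the same approach as the paper: reuse the witness family from Section~\ref{sec:possible_futures} with the two actions chosen so that the synchronisation rule can never fire, and observe that the analysis of Lemmas~\ref{lem:ct_closed_pn}--\ref{lem:pf_summand_var}, Proposition~\ref{prop:pf_parcomp}, Proposition~\ref{prop:pf_substitution} and Theorem~\ref{thm:pf_axiom_derivation} then carries over verbatim. The paper simplifies by always taking $b=\tau$ (writing $q_N=\sum_{i=1}^N\tau^i a$ and $e^c_N:\ a\parcomp q_N\approx aq_N+\sum_{i=1}^N\tau(a\parcomp\tau^{i-1}a)$), which avoids your case distinction on whether $\Act\setminus\{a,\bar a\}$ is empty; since $\tau$ has no complement this single choice works uniformly.
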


To this end, consider the infinite family of equations $\{e^c_N \mid N \ge 1\}$ given by:
\begin{align*}
& q_N = \sum_{i = 1}^N \tau^i a 
& (N \ge 1)\phantom{.} \\
& e^c_N \; \colon \; a \parcomp q_N \approx a q_N + \sum_{i = 1}^{N} \tau (a \parcomp \tau^{i-1}a)
& (N \ge 1).
\end{align*}
Clearly, this family of equations coincides with that used to prove the negative result in Section~\ref{sec:possible_futures} where we have substituted all occurrences of action $b$ with the special action $\tau$.
We notice that the equations $e^c_N$ are sound not only modulo possible futures equivalence, but also modulo bisimilarity (each $e^c_N$ is in fact a distinct closed instance of ELC3 in Table~\ref{tab:axioms_communication}).
This means that if we can obtain the negative result for possible futures, then we can 
proceed exactly as in Section~\ref{sec:nested_simulation} to extend it to all the congruences $\sim$ such that $\sim_\B \subseteq \sim \subseteq \sim_\pf$.

Since $\tau$ cannot communicate with any action, hence, in particular, it does not communicate with $a$, the analysis we carried out in Section~\ref{sec:possible_futures} (with $b$ replaced by $\tau$) still applies.

\begin{rem}
\label{rmk:nil_summands_communication}
Most results in Section~\ref{sec:nested_simulation} rely on the assumption that terms do not contain $\nil$ summands or factors.
Notice that due to undefined communications, the expansion law schemas in Table~\ref{tab:axioms_communication} may actually introduce terms that are possible futures equivalent to $\nil$ in the equational proofs.

However, we remark that this is not an issue.
In fact, in general, we could extend the results in Appendix~\ref{app:saturation} to deal with the possible $\nil$ summands introduced by the expansion laws.
Briefly, whenever a $\nil$ summand is introduced, we can assume that axioms 
A0 and P0 (Table~\ref{tab:basic-axioms}) are also applied in order to get rid of the unnecessary $\nil$ summands and factors.
Hence, the proof of Theorem~\ref{thm:pf_negative_communication} follows from Theorem~\ref{thm:pf_axiom_derivation} and the fact that none of the summands in the right-hand side of the equations $e^c_N$ is possible futures equivalent to $a \parcomp p_N$.
\end{rem}

%=========================================
% sec - conclusion
%============================================

\section{Concluding remarks}
\label{sec:conclusion}

We have studied the finite axiomatisability of the language $\bccsp$ modulo the behavioural equivalences in the linear time-branching time spectrum.
On the one hand we have obtained finite, ground-complete axiomatisations modulo the (decorated) trace and simulation semantics in the spectrum.
On the other hand we have proved that for all equivalences that are finer than possible futures equivalence and coarser than bisimilarity a finite ground-complete axiomatisation does not exist.

Since our ground-completeness proof for ready simulation equivalence proceeds via elimination of $\|$ from closed terms (Proposition~\ref{prop:rs_elim}), and all behavioural equivalences in the linear time-branching time spectrum that include ready simulation have a finite ground-complete axiomatisation over BCCSP, it immediately follows from the elimination result that all these behavioural equivalences have a finite ground-complete axiomatisation over $\bccsp$.

Exploiting various forms of distributivity of parallel composition over choice, we were able to present more concise and elegant axiomatisations for the coarser behavioural equivalences. 

In this paper we have considered both, a parallel composition operator that implements interleaving without synchronisation between the parallel components, and a parallel composition operator with CCS-style synchronisation.
It is natural to consider extensions of our result to parallel composition operators with other forms of synchronisation. 
We expect the extensions with ACP-style or CSP-style synchronisation to be less straightforward than the extension with CCS-style synchronisation presented in this paper, especially in the case of the negative results, and we leave these as topics for future investigations.

As previously outlined, in~\cite{AFGI04} it was proved that the nested semantics admit no finite axiomatisation over BCCSP.
However, our negative results cannot be reduced to a mere lifting of those in~\cite{AFGI04}, as the presence of the additional operator $\parcomp$ might, at least in principle, allow us to finitely axiomatise the equations over BCCSP processes that are valid modulo the considered nested semantics. 
Indeed, auxiliary operators can be added to a language in order to obtain a finite axiomatisation of some congruence relation (see, e.g. the classic example given in~\cite{BK84b}). 
Understanding whether it is possible to lift non-finite axiomatisability results among different algebras, and under which constraints this can be done, is an interesting research avenue and we aim to investigate it in future work. 
A methodology for transferring non-finite-axiomatisability results across languages was presented  in~\cite{AFIM10}, where a reduction-based approach was proposed. 
However, that method has some limitations and thus further studies are needed.

A behavioural equivalence is \emph{finitely based} if it has a finite equational axiomatisation from which all valid equations between open terms are derivable. 
In~\cite{FL00} and~\cite{AFIL09} finite bases for bisimilarity with respect to PA and $\bccsp$ extended with the auxiliary operators left merge and communication merge were presented.
Furthermore, in~\cite{CFLN08} an overview was given of which behavioural equivalences in the linear time-branching time spectrum are finitely based with respect to BCCSP. 
The negative results in Section~\ref{sec:negative} imply that none of the behavioural equivalences between possible futures equivalence and bisimilarity is finitely based with respect to $\bccsp{}$. 
An interesting question is which of the behavioural equivalences including ready simulation semantics is finitely based with respect to $\bccsp$.

In~\cite{dFGPR13} an alternative classification of the equivalences in the spectrum with respect to~\cite{vG90} was proposed.
In order to obtain a general, unified, view of process semantics, the spectrum was divided into layers, each corresponding to a different notion of constrained simulation~\cite{dFG08}.
There are pleasing connections between the different layers and the partitioning they induce of the congruences in the spectrum, as given in~\cite{dFGPR13}, and the relationships between the axioms for the interleaving operator we have presented in this study.

%=======================================================
%==========================================================

\section*{Acknowledgement}
This work has been supported by the project `\emph{Open Problems in the Equational Logic of Processes}' (OPEL) of the Icelandic Research Fund (grant No.~196050-051).

We thank Rob van Glabbeek for a fruitful discussion on the axiomatisability of failures equivalence, and the reviewers for their constructive feedback and careful reading of our paper.

\bibliographystyle{alphaurl}
\bibliography{spectrum_LMCS}

\newpage
\appendix

%!TEX root=spectrum_LMCS.tex

%===========================================================
% Mace4 for CS
%=============================================================

\section{\texorpdfstring{The Mace4 code for $\E_\cs$}{The Mace4 code for ECS}}
\label{app:Mace4_cs}

The following is the Mace4 code we used to generate a model for $\E_\cs$ in which EL2 (as given in \texttt{formulas(goals)} below) does not hold.

\begin{lstlisting}[breaklines=true]
set(verbose).
assign(max_megs, 1000).
assign(domain_size,5).
op(750,prefix,"a").
op(750,prefix,"b").
op(850,infix,"plus").
op(950,infix,"par").
formulas(assumptions).
  (x plus 0) = x.
  (x plus y) = (y plus x).
  ((x plus y) plus z) = (x plus (y plus z)).
  (x plus x) = x.
  (x par 0) = x.
  (x par y) = (y par x).
  
% Axiom EL1
  (a x par a y) = (a (x par a y) plus a (y par a x)).
  (a x par b y) = (a (x par b y) plus b (y par a x)).
  (b x par b y) = (b (x par b y) plus b (y par b x)).
  
% Axiom CS
  (a (a x plus (y plus z))) = (a (a x plus (y plus z)) plus a (a x plus z)).
  (a (b x plus (y plus z))) = (a (b x plus (y plus z)) plus a (b x plus z)).
  (b (a x plus (y plus z))) = (b (a x plus (y plus z)) plus b (a x plus z)).
  (b (b x plus (y plus z))) = (b (b x plus (y plus z)) plus b (b x plus z)).

% Axiom CSP1
  ((a x plus (a y plus u)) par (a z plus (a w plus v))) =
    ( ((a x plus u) par (a z plus (a w plus v)))
      plus ( ((a y plus u) par (a z plus (a w plus v)))
      plus ( ((a x plus (a y plus u)) par (a z plus v))
      plus ((a x plus (a y plus u)) par (a w plus v))
      ))).
  ((a x plus (a y plus u)) par (a z plus (b w plus v))) =
    ( ((a x plus u) par (a z plus (b w plus v)))
      plus ( ((a y plus u) par (a z plus (b w plus v)))
      plus ( ((a x plus (a y plus u)) par (a z plus v))
      plus ((a x plus (a y plus u)) par (b w plus v))
      ))).
  ((a x plus (b y plus u)) par (a z plus (b w plus v))) =
    ( ((a x plus u) par (a z plus (b w plus v)))
      plus ( ((b y plus u) par (a z plus (b w plus v)))
      plus ( ((a x plus (b y plus u)) par (a z plus v))
      plus ((a x plus (b y plus u)) par (b w plus v))
      ))).
  ((a x plus (b y plus u)) par (b z plus (b w plus v))) =
    ( ((a x plus u) par (b z plus (b w plus v)))
      plus ( ((b y plus u) par (b z plus (b w plus v)))
      plus ( ((a x plus (b y plus u)) par (b z plus v))
      plus ((a x plus (b y plus u)) par (b w plus v))
      ))).
  ((b x plus (b y plus u)) par (b z plus (b w plus v))) =
    ( ((b x plus u) par (b z plus (b w plus v)))
      plus ( ((b y plus u) par (b z plus (b w plus v)))
      plus ( ((b x plus (b y plus u)) par (b z plus v))
      plus ((b x plus (b y plus u)) par (b w plus v))
      ))).

% Axiom CSP2
  (a x par (a y plus (a z plus w))) =
    (a(x par (a y plus (a z plus w) ))
      plus ( (a x par (a y plus w))
      plus ( (a x par (a z plus w))
      ))).
  (a x par (a y plus (b z plus w))) =
    (a(x par (a y plus (b z plus w) ))
      plus ( (a x par (a y plus w))
      plus ( (a x par (b z plus w))
      ))).
  (a x par (b y plus (b z plus w))) =
    (a(x par (b y plus (b z plus w) ))
      plus ( (a x par (b y plus w))
      plus ( (a x par (b z plus w))
      ))).

  (b x par (a y plus (a z plus w))) =
    (b(x par (a y plus (a z plus w) ))
      plus ( (b x par (a y plus w))
      plus ( (b x par (a z plus w))
      ))).
  (b x par (a y plus (b z plus w))) =
    (b(x par (a y plus (b z plus w) ))
      plus ( (b x par (a y plus w))
      plus ( (b x par (b z plus w))
      ))).
  (b x par (b y plus (b z plus w))) =
    (b(x par (b y plus (b z plus w) ))
      plus ( (b x par (b y plus w))
      plus ( (b x par (b z plus w))
      ))).
end_of_list.
\end{lstlisting}

\begin{lstlisting}[breaklines=true]
formulas(goals).
  ((a x plus b y) par (a z plus b w)) = (a(x par (a z plus b w)) plus (b(y par (a z plus b w)) plus (a(z par (a x plus b y)) plus b(w par (a x plus b y))))).
end_of_list.
\end{lstlisting}

%===============================================
% Mace4 for RT
%=================================================

\section{\texorpdfstring{The Mace4 code for $\E_\rtr$}{The Mace4 code for ERT}}
\label{app:Mace4_rt}

The following is the Mace4 code we used to generate a model for $\E_\rtr$ in which RSP2 (as given in \texttt{formulas(goals)} below) does not hold.

\begin{lstlisting}[breaklines=true]
set(verbose).
assign(max_megs, 1000).
op(750,prefix,"a").
op(750,prefix,"b").
op(850,infix,"plus").
op(950,infix,"par").
formulas(assumptions).
  (x plus 0) = x.
  (x plus y) = (y plus x).
  ((x plus y) plus z) = (x plus (y plus z)).
  (x plus x) = x.
  (x par 0) = x.
  (x par y) = (y par x).
  
% Axiom RT
  ( a(((a x plus a y) plus (a u plus a v)) plus z) ) =
    (    a((a x plus a u) plus z)
    plus
         a((a y plus a v) plus z)
    ).
  ( a(((a x plus a y) plus (b u plus b v)) plus z) ) =
    (    a((a x plus b u) plus z)
    plus
         a((a y plus b v) plus z)
    ).
  ( a(((b x plus b y) plus (b u plus b v)) plus z) ) =
    (    a((b x plus b u) plus z)
    plus
         a((b y plus b v) plus z)
    ).
  ( b(((a x plus a y) plus (a u plus a v)) plus z) ) =
    (    b((a x plus a u) plus z)
    plus
         b((a y plus a v) plus z)
    ).
  ( b(((a x plus a y) plus (b u plus b v)) plus z) ) =
    (    b((a x plus b u) plus z)
    plus
         b((a y plus b v) plus z)
    ).
  ( b(((b x plus b y) plus (b u plus b v)) plus z) ) =
    (    b((b x plus b u) plus z)
    plus
         b((b y plus b v) plus z)
    ).
    
% Axiom FP
  ( (a x plus (a y plus w)) par z) = ( ((a x plus w) par z) plus ((a y plus w) par z) ).
  
% Axiom EL2   
  (a x par a y) = (a (x par a y) plus a (y par a x)).
  (a x par b y) = (a (x par b y) plus b (y par a x)).
  (b x par b y) = (b (x par b y) plus b (y par b x)).
  ((a x plus b y) par (a z plus b w)) =
    (      a (x par (a z plus b w))
    plus ( b (y par (a z plus b w))
    plus ( a ((a x plus b y) par z)
    plus ( b ((a x plus b y) par w)
    )))).
end_of_list.
\end{lstlisting}

\begin{lstlisting}[breaklines=true]
formulas(goals).
  ( a x par (b u plus (b v plus w) ) ) =
    (    ( a x par (b u plus w ) )
    plus (( a x par (b v plus w ) )
         plus a(x par (b u plus (b v plus w) ))
    )).
end_of_list.
\end{lstlisting}

%==============================================
% Mace4 for CT
%===============================================

\section{\texorpdfstring{The Mace4 code for $\E_\ctr$}{The Mace4 code for ECT}}
\label{app:Mace4_ct}

The following is the Mace4 code we used to generate a model for $\E_\ctr$ in which EL2 (as given in \texttt{formulas(goals)} below) does not hold.

\begin{lstlisting}[breaklines=true]
set(verbose).
assign(max_megs, 1000).
assign(domain_size,5).
op(750,prefix,"a").
op(750,prefix,"b").
op(850,infix,"plus").
op(950,infix,"par").
formulas(assumptions).
  (x plus 0) = x.
  (x plus y) = (y plus x).
  ((x plus y) plus z) = (x plus (y plus z)).
  (x plus x) = x.
  (x par 0) = x.
  (x par y) = (y par x).
  
% Axiom EL1
  (a x par b y) = (a (x par b y) plus b (y par a x)).
  (a x par a y) = (a (x par a y) plus a (y par a x)).
  (b x par b y) = (b (x par b y) plus b (y par b x)).
  
% Axiom CT
  (a(a x plus z) plus a(a y plus w)) = (a(a x plus (a y plus (z plus w)))).
  (a(a x plus z) plus a(b y plus w)) = (a(a x plus (b y plus (z plus w)))).
  (a(b x plus z) plus a(a y plus w)) = (a(b x plus (a y plus (z plus w)))).
  (a(b x plus z) plus a(b y plus w)) = (a(b x plus (b y plus (z plus w)))).
  (b(a x plus z) plus a(a y plus w)) = (b(a x plus (a y plus (z plus w)))).
  (b(a x plus z) plus a(b y plus w)) = (b(a x plus (b y plus (z plus w)))).
  (b(b x plus z) plus a(a y plus w)) = (b(b x plus (a y plus (z plus w)))).
  (b(b x plus z) plus a(b y plus w)) = (b(b x plus (b y plus (z plus w)))).
  
% Axiom CTP
  ((a x plus (a y plus w)) par z) = (((a x plus w) par z) plus ((a y plus w) par z)).
  ((a x plus (b y plus w)) par z) = (((a x plus w) par z) plus ((b y plus w) par z)).
  ((b x plus (b y plus w)) par z) = (((b x plus w) par z) plus ((b y plus w) par z)).
end_of_list.
\end{lstlisting}

\begin{lstlisting}[breaklines=true]
formulas(goals).
  ((a x plus b y) par (a z plus b w)) = (a(x par (a z plus b w)) plus (b(y par (a z plus b w)) plus (a(z par (a x plus b y)) plus b(w par (a x plus b y))))).
end_of_list.
\end{lstlisting}

By combining the \texttt{formulas(assumptions)} given above with the \texttt{formulas(goals)} presented below, we can generate a model for $\E_\ctr$ in which CSP2 does not hold.

\begin{lstlisting}[breaklines=true]
formulas(goals).
  ( a x par (b u plus (b v plus w) ) ) =
    (    ( a x par (b u plus w ) )
    plus (( a x par (b v plus w ) )
         plus a(x par (b u plus (b v plus w) ))
    )).
end_of_list.
\end{lstlisting}

%=====================================================
% saturation
%============================================

\section{Simplifying the equational theory: saturated systems}
\label{app:saturation}

The axioms A0 and P0 in Table~\ref{tab:basic-axioms} (used from left to right) are enough to establish that each $\bccsp$ term that is possible futures equivalent to $\nil$ is also provably equal to $\nil$. 

\begin{lem}
\label{lem:prove-nil}
Let $t$ be a $\bccsp$ term. 
Then $t \sim_\pf \nil$ if and only if the equation $t \approx \nil$ is provable using axioms \emph{A0} and \emph{P0} in Table~\ref{tab:basic-axioms} from left to right.
\end{lem}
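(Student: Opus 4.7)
\medskip
\noindent\textbf{Proof plan for Lemma~\ref{lem:prove-nil}.}
The ($\Leftarrow$) direction is immediate: axioms A0 and P0 are both included in $\E_1$, which is sound modulo every equivalence in the spectrum (including $\sim_\pf$), so any equation provable from A0 and P0 is sound modulo $\sim_\pf$; in particular, $t\approx\nil$ being provable entails $t\sim_\pf\nil$.

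For the ($\Rightarrow$) direction, I would proceed by structural induction on the $\bccsp$ term $t$. The case analysis relies on the following observation, extracted from the SOS rules in Table~\ref{tab:semantics}: $\sigma(t)$ admits no outgoing transitions for every closed substitution $\sigma$ if and only if $t$ is built from $\nil$ using only $+$ and $\parcomp$ (no variables, no prefixes). The base cases are:
\begin{itemize}
\item $t=\nil$: the equation $\nil\approx\nil$ follows by reflexivity.
\item $t=x$: vacuous, since taking $\sigma(x)=a.\nil$ gives $\sigma(t)\trans[a]\nil$, so $t\not\sim_\pf\nil$.
\item $t=a.u$: vacuous, since $\sigma(a.u)\trans[a]\sigma(u)$ for every $\sigma$, so $t\not\sim_\pf\nil$.
\end{itemize}
For the inductive step, the key fact is that $\sigma(t_1)\parcomp\sigma(t_2)$ (respectively $\sigma(t_1)+\sigma(t_2)$) is $\sim_\pf\nil$ if and only if both summands (respectively both factors) are $\sim_\pf\nil$; for the parallel case in particular, a transition of either factor immediately lifts to a transition of the composition, so both factors must individually be $\sim_\pf\nil$ for every closed substitution.
\begin{itemize}
\item $t=t_1+t_2$ with $t\sim_\pf\nil$: then $t_1,t_2\sim_\pf\nil$ by the observation above; the induction hypothesis gives $t_i\approx\nil$ provable from A0 and P0, and applying the congruence rule followed by A0 with $x=\nil$ yields $t\approx\nil+\nil\approx\nil$.
\item $t=t_1\parcomp t_2$ with $t\sim_\pf\nil$: analogously, $t_1,t_2\sim_\pf\nil$, so by induction $t_i\approx\nil$, and a final application of P0 with $x=\nil$ gives $t\approx\nil\parcomp\nil\approx\nil$.
\end{itemize}

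There is no real obstacle here; the argument is a routine structural induction. The only subtlety worth flagging is that the two closing rewrites $\nil+\nil\approx\nil$ and $\nil\parcomp\nil\approx\nil$ are themselves instances of A0 and P0 (with $x=\nil$), so the induction stays within the two axioms allowed by the statement, and no other rules of equational logic beyond reflexivity, transitivity, and congruence closure under $+$, $\parcomp$ are needed.
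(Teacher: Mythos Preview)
Your proposal is correct and follows essentially the same route as the paper. The paper makes the intermediate characterisation explicit by defining a grammar $\text{NIL}$ of terms built from $\nil$ using only $+$ and $\parcomp$, proves $t\sim_\pf\nil$ iff $t\in\text{NIL}$, and then does structural induction on $\text{NIL}$; you fold this into a single structural induction on $t$ with the variable and prefix cases handled as vacuous, which amounts to the same argument presented more compactly.
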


%--------------------------------------------------------------------
\begin{proof}
The ``if'' implication is an immediate consequence of the soundness of the equations A0 and P0 with respect to $\sim_\pf$. 
To prove the  ``only if'' implication, define, first of all, the collection $\text{NIL}$ of $\bccsp$ terms as the set of terms generated by the following grammar:
\[
t::= \nil \;\mid\; t+t \;\mid\; t \parcomp t, 
\]
We claim that each $\bccsp$ term $t$ is $\sim_\pf$ equivalent to $\nil$ if and only if $t\in\text{NIL}$.
Using this claim and structural induction on $t\in\text{NIL}$, it is a simple matter to show that if $t\sim_\pf \nil$, then $t\approx \nil$ is provable using axioms A0 and P0 from left to right, which was to be shown.
  
To complete the proof, it therefore suffices to show the above claim. 
To establish the ``if'' implication in the statement of the claim, one proves, using structural induction on $t$ and the congruence properties of $\sim_\pf$, that if $t\in\text{NIL}$, then $\sigma(t) \sim_\pf \nil$ for every
closed substitution $\sigma$.  
To show the ``only if'' implication, we establish the contrapositive statement, namely that if $t\not\in\text{NIL}$, then $\sigma(t) \not\sim_\pf \nil$ for some closed substitution $\sigma$. 
To this end, it suffices only to show, using structural induction on $t$, that if $t\not\in\text{NIL}$, then $\sigma_a(t)\trans[a]$ for some action $a \in \Act$, where $\sigma_a$ is the closed substitution mapping each variable to the closed term $a\nil$. 
The details of this argument are not hard, and are therefore left to the reader.
\end{proof}
%--------------------------------------------------------------------

In light of the above result, we shall assume, without loss of generality, that each axiom system we consider includes the equations in Table~\ref{tab:basic-axioms}.  
This assumption means, in particular, that our axiom systems will allow us to identify each term that is possible futures equivalent to $\nil$ with $\nil$.

We recall that a $\bccsp$ term $t$ has a {\sl $\nil$ factor} if it contains a subterm of the form $t_1 \parcomp t_2$, where either $t_1$ or $t_2$ is possible futures equivalent to $\nil$.

It is easy to see that, modulo the equations in Table~\ref{tab:basic-axioms}, every $\bccsp$ term $t$ has the form $\sum_{i\in I}t_i$, for some finite index set $I$, and terms $t_i$ ($i\in I$) that are not $\nil$ and do not have themselves the form $t'+t''$, for some terms $t'$ and $t''$.  
The terms $t_i$ ($i\in I$) will be referred to as the {\em summands} of $t$.  
Moreover, again modulo the equations in Table~\ref{tab:basic-axioms}, each of the $t_i$ can be assumed to have no $\nil$ factors.

We can now introduce the notion of \emph{saturated system}, namely an axiom system such that if a closed equation that relates two terms containing no occurrences of $\nil$ as a summand or factor, then there is a closed proof for it in which all of the terms have no occurrences of $\nil$ as a summand or factor (cf.~\cite[Proposition~5.1.5]{Mol89}).

\begin{defi}
\label{def:no_nil}
For each $\bccsp$ term $t$, we define ${t}/\nil$ thus: 
\begin{align*}
& {\nil}/\nil = \nil \qquad\qquad {x}/\nil = x \qquad\qquad {a t}/\nil = a ({t}/\nil) \\
& ({t+u})/\nil =
\begin{cases}
{u}/\nil & \text{ if } t \sim_\pf \nil \\
{t}/\nil & \text{ if } u \sim_\pf \nil \\
({t}/\nil)+ ({u}/\nil) & \text{ otherwise}
\end{cases} 
\qquad (t \parcomp u)/\nil =  
\begin{cases}
{u}/\nil & \text{ if } t \sim_\pf \nil \\
{t}/\nil & \text{ if } u \sim_\pf \nil \\
({t}/\nil) \parcomp ({u}/\nil) & \text{ otherwise.}
\end{cases}  
\end{align*}
\end{defi}

Intuitively, $t /\nil$ is the term that results by removing {\sl all} occurrences of $\nil$ as a summand or factor from $t$. 

The following lemma collects the basic properties of the above construction.
 
\begin{lem}
\label{lem:no_nil}
For each $\bccsp$ term $t$, the following statements hold:
\begin{enumerate}
\item \label{nil1} 
The equation $t \approx t /\nil$ can be proven using the equations in Table~\ref{tab:basic-axioms}, and therefore $t \sim_\pf t/\nil$.
\item \label{nil2} 
The term $t /\nil$ has no $\nil$ summands or factors.
\item \label{nil3} 
$t /\nil = t$, if $t$ has no occurrence of $\nil$ as a summand or factor.
\item \label{nil4}
$\sigma(t /\nil)/\nil = \sigma(t)/\nil$, for each substitution $\sigma$.
\end{enumerate}
\end{lem}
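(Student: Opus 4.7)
\medskip

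\noindent\textbf{Plan.} All four statements are proved by a single structural induction on $t$. The base cases $t = \nil$ and $t = x$ are immediate from Definition~\ref{def:no_nil}: we have $\nil/\nil = \nil$ and $x/\nil = x$, and every claim trivially holds. The inductive cases are $t = a.t'$, $t = t' + u'$ and $t = t' \parcomp u'$. I will prove the parts in the order given, since later parts appeal to earlier ones.

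\medskip

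\noindent\textbf{Part (1).} In the case $t = a.t'$, apply the induction hypothesis to $t'$ to get an equational proof of $t' \approx t'/\nil$ from the axioms in Table~\ref{tab:basic-axioms}, and close under $a.\_$. For $t = t' + u'$, consider the three branches of the definition. If $t' \sim_\pf \nil$, then Lemma~\ref{lem:prove-nil} gives a proof of $t' \approx \nil$ using A0 and P0, hence $t' + u' \approx \nil + u' \stackrel{\mathrm{A1}}{\approx} u' + \nil \stackrel{\mathrm{A0}}{\approx} u'$, and the induction hypothesis on $u'$ finishes the job. The symmetric branch $u' \sim_\pf \nil$ is analogous. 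In the default branch, two applications of the induction hypothesis together with congruence give $t' + u' \approx t'/\nil + u'/\nil = (t'+u')/\nil$. The case $t = t' \parcomp u'$ is handled identically, using P0 and P1 in place of A0 and A1.

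\medskip

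\noindent\textbf{Parts (2) and (3).} For part (2), the non-default branches in the inductive cases return $u'/\nil$ or $t'/\nil$, which have no $\nil$ summands or factors by the induction hypothesis. In the default branch for, say, $t' + u'$, the components $t'/\nil$ and $u'/\nil$ also contain no $\nil$ summand or factor by the induction hypothesis; what remains is to check that the newly formed sum does not introduce one, i.e.\ that $t'/\nil \not\sim_\pf \nil$ and $u'/\nil \not\sim_\pf \nil$. This is where we invoke part (1) already proved: $t'/\nil \sim_\pf t'$ and $t' \not\sim_\pf \nil$ by the hypothesis of the default branch, so $t'/\nil \not\sim_\pf \nil$, and likewise for $u'/\nil$. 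The argument for $t' \parcomp u'$ is the same. Part (3) is a straightforward structural induction: if $t$ has no $\nil$ summand or factor, then in every subterm of the form $t' + u'$ or $t' \parcomp u'$ occurring in $t$ we have $t' \not\sim_\pf \nil$ and $u' \not\sim_\pf \nil$, so only the default branch of the definition applies, and the induction hypothesis on $t'$ and $u'$ yields $t/\nil = t$.

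\medskip

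\noindent\textbf{Part (4).} This is again by structural induction on $t$, and the prefix case is immediate. The key observation, used in the sum and parallel cases, is that if $t' \sim_\pf \nil$ as an open term (i.e.\ $\sigma'(t') \sim_\pf \nil$ for every closed $\sigma'$), then $\sigma(t') \sim_\pf \nil$ for every substitution $\sigma$, since for any closed $\sigma'$ we have $\sigma'(\sigma(t')) = (\sigma' \circ \sigma)(t') \sim_\pf \nil$. Consequently, the branches selected in the computations of $\sigma(t' + u')/\nil$ and $(t'+u')/\nil$ line up. For instance, if $t' \sim_\pf \nil$, then $\sigma(t') \sim_\pf \nil$ as well, so
\[
\sigma((t'+u')/\nil)/\nil = \sigma(u'/\nil)/\nil \stackrel{\text{IH}}{=} \sigma(u')/\nil = (\sigma(t') + \sigma(u'))/\nil = \sigma(t' + u')/\nil,
\]
and similarly for the other branches and for parallel composition.

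\medskip

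\noindent\textbf{Main obstacle.} The only subtlety is the circularity risk between parts (1) and (2): in part (2) one really does need $t'/\nil \not\sim_\pf \nil$ whenever $t' \not\sim_\pf \nil$, and this relies on part (1). Ordering the proofs as (1), (2), (3), (4) removes the difficulty. Everything else is routine bookkeeping over the definition of $t/\nil$ together with the symmetry axioms A1 and P1 and the absorption axioms A0 and P0.
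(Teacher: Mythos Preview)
Your proof is correct and follows exactly the approach the paper sketches; the paper's own proof is the one-liner ``Immediate by structural induction over $t$.'' One small clarification on part~(4): in the default branch of $(t'+u')/\nil$ the branch subsequently taken when computing $\sigma(t'+u')/\nil$ need \emph{not} be the default one (e.g.\ $t'=x$ with $\sigma(x)=\nil$), so ``the branches line up'' should be read as saying that the branch for $(\sigma(t'/\nil)+\sigma(u'/\nil))/\nil$ agrees with the branch for $(\sigma(t')+\sigma(u'))/\nil$---which it does, since part~(1) gives $\sigma(t'/\nil)\sim_\pf\sigma(t')$---after which the induction hypothesis finishes each sub-case.
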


%------------------------------------------------------------------
\begin{proof}
Immediate by structural induction over $t$.
\end{proof}
%-------------------------------------------------------------------

\begin{defi}
[Saturated system]
\label{def:closure}
We say that a substitution $\sigma$ is a {\sl $\nil$-substitution} if and only if $\sigma(x)\neq x$ implies that $\sigma(x)=\nil$, for each variable $x$.
Let $\E$ be an axiom system. 
We define the axiom system $\cl(\E)$ thus:
\[
\cl(\E) = \E \cup \{ \sigma(t)/\nil \approx \sigma(u)/\nil \mid (t\approx u)\in \E,~\sigma~\text{a $\nil$-substitution} \}.
\]
An axiom system $\E$ is {\sl saturated} if $\E =\cl(\E)$.
\end{defi}

The following lemma collects some basic sanity properties of the closure operator $\cl(\cdot)$. 

\begin{lem}
\label{lem:properties-closure}
Let $\E$ be an axiom system. 
Then the following statements hold. 
\begin{enumerate}
\item \label{cl-cl} 
$\cl(\E) = \cl(\cl(\E))$. 
\item \label{cl-finite} 
$\cl(\E)$ is finite, if so is $\E$.
\item $\cl(\E)$ is sound modulo possible futures equivalence, if so is $\E$.
\item $\cl(\E)$ is closed with respect to symmetry, if so is $\E$.
\item \label{cl-equi} 
$\cl(\E)$ and $\E$ prove the same equations, if $\E$ contains the equations in Table~\ref{tab:basic-axioms}.
\end{enumerate}
\end{lem}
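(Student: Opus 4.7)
The plan is to dispatch each of the five items in turn, relying throughout on the elementary properties of the $/\nil$ operation gathered in Lemma~\ref{lem:no_nil}. The pivotal observation is Lemma~\ref{lem:no_nil}.\ref{nil4}, the identity $\sigma(t/\nil)/\nil = \sigma(t)/\nil$, which is what makes nested applications of the closure collapse into a single one.

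Items~(2) and~(4) are essentially bookkeeping. For~(2), each equation $(t\approx u)\in \E$ involves only finitely many variables in $\var(t)\cup\var(u)$, and a $\nil$-substitution is determined on those variables by the binary choice, for each one, between ``map to self'' and ``map to $\nil$''; hence $\cl(\E)$ adds at most $2^{|\var(t)\cup\var(u)|}$ new instances per equation of $\E$, giving finiteness of $\cl(\E)$ whenever $\E$ is finite. Item~(4) is immediate from the symmetry of the construction in the two sides of each equation of~$\E$.

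For~(3), given an equation $\sigma(t)/\nil \approx \sigma(u)/\nil$ added by the closure and an arbitrary closed substitution $\tau$, I would apply Lemma~\ref{lem:no_nil}.\ref{nil1} (used with the closed substitution $\tau\circ\sigma$) to obtain $\tau(\sigma(t)/\nil)\sim_\pf \tau(\sigma(t))$ and $\tau(\sigma(u)/\nil)\sim_\pf \tau(\sigma(u))$; since $(t\approx u)\in \E$ is sound, one also has $\tau(\sigma(t))\sim_\pf \tau(\sigma(u))$, and chaining the three equivalences concludes the argument. For~(5), the only nontrivial direction asks that every equation added by $\cl(\E)$ is already derivable from $\E$: an application of the substitution rule of equational logic to $(t\approx u)\in \E$ yields $\sigma(t)\approx\sigma(u)$, and then two invocations of Lemma~\ref{lem:no_nil}.\ref{nil1}, whose proof uses only axioms A0 and P0 assumed to be in~$\E$, let us replace $\sigma(t)$ and $\sigma(u)$ by $\sigma(t)/\nil$ and $\sigma(u)/\nil$, respectively.

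The step I would treat most carefully, and hence the main obstacle, is the idempotency claim~(1). The inclusion $\cl(\E)\subseteq\cl(\cl(\E))$ is trivial; the converse requires reducing a twice-closed instance to a once-closed one. Any equation added by $\cl(\cl(\E))$ has the form $\sigma_2(\sigma_1(t)/\nil)/\nil \approx \sigma_2(\sigma_1(u)/\nil)/\nil$ for $\nil$-substitutions $\sigma_1,\sigma_2$ and $(t\approx u)\in \E$. Applying Lemma~\ref{lem:no_nil}.\ref{nil4} collapses the inner $/\nil$, rewriting the equation as $(\sigma_2\circ\sigma_1)(t)/\nil \approx (\sigma_2\circ\sigma_1)(u)/\nil$. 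A short case analysis on a variable $x$ then confirms that the composition of two $\nil$-substitutions is again a $\nil$-substitution: if $\sigma_1(x)=\nil$ then $\sigma_2(\nil)=\nil$ by the homomorphic extension of $\sigma_2$, while if $\sigma_1(x)=x$ then $(\sigma_2\circ\sigma_1)(x)=\sigma_2(x)\in\{x,\nil\}$. Hence the rewritten equation already belongs to $\cl(\E)$, finishing the proof.
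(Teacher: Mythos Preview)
Your proposal is correct and follows essentially the same approach as the paper. The paper only sketches items~(\ref{cl-cl}) and~(\ref{cl-equi}), using exactly the two ingredients you identified---Lemma~\ref{lem:no_nil}.\ref{nil4} together with closure of $\nil$-substitutions under composition for~(\ref{cl-cl}), and Lemma~\ref{lem:no_nil}.\ref{nil1} for~(\ref{cl-equi})---while you additionally spell out the easy items~(2)--(4) that the paper leaves implicit.
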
 

%-----------------------------------------------------------------
\begin{proof}
We limit ourselves to sketching the proofs of statements~\eqref{cl-cl} and~\eqref{cl-equi} in the lemma. 
  
In the proof of statement~\eqref{cl-cl}, the only non-trivial thing to check is that the equation
\[
\sigma(\sigma'(t)/\nil))/\nil \approx \sigma(\sigma'(u)/\nil))/\nil 
\]
is contained in $\cl(\E)$, whenever $(t \approx u)\in \E$ and $\sigma,\sigma'$ are $\nil$-substitutions. 
This follows from Lemma~\ref{lem:no_nil}.\eqref{nil4} because the collection of $\nil$-substitutions is closed under composition.
  
To show statement~\eqref{cl-equi}, it suffices only to argue that each equation $t\approx u$ that is provable from $\cl(\E)$ is also provable from $\E$, if $\E$ contains the equations in Table~\ref{tab:basic-axioms}. 
This can be done by induction on the depth of the proof of the equation $t\approx u$ from $\cl(\E)$, using Lemma~\ref{lem:no_nil}.\eqref{nil1} for the case in which $t\approx u$ is a substitution instance of an axiom in $\cl(\E)$.
\end{proof}
%-----------------------------------------------------------------

We are now ready to state our counterpart of~\cite[Proposition~5.1.5]{Mol89}.

\begin{prop}
\label{prop:saturation}
Assume that $\E$ is a saturated axiom system. 
Suppose furthermore that we have a closed proof from $\E$ of the closed equation $p\approx q$. 
Then replacing each term $r$ in that proof with $r/\nil$ yields a closed proof of the equation $p/\nil\approx q/\nil$. 
In particular, the proof from $\E$ of an equation $p\approx q$, where $p$ and $q$ are terms not containing occurrences of $\nil$ as a summand or factor, need not use terms containing occurrences of $\nil$ as a summand or factor.
\end{prop}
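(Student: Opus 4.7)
The plan is to proceed by induction on the depth of the closed proof of $p\approx q$ from $\E$, performing a case analysis on the last rule of equational logic applied. Along the way I will crucially use that $\E$ is already closed under symmetry (Lemma~\ref{lem:properties-closure}) so that occurrences of rule $(e_2)$ may be pushed to the leaves of the proof tree; this allows me to ignore symmetry as a separate inductive case. The goal, at each step, is to show that transforming every term $r$ appearing in that step into $r/\nil$ still yields a legal application of the same equational rule.

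Most cases are routine. For reflexivity, $r \approx r$ becomes $r/\nil \approx r/\nil$. For transitivity, two applications of the induction hypothesis suffice, since $/\nil$ is applied uniformly. For congruence under prefixing $(e_5)$, $a.r/\nil = a.(r/\nil)$ by definition, so the rule lifts directly. For congruence under $+$ (rule $(e_6)$) and $\parcomp$ (rule $(e_8)$), the definition of $/\nil$ distinguishes whether a summand or factor is $\sim_\pf \nil$; but if $r_1 \approx s_1$ is derivable from the sound axiom system $\E$, then $r_1 \sim_\pf \nil$ iff $s_1 \sim_\pf \nil$, so the defining cases for $(r_1 + r_2)/\nil$ and $(s_1 + s_2)/\nil$ (respectively $(r_1 \parcomp r_2)/\nil$ and $(s_1 \parcomp s_2)/\nil$) align, and the induction hypothesis together with reflexivity closes the case.

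The main obstacle is the substitution case $(e_4)$, where $p = \sigma(t)$ and $q = \sigma(u)$ for some equation $(t \approx u) \in \E$ and some closed substitution $\sigma$. Here saturation is needed in a subtle way: it only gives us instances under $\nil$-substitutions, but $\sigma$ is arbitrary. My plan is to factor $\sigma$ into two pieces. Let $\sigma_0$ be the $\nil$-substitution with $\sigma_0(x) = \nil$ exactly when $\sigma(x) \sim_\pf \nil$, leaving every other variable fixed; and let $\sigma'$ be the closed substitution with $\sigma'(x) = \sigma(x)/\nil$ for variables with $\sigma(x) \not\sim_\pf \nil$ (and $\sigma'(x) = \nil$ otherwise). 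Since $\E$ is saturated, the equation $\sigma_0(t)/\nil \approx \sigma_0(u)/\nil$ belongs to $\E$, and applying rule $(e_4)$ with substitution $\sigma'$ yields
\[
\E \vdash \sigma'(\sigma_0(t)/\nil) \approx \sigma'(\sigma_0(u)/\nil).
\]
The crux of this case is then the identity
\[
\sigma'(\sigma_0(t)/\nil) \;=\; \sigma(t)/\nil,
\]
and its analogue for $u$. I would prove this by structural induction on $t$, unwinding the cases in Definition~\ref{def:no_nil} and using Lemma~\ref{lem:no_nil}.\eqref{nil4} together with the observation that the $\sim_\pf \nil$ status of a subterm after applying $\sigma$ is determined by the $\sim_\pf \nil$ status of its variables' images, which is exactly what $\sigma_0$ records.

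Finally, the in-particular statement is an immediate corollary: if $p$ and $q$ already have no $\nil$ summands or factors, then by Lemma~\ref{lem:no_nil}.\eqref{nil3} we have $p/\nil = p$ and $q/\nil = q$, so the transformed proof is still a proof of $p \approx q$; and by Lemma~\ref{lem:no_nil}.\eqref{nil2} every term occurring in it is free of $\nil$ summands and factors.
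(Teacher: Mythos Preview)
Your proposal is correct and fills in precisely the argument the paper omits by deferring to Moller's thesis: induction on proof depth, with the substitution case handled by factoring the closed substitution as a $\nil$-substitution $\sigma_0$ followed by the $\nil$-free substitution $\sigma'$, then invoking saturation. The key identity $\sigma'(\sigma_0(t)/\nil)=\sigma(t)/\nil$ does go through by structural induction once one checks that $\sigma_0(s)\sim_\pf\nil$ iff $\sigma(s)\sim_\pf\nil$ for every subterm $s$, which is exactly the observation you flag.

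Two small remarks. First, in the $(e_6)$ and $(e_8)$ cases the appeal to reflexivity is superfluous: when, say, $r_1\sim_\pf\nil$ (hence $s_1\sim_\pf\nil$), the desired conclusion $(r_1+r_2)/\nil\approx (s_1+s_2)/\nil$ is literally $r_2/\nil\approx s_2/\nil$, which is already one of the inductively obtained subproofs. Second, read literally, ``replacing each term $r$ by $r/\nil$'' does not preserve the shape of the proof tree (an instance of $(e_6)$ may collapse to one of its premises), so the proposition should be understood as asserting the \emph{existence} of a proof of $p/\nil\approx q/\nil$ using only $/\nil$-reduced terms---which is exactly what your construction yields.
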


%-----------------------------------------------------------------
\begin{proof}
The proof follows the lines of that of~\cite[Proposition~5.1.5]{Mol89}, and is therefore omitted.
\end{proof}

\end{document}